\newcommand{\qbinom}[2]{\genfrac{[}{]}{0pt}{}{#1}{#2}_{q}}
\newtheorem{theorem}{Theorem}
\newtheorem*{theorem*}{Theorem}
\theoremstyle{plain}
\newtheorem{corollary}{Corollary}
\newtheorem*{lemma*}{Lemma}
\newtheorem*{proposition*}{Proposition}
\theoremstyle{remark}
\newtheorem*{remark*}{Remark}
\theoremstyle{definition}
\newtheorem*{acknowledgements}{Acknowledgements}
\newsavebox{\measure@tikzpicture}
\edef\tikzscale{\pgfmathresult}  \BODY
\begin{document}
\title[Matrix models for classical groups and Toeplitz$\pm $Hankel minors]{Matrix models for classical groups and Toeplitz$\pm $Hankel minors
with applications to Chern-Simons theory and fermionic models.}

\author{David Garc\'{\i}a-Garc\'{\i}a}
\address[DGG]{Grupo de F\'{i}sica Matem\'{a}tica, Departamento de Matem\'{a}tica, Faculdade de Ci\^{e}ncias, Universidade de Lisboa, Campo Grande, Edif\'{i}cio C6, 1749-016 Lisboa, Portugal.}
\email{dgarciagarcia@fc.ul.pt}
\author{Miguel Tierz}
\address[MT]{Departamento de Matem\'{a}tica, Faculdade de Ci\^{e}ncias, ISCTE - Instituto Universit\'{a}rio de Lisboa, Avenida das For\c{c}as Armadas, 1649-026 Lisboa, Portugal.}
\email{mtpaz@iscte-iul.pt}
\address[MT]{Grupo de F\'{i}sica Matem\'{a}tica, Departamento de Matem\'{a}tica, Faculdade de Ci\^{e}ncias, Universidade de Lisboa, Campo Grande, Edif\'{i}cio C6, 1749-016 Lisboa, Portugal.}
\email{tierz@fc.ul.pt}
\maketitle


\begin{abstract}
We study matrix integration over the classical Lie groups $U(N),Sp(2N),SO(2N)$ and $SO(2N+1)$, using symmetric function theory and the equivalent formulation in terms of determinants and minors of Toeplitz$\pm$Hankel matrices. We establish a number of factorizations and expansions for such integrals, also with insertions of irreducible characters. As a specific example, we compute both at finite and large $N$ the partition functions, Wilson loops and Hopf links of Chern-Simons theory on $S^{3}$ with the aforementioned symmetry groups. The identities found for the general models translate in this context to relations between observables of the theory. Finally, we use character expansions to evaluate
averages in random matrix ensembles of Chern-Simons type, describing the
spectra of solvable fermionic models with matrix degrees of freedom.
\end{abstract}


\section{Introduction}

There is a well known relation between matrix integrals over the classical Lie groups and the determinants of structured matrices, such as Toeplitz and Hankel matrices. These are matrices whose $(j,k)$-th coefficient depends only on $j-k$ or $j+k$, respectively, and are therefore constant along their diagonals or anti-diagonals. This connection is of importance to several areas of mathematics, such as random matrix theory and the theory of orthogonal polynomials \cite{BDJ,DIKrev}. At the same time, these two objects can be expressed in terms of symmetric functions, revealing further connections with enumerative combinatorics and representation theory \cite{BaikRains}.

In this work we study minors of the Toeplitz and Toeplitz$\pm$Hankel matrices involved in this relation. In addition to their own mathematical interest, one motivation for this arises from the fact that the minors of these matrices can be expressed as the ``twisted" integrals \cite{BumpDiaconis,GGT}
\begin{align}
    &\int_{G(N)}\chi_{G(N)}^{\lambda}(U^{-1})\chi_{G(N)}^{\mu}(U)f(U)dU, \label{first}
\intertext{where $dU$ denotes Haar measure on one of the classical Lie groups}
    &G(N) = U(N),Sp(2N),SO(2N),SO(2N+1), \nonumber
\end{align}
and the $\chi^{\lambda}_{G(N)}(U)$ are the characters associated to the irreducible representations of these groups.

Another motivation comes from the fact that \eqref{first} appears in the study of many contemporary physical theories and models. This is the case, for example, in gauge theories with a matrix model description, when one is interested in physical observables beyond the partition function and looks into non-local observables such as Wilson loops. The fact that tools germane to any of the above mentioned areas can be interchangeably applied to the analysis of such matrix models has not been fully exploited in the literature (see however \cite{Mironov:2017och}-\cite{vandeLeur:2016yuq}, for instance).

In particular, when $f$ is set to be Jacobi's third theta function in \eqref{first} we obtain the matrix model of Chern-Simons theory on $S^{3}$ with symmetry group $G(N)$. After a matrix model description was obtained for Chern-Simons theory on manifolds such as $S^{3}$ or lens spaces \cite{Marino:2002fk}, the solvability of the theory has been well known, and a number of equivalent representations have been obtained \cite{Tierz:2002jj}-\cite{Romo:2011qp}. However, while both the partition function and the observables of the unitary theory are known and have been studied in detail, much less attention has been devoted to the symplectic or orthogonal theories, where only the partition function in the large $N$ regime has been obtained \cite{Sinha:2000ap}.

It is worth mentioning that the determinants of Toeplitz$\pm$Hankel matrices have many applications in statistical mechanics problems and describe several physical properties of a number of strongly correlated systems, starting with their appearance in the Ising model \cite{DIKrev}. In such applications, the Toeplitz$\pm$Hankel case corresponds to open boundary conditions, whereas the Toeplitz determinants correspond to periodic boundary conditions \cite{Viti:2016hty}-\cite{Perez-Garcia:2013lba}. The study of minors is less developed but, in the spin chain context, they naturally appear in the same fashion as the determinants, allowing the treatment of quantum amplitudes involving multiple domain wall configurations \cite{Perez-Garcia:2013lba}-\cite{SantilliTierzSpin}, whereas a single domain is given directly by the determinant \cite{Viti:2016hty}.

We pursue two main goals with the present work:

\begin{enumerate}
    \item First, we use the formulation of matrix integrals as determinants of Toeplitz$\pm$Hankel matrices and exploit their relations with symmetric functions to establish a number of identities between these objects. In particular, we show that there is a factorization property for matrix integration over $U(2N-1)$ and $U(2N)$ in terms of matrix integration over symplectic and orthogonal groups. We also see how group integrals of polynomial functions can be expressed as the specialization of a single symmetric function. We then show that any $G(N)$ matrix integration can be written as a finite sum of twisted $U(N)$ integrals, or, equivalently, that determinants of Toeplitz$\pm$Hankel matrices can be written as finite sums of minors of a Toeplitz matrix. Finally, we express matrix integrals over $G(N)$ as Schur function series, obtaining in particular that the normalized averages of two characters over a $G(N)$ ensemble have the same behavior for large $N$. Other relations between unitary, symplectic and orthogonal matrix models have been investigated in \cite{ForrRains}, and recent related generalizations of classical results for Toeplitz matrices to the Toeplitz$\pm$Hankel setting include \cite{DIK}-\cite{Betea}, for instance.
    
    \item We then study in detail the case where $f$ is a theta function. The reason is because the corresponding determinants and minors can be computed exactly for finite matrix size $N$ and, in addition, the results have a topological interpretation, since the expressions obtained can be written in terms of the modular $S$ and $T$ matrices. Quantum invariants of manifolds and links can also be approached with skein theory and quantum groups \cite{Turaev:1994xb,Morton} and in fact the same determinant representation as in the unitary model arises when studying the skein module of the annulus \cite{MortonLukac}.
    
    We remark that the symmetric function approach allows a unified treatment for all of the groups $G(N)$, as well as generalizations of some properties usually attributed only to unitary ensembles, such as preservation of Schur polynomials \cite{Mironov:2017och,Morozov:2018eiq} or Giambelli compatibility \cite{GiamBor}. Note also that the previously obtained results have now an interpretation in terms of Chern-Simons observables. For example, we show that $G(N)$ Chern-Simons partition functions can be expressed as sums of unnormalized Hopf links ($S$ matrices) of the $U(N)$ theory.

\end{enumerate}

These methods and results can also be quickly adapted to study some fermionic exactly solvable models, that have recently been obtained in the study of fermionic quantum models with matrix degrees of freedom \cite{Anninos:2016klf}-\cite{Klebanov:2018nfp}. Some of these models appear as simpler cases of tensor quantum mechanical models, of much interest nowadays \cite{Klebanov:2018nfp}. Following the example of the matrix model representation obtained in \cite{Anninos:2016klf} and solved in \cite{Tierz:2017nvl} for a system of $U(N)\times U(L)$ fermions with a finite Hilbert space, we consider analogous matrix model expressions coming from matrix integration over other Lie groups. In particular, we study partition functions of such models, defined as averages of characteristic polynomial type in $G(N)$ Chern-Simons matrix models, and obtain the distinctive oscillator like and highly degenerated spectrum of the models \cite{Tierz:2017nvl,Klebanov:2018nfp}.

The paper is organized as follows: In Section 2, after introducing the required definitions and the equivalence between integration over the classical groups $G(N)$ and determinants of Toeplitz$\pm$Hankel matrices, we establish several general relations that hold among the integrals \eqref{first} and their symmetric function counterparts.

Throughout the rest of the paper we turn to the Chern-Simons model. In Section 3, we evaluate the corresponding determinants and obtain explicit expressions for the $G(N)$ Chern-Simons partition functions, for both finite and large $N$. In Section 4, we continue and evaluate the Wilson loops and Hopf links of the theory, which correspond to the minors of the underlying matrices.

In the last Section, we study partition functions of fermionic matrix models as averages of characteristic polynomials in the $G(N)$ Chern-Simons matrix models, which we show can be computed with character expansions. Through the explicit evaluation of partition functions, for both massive and massless cases, we characterize the corresponding spectra and relate it to the spectra of fermionic models with matrix degrees of freedom. We also obtain large $N$ expressions for these models, using character expansion and Fisher-Hartwig asymptotics \cite{DIKrev}.


\section{Group integrals, Toeplitz$\pm$Hankel matrices and characters of the
classical groups}

Let $U$ be a random matrix distributed according to normalized Haar measure on one of the classical groups $G(N)=U(N),Sp(2N),SO(2N),SO(2N+1)$. Matrices in $SO(2N+1)$ have $1$ as a trivial eigenvalue, and the remaining eigenvalues of matrices in $Sp(2N),SO(2N)$ and $SO(2N+1)$ are complex numbers of modulus one that come in complex conjugate pairs. We say that the eigenvalues in the lower half plane are also trivial.

Given an integrable function on the unit circle $f$, we define\footnote{We choose to employ this abuse of notation in favor of a simpler writing.}
\begin{equation}
    f(U)=\prod_{k=1}^{N}f(e^{i\theta_{k}})f(e^{-i\theta_{k}}),  \label{fprod}
\end{equation}
for any matrix $U$ belonging to one of the groups $G(N)=U(N),Sp(2N),SO(2N),SO(2N+1)$, where $e^{i\theta_{1}},\dots,e^{i\theta_{N}} $ are the nontrivial eigenvalues of $U$ (which, in the unitary case, coincide with the full set of eigenvalues of $U$). If we denote by $\int_{G(N)}f(U)dU$ the integral of this function over one of the groups $G(N)$ with respect to Haar measure, Weyl's integral formula \cite{Weyl} reads
\begin{align*}
    &\int_{U(N)}f(U)dU = \frac{1}{N!}\int_{[0,2\pi]^{N}}\prod_{j<k}\left|e^{i\theta_{j}}-e^{i\theta_{k}}\right|^{2}\prod_{k=1}^{N}f(e^{i\theta_{k}})f(e^{-i\theta_{k}})\frac{d\theta_{k}}{2\pi}, \\
    &\int_{Sp(2N)}f(U)dU = 2^{N^{2}+N}\frac{1}{N!}\int_{[0,\pi]^{N}}\prod_{j<k}\left(\cos{\theta_{j}}-\cos{\theta_{k}}\right)^{2}\prod_{k=1}^{N}\sin^{2}{\theta_{k}}\prod_{k=1}^{N}f(e^{i\theta_{k}})f(e^{-i\theta_{k}})\frac{d\theta_{k}}{2\pi}, \\
    &\int_{SO(2N)}f(U)dU = 2^{N^{2}-N+1}\frac{1}{N!}\int_{[0,\pi]^{N}}\prod_{j<k}\left(\cos{\theta_{j}}-\cos{\theta_{k}}\right)^{2}\prod_{k=1}^{N}f(e^{i\theta_{k}})f(e^{-i\theta_{k}})\frac{d\theta_{k}}{2\pi}, \\
    &\int_{SO(2N+1)}f(U)dU = 2^{N^{2}+N}\frac{1}{N!}\int_{[0,\pi]^{N}}\prod_{j<k}\left(\cos{\theta_{k}}-\cos{\theta_{j}}\right)^{2}\prod_{k=1}^{N}\sin^{2}{\frac{\theta_{k}}{2}}\prod_{k=1}^{N}f(e^{i\theta_{k}})f(e^{-i\theta_{k}})\frac{d\theta_{k}}{2\pi}.
\end{align*}
Using the invariance of the integrands in the right-hand sides of the last three equations above upon the transformation $\theta\mapsto-\theta$ and substituting the sines and cosines above by their expressions in terms of the $e^{i\theta_{j}}$ we see that the following compact expression is available for the group integrals of the function \eqref{fprod}
\begin{equation}
    \int_{G(N)}f(U)dU = C_{G(N)}\frac{1}{N!}\int_{[0,2\pi]^{N}}\det(M_{G(N)}(e^{-i\theta}))\det(M_{G(N)}(e^{i\theta}))\prod_{k=1}^{N}f(e^{i\theta_{k}})f(e^{-i\theta_{k}})\frac{d\theta_{k}}{2\pi},  \label{intGN}
\end{equation}
where the constants $C_{G(N)}$ are 
\begin{equation*}
    C_{U(N)}=1,\qquad C_{Sp(2N)}=\frac{1}{2^{N}}=C_{SO(2N+1)},\qquad C_{SO(2N)}=\frac{1}{2^{N+1}}
\end{equation*}
and $M_{G(N)}(e^{i\theta})$ is the matrix appearing in Weyl's denominator formula\footnote{While the expression \eqref{intGN} follows from the trigonometric expressions for Weyl's integral formula and the determinants \eqref{detun}-\eqref{detone}, as outlined above, a compact formula such as \eqref{intGN} is present already in the original derivation of Weyl \cite{Weyl}.} for the root system associated to each of the groups $G(N)$. See \eqref{detun}-\eqref{detone} for explicit expressions of these matrices and their determinants. This identity makes it possible to obtain equivalent determinantal expressions by means of the following classical identity due to Andrei\'ef \cite{Andreief}.

\begin{lemma*}
Let $g_{1},\dots,g_{N}$ and $h_{1},\dots,h_{N}$ be functions on a measure space $(X,\sigma)$. Then,
\begin{equation*}
    \frac{1}{N!}\int_{X^{N}}\det{(g_{j}(x_{k}))}_{j,k=1}^{N}\det{(h_{j}(x_{k}))}_{j,k=1}^{N}\prod_{k=1}^{N}d\sigma(x_{k}) = \det{\left(\int_{X}g_{j}(x)h_{k}(x)d\sigma(x)\right)_{j,k=1}^{N}},
\end{equation*}
as long as both the left- and right-hand sides above are well-defined.
\end{lemma*}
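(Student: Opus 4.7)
The plan is to prove the identity by expanding both determinants via the Leibniz formula and exploiting the symmetry of $X^N$ under permutations of the integration variables. This is the classical argument, and the only thing to watch is the bookkeeping of signs and indices.

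First, I would write
\begin{equation*}
\det(g_j(x_k))_{j,k=1}^N = \sum_{\sigma\in S_N}\operatorname{sgn}(\sigma)\prod_{k=1}^N g_{\sigma(k)}(x_k),\qquad \det(h_j(x_k))_{j,k=1}^N = \sum_{\tau\in S_N}\operatorname{sgn}(\tau)\prod_{k=1}^N h_{\tau(k)}(x_k),
\end{equation*}
multiply the two expansions, and interchange the sums with the integral over $X^N$ (the well-definedness hypothesis legitimizes this). After applying Fubini to separate the product measure, the left-hand side becomes
\begin{equation*}
\frac{1}{N!}\sum_{\sigma,\tau\in S_N}\operatorname{sgn}(\sigma)\operatorname{sgn}(\tau)\prod_{k=1}^N \int_X g_{\sigma(k)}(x)\,h_{\tau(k)}(x)\,d\sigma(x).
\end{equation*}

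Next, I would collapse the double sum. The trick is to reparametrize: set $\rho=\sigma^{-1}\tau$, so that $\operatorname{sgn}(\sigma)\operatorname{sgn}(\tau)=\operatorname{sgn}(\rho)$, and then reindex the product $\prod_k$ by $k\mapsto \sigma^{-1}(k)$, which is simply a relabeling of the factors and does not change the product. This turns the generic term into
\begin{equation*}
\prod_{k=1}^N \int_X g_k(x)\,h_{\rho(k)}(x)\,d\sigma(x),
\end{equation*}
which no longer depends on $\sigma$. The sum over $\sigma$ therefore contributes exactly a factor of $N!$, cancelling the $1/N!$ prefactor, and what remains is
\begin{equation*}
\sum_{\rho\in S_N}\operatorname{sgn}(\rho)\prod_{k=1}^N \int_X g_k(x)\,h_{\rho(k)}(x)\,d\sigma(x),
\end{equation*}
which is exactly the Leibniz expansion of the determinant of the matrix with entries $\int_X g_j(x)h_k(x)\,d\sigma(x)$.

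There is no real obstacle here; the only step that requires a bit of care is the combined substitution $\tau=\sigma\rho$ together with the reindexing of the product, and making sure that the cancellation of signs and the appearance of the factor $N!$ happen as claimed. Alternatively, one could proceed by induction on $N$ using multilinearity of the determinant in its rows (or columns), integrating one variable at a time; this avoids the permutation bookkeeping but is less transparent. I would favor the direct Leibniz approach sketched above.
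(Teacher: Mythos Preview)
Your argument is correct and is the standard proof of Andr\'eief's identity. The paper, however, does not actually supply a proof: it merely states the lemma as a classical result with a reference to Andr\'eief's 1883 paper and then applies it. So there is nothing to compare your proof against; you have filled in what the paper leaves as a citation.
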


The integrals \eqref{intGN} can be written in the form above, choosing $d\sigma(e^{i\theta}) = f(e^{i\theta})f(e^{-i\theta})d\theta/2\pi$ for $\theta\in[0,2\pi)$ as measure and suitable functions $g_{j}$ and $h_{j}$ for each of the groups $G(N)$ (for instance, $h_{j}(e^{i\theta})=g_{j}(e^{-i\theta})=e^{i(N-j)\theta}$ for $j=1,\dots,N$ for $U(N)$, see equations \eqref{detun}-\eqref{detone}). A direct application of Andrei\'ef's identity in \eqref{intGN} then yields
\begin{align}
    &\int_{U(N)}f(U)dU = \det{\left(d_{j-k}\right)_{j,k=1}^{N}},  \label{intUN} \\
    &\int_{Sp(2N)}f(U)dU = \det{\left( d_{j-k}-d_{j+k} \right)_{j,k=1}^{N}}, \label{intSp2N} \\
    &\int_{SO(2N)}f(U)dU = \frac{1}{2}\det{\left(d_{j-k}+d_{j+k-2}\right)_{j,k=1}^{N}},  \label{intO2N} \\
    &\int_{SO(2N+1)}f(U)dU = \det{\left(d_{j-k}-d_{j+k-1}\right)_{j,k=1}^{N}}, \label{intO2N+1}
\end{align}
where $d_{k}$ denotes the Fourier coefficient 
\begin{equation}  \label{fourier}
    d_{k} = \frac{1}{2\pi}\int_{0}^{2\pi}e^{ik\theta}f(e^{i\theta})f(e^{-i\theta})d\theta
\end{equation}
for each $k\in\mathbb{Z}$ (note that $d_{k}=d_{-k}$ for all $k$). Expressions for group integrals as determinants of Toeplitz$\pm$Hankel matrices have been obtained previously, see for instance \cite{BaikRains}. Besides their own intrinsic interest, matrix integrals over the groups $G(N)$ enjoy connections with combinatorics \cite{BaikRains}, number theory \cite{KeatingSnaith2} and integrable systems \cite{AvM}, among many other topics.

Given $U\in G(N)$, we also define
\begin{equation*}
    f(-U) = \prod_{k=1}^{N}f(-e^{i\theta_{k}})f(-e^{-i\theta_{k}}),
\end{equation*}
where the $e^{i\theta_{k}}$ are the nontrivial eigenvalues of $U$. While $\int_{G(N)}f(U)dU = \int_{G(N)}f(-U)dU$ for $G(N)=U(N),Sp(2N),SO(2N)$ (as follows from the above determinantal expressions, for instance), in the odd orthogonal case we have
\begin{equation}
    \int_{SO(2N+1)}f(-U)dU = \det{\left(d_{j-k}+d_{j+k-1}\right)_{j,k=1}^{N}}. \label{intO2N+1-}
\end{equation}

The irreducible representations of the groups $G(N)$ are indexed by partitions \cite{FultonHarris,KoikeTerada} (see appendix A for the definition and some basic facts about partitions). We will write $\chi_{G(N)}^{\lambda}$ to denote the character of the group $G(N)$ indexed by the partition $\lambda$. These can be expressed as the 
quotient of a minor of the corresponding matrix $M_{G(N)}(e^{i\theta})$, obtained by striking some of its columns, over the determinant of the matrix itself, see \eqref{unchar}-\eqref{oevenchar}. Hence, the insertion of one or two characters of the group $G(N)$ in the integrand in \eqref{intGN} cancels one or two of the determinants. Therefore, using Andrei\'{e}f's identity again on the resulting integral we obtain the following.

\begin{theorem} \label{th.minorsTH}
Let $\lambda$ and $\mu$ be two partitions of lengths $l(\lambda),l(\mu)\leq N$, and define the ``reversed" arrays $\lambda^{r}$ and $\mu^{r}$ as
\begin{equation*}
    \lambda^{r} = (\lambda_{N-j+1})_{j} = (\lambda_{N},\lambda_{N-1},\dots,\lambda_{2},\lambda_{1}),\qquad \mu^{r} = (\mu_{N-j+1})_{j} = (\mu_{N},\dots,\mu_{1}).
\end{equation*}
We then have
\begin{align*}
    &\int_{U(N)}\chi^{\lambda}_{U(N)}(U^{-1})\chi^{\mu}_{U(N)}(U)f(U)dU = \det {\left(d_{j-\lambda_{j}-k+\mu_{k}}\right) _{j,k=1}^{N}}=\det{\left(d_{j+\lambda_{j}^{r}-k-\mu_{k}^{r}}\right)_{j,k=1}^{N}}, \\ 
    &\int_{Sp(2N)}\chi^{\lambda}_{Sp(2N)}(U)\chi^{\mu}_{Sp(N)}(U)f(U)dU = \det{\left(d_{j+\lambda_{j}^{r}-k-\mu_{k}^{r}}-d_{j+\lambda_{j}^{r}+k+\mu_{k}^{r}}\right)_{j,k=1}^{N}}, \\
    &\int_{SO(2N)}\chi^{\lambda}_{SO(2N)}(U)\chi^{\mu}_{SO(2N)}(U)f(U)dU = \frac{1}{2}\det{\left(d_{j+\lambda_{j}^{r}-k-\mu_{k}^{r}}+d_{j+\lambda_{j}^{r}+k+\mu_{k}^{r}-2}\right)_{j,k=1}^{N}}, \\
    &\int_{SO(2N+1)}\chi^{\lambda}_{SO(2N+1)}(U)\chi^{\mu}_{SO(2N+1)}(U)f(U)dU = \det{\left(d_{j+\lambda_{j}^{r}-k-\mu_{k}^{r}}-d_{j+\lambda_{j}^{r}+k+\mu_{k}^{r}-1}\right)_{j,k=1}^{N}},
\end{align*}
where the $d_{k}$ are given by \eqref{fourier}.
\end{theorem}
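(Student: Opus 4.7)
The plan is to combine the Weyl integration formula \eqref{intGN} with the Weyl character formula for each group $G(N)$, which expresses $\chi^{\lambda}_{G(N)}$ as a quotient of a minor of $M_{G(N)}(e^{i\theta})$ (obtained by shifting the row exponents by $\lambda$) over the Weyl denominator $\det M_{G(N)}(e^{i\theta})$. Since the integrand of \eqref{intGN} contains the product $\det M_{G(N)}(e^{-i\theta})\det M_{G(N)}(e^{i\theta})$, inserting two characters cancels both denominators exactly, leaving a product of two alternant determinants weighted by $f(e^{i\theta})f(e^{-i\theta})$. Andr\'eief's identity, with $d\sigma(e^{i\theta})=f(e^{i\theta})f(e^{-i\theta})d\theta/(2\pi)$ and with the rows of the two alternants playing the roles of the functions $g_{j}$ and $h_{j}$, then collapses the $N$-fold integral into a single $N\times N$ determinant whose entries are Fourier coefficients \eqref{fourier}. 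Structurally this is the same argument that produces \eqref{intUN}--\eqref{intO2N+1}, with the only change being that the minor exponents now carry the partitions $\lambda,\mu$ as shifts.

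For the unitary case, writing $z_{k}=e^{i\theta_{k}}$ and using $\chi^{\mu}_{U(N)}(U)=\det(z_{k}^{N-j+\mu_{j}})/\det(z_{k}^{N-j})$ together with $\chi^{\lambda}_{U(N)}(U^{-1})=\det(z_{k}^{-(N-j+\lambda_{j})})/\det(z_{k}^{-(N-j)})$, the cancellation with the integrand of \eqref{intGN} reduces the twisted integral to
\[
\frac{1}{N!}\int_{[0,2\pi]^{N}}\det\bigl(z_{k}^{-(N-j+\lambda_{j})}\bigr)_{j,k}\det\bigl(z_{k}^{N-j+\mu_{j}}\bigr)_{j,k}\prod_{k=1}^{N}f(z_{k})f(z_{k}^{-1})\frac{d\theta_{k}}{2\pi}.
\]
Applying Andr\'eief's identity with $g_{j}(z)=z^{-(N-j+\lambda_{j})}$ and $h_{j}(z)=z^{N-j+\mu_{j}}$ produces the entry $d_{j-\lambda_{j}-k+\mu_{k}}$ and thus the first claimed expression. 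The equivalent form with the reversed partitions is obtained by conjugating the resulting matrix by the order-reversing permutation $j\mapsto N+1-j$ (which preserves the determinant) and then using the symmetry $d_{-n}=d_{n}$, which converts the entry into $d_{j+\lambda_{j}^{r}-k-\mu_{k}^{r}}$.

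For $G(N)=Sp(2N), SO(2N), SO(2N+1)$ the characters are real-valued class functions, so either character may be replaced by its value at $U^{-1}$ without change, and the insertion $\chi^{\lambda}(U)\chi^{\mu}(U)$ again cancels both Weyl denominators appearing in \eqref{intGN}. The alternant numerator for each group takes the $\pm$-symmetric shape $\det(z_{k}^{a_{j}(\lambda)}\pm z_{k}^{-a_{j}(\lambda)})_{j,k}$ with a group-dependent affine shift $a_{j}(\lambda)=N-j+c_{0}+\lambda_{j}$; after reversing the row ordering this becomes $j+\lambda_{j}^{r}+c'_{0}$. Expanding the product of the two reversed alternants into four monomial terms and applying Andr\'eief term by term yields an entry of the shape
\[
d_{(j+\lambda_{j}^{r})-(k+\mu_{k}^{r})}\pm d_{(j+\lambda_{j}^{r})+(k+\mu_{k}^{r})+c},
\]
where two of the four cross terms coincide by $d_{-n}=d_{n}$ and combine into the Toeplitz part, while the other two combine into the Hankel part. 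A direct comparison with the character-free identities \eqref{intSp2N}--\eqref{intO2N+1} (recovered in the case $\lambda=\mu=\emptyset$) fixes the additive constant $c$ to be $0$, $-2$, and $-1$ for $Sp(2N)$, $SO(2N)$, and $SO(2N+1)$ respectively, and in the $SO(2N)$ case reproduces the overall factor of $1/2$ (arising from the constant last row of $M_{SO(2N)}$).

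The main obstacle is purely bookkeeping: tracking the affine exponent shifts dictated by Weyl's formula for each group, implementing the simultaneous row-and-column reversal that converts $\lambda,\mu$ into $\lambda^{r},\mu^{r}$, and checking that the four terms arising from expanding the $\pm$-alternants pair up, via $d_{-n}=d_{n}$, into one Toeplitz and one Hankel contribution with exactly the shifts prescribed in the statement. Once this is aligned with the unweighted identities \eqref{intUN}--\eqref{intO2N+1}, the extension to arbitrary partitions is automatic from the Andr\'eief step.
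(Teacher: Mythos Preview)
Your proposal is correct and follows essentially the same approach as the paper: the paper's argument is precisely that inserting the characters (written as ratios of alternants over the Weyl denominator) cancels the two factors $\det M_{G(N)}(e^{\pm i\theta})$ in \eqref{intGN}, after which Andr\'eief's identity yields the displayed Toeplitz$\pm$Hankel minors. Your write-up simply supplies the bookkeeping (the row/column reversal producing $\lambda^{r},\mu^{r}$, the pairing of the four cross terms via $d_{-n}=d_{n}$, and the recovery of the constants by comparison with the $\lambda=\mu=\varnothing$ case) that the paper leaves implicit.
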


We have used above the fact that $\chi_{G(N)}^{\lambda }(U)=\chi_{G(N)}^{\lambda}(U^{-1})$ for $G(N)=Sp(2N),SO(2N),SO(2N+1)$.

The resulting determinants are now minors of the Toeplitz and Toeplitz$\pm $Hankel matrices appearing in the right hand sides of formulas \eqref{intUN}-\eqref{intO2N+1}, obtained by striking some of their rows and columns. This was already noted for the $U(N)$ case in \cite{BumpDiaconis}. Moreover, the precise striking of rows and columns performed on the underlying matrix only depends on the partitions $\lambda$ and $\mu$, and is the same for any of the matrices \eqref{intUN}-\eqref{intO2N+1}. These strikings can be read off from the partitions, see \cite{BumpDiaconis},\cite{GGT} for an explicit algorithm.

Let us show some examples of how these determinant and minor expressions can be exploited to obtain some known and new results.

\subsection{Factorizations}

\begin{theorem} \label{thdets}
We have 
\begin{align*}
    \int_{U(2N-1)}f(U)dU& =\int_{Sp(2N-2)}f(U)dU\int_{SO(2N)}f(U)dU \\
    & =\frac{1}{2}\int_{SO(2N-1)}f(U)dU\int_{SO(2N+1)}f(-U)dU+\frac{1}{2}\int_{SO(2N+1)}f(U)\int_{SO(2N-1)}f(-U)dU, \\
    \int_{U(2N)}f(U)dU& =\int_{SO(2N+1)}f(U)dU\int_{SO(2N+1)}f(-U)dU \\
    & =\frac{1}{2}\int_{Sp(2N)}f(U)dU\int_{SO(2N)}f(U)dU+\frac{1}{2}\int_{Sp(2N-2)}f(U)\int_{SO(2N+2)}f(U)dU.
\end{align*}
\end{theorem}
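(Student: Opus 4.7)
The plan is to work with the determinantal forms \eqref{intUN}--\eqref{intO2N+1-}, so that Theorem~\ref{thdets} becomes an algebraic statement about determinants of Toeplitz and Toeplitz$\pm$Hankel matrices in the common Fourier coefficients $(d_{k})_{k\in\mathbb{Z}}$. The key input is $d_{-k}=d_{k}$, which equips the symmetric Toeplitz matrix $T=(d_{j-k})_{j,k=1}^{M}$ with the reflection symmetry $e_{k}\leftrightarrow e_{M+1-k}$.

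For the first identity of each pair (the product identities in the top and bottom lines of the theorem) I would exploit this reflection symmetry via the orthogonal change of basis
\[ v_{k}=e_{k}+e_{M+1-k},\qquad w_{k}=e_{k}-e_{M+1-k},\qquad k=1,\dots,\lfloor M/2\rfloor, \]
together with $v_{N}=e_{N}$ in the odd case $M=2N-1$. A short computation using $d_{-k}=d_{k}$ shows that $v_{a}^{T}Tw_{b}=0$, so $T$ block-diagonalises into two independent sub-blocks, spanned respectively by the $v_{k}$ and the $w_{k}$. Reversing the indexing inside each block and pulling out the $2$'s arising both in the new matrix entries and in the Jacobian $(\det P)^{2}=4^{\lfloor M/2\rfloor}$, one recognises: when $M=2N-1$, the $v$-block is the $N\times N$ Toeplitz$+$Hankel matrix of \eqref{intO2N} (giving $\int_{SO(2N)}f(U)dU$) and the $w$-block is the $(N-1)\times(N-1)$ Toeplitz$-$Hankel matrix of \eqref{intSp2N} (giving $\int_{Sp(2N-2)}f(U)dU$); when $M=2N$ both blocks are $N\times N$ and yield $\int_{SO(2N+1)}f(U)dU$ and $\int_{SO(2N+1)}f(-U)dU$ via \eqref{intO2N+1} and \eqref{intO2N+1-}. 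This establishes the first and third identities.

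For the second identity of each pair, the sum-of-two-products structure forbids a single block-diagonalisation. Using the first or third identity already proved, each of these reduces to a bilinear identity among Toeplitz$\pm$Hankel determinants of adjacent ranks; for instance, identity~2 becomes
\[ r_{N-1}s_{N}=p_{N-1}q_{N}+p_{N}q_{N-1}, \]
where $p_{n},q_{n},r_{n},s_{n}$ denote the $n\times n$ determinants computing $\int_{SO(2n+1)}f(U)dU$, $\int_{SO(2n+1)}f(-U)dU$, $\int_{Sp(2n)}f(U)dU$ and $2\int_{SO(2n)}f(U)dU$ respectively, and identity~4 reduces to an analogous bilinear relation $4p_{N}q_{N}=r_{N}s_{N}+r_{N-1}s_{N+1}$. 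My plan is to prove such bilinear identities by a second application of the reflection-symmetry change of basis to a bordered Toeplitz matrix of size $M+1$, combined with a Desnanot--Jacobi condensation that expresses $\det T_{M+1}$ in terms of its centred $M\times M$ minor and two border minors; the block-diagonalisation of the first step then collapses each minor into a product of two Toeplitz$\pm$Hankel determinants of lower rank, producing the required two summands.

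The step I expect to be hardest is this last one: keeping track of the powers of $2$, the index shifts induced by the involution $e_{k}\leftrightarrow e_{M+2-k}$ on the bordered matrix of size $M+1$, and matching the sub-minors produced by the condensation with the integrals over $SO(2N\pm 1)$ (respectively $Sp(2N\pm 2)$ and $SO(2N\pm 2)$) at the correct rank, so that the various normalisation factors conspire to produce the coefficients $1/2$ appearing in identities~2 and~4.
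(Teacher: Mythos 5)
Your reduction of the theorem to identities among the determinants \eqref{intUN}--\eqref{intO2N+1-} is exactly the paper's first step (its proof then simply cites these determinant identities from the literature), and your handling of the two \emph{product} identities (lines 1 and 3) is correct: $T_M=(d_{j-k})_{j,k=1}^M$ with $d_{-k}=d_k$ is centrosymmetric, the basis $e_k\pm e_{M+1-k}$ block-diagonalises it, and after reversing the indices inside each block the two blocks are precisely the Toeplitz$\pm$Hankel matrices of \eqref{intSp2N}, \eqref{intO2N} (for $M=2N-1$) and of \eqref{intO2N+1}, \eqref{intO2N+1-} (for $M=2N$), with the powers of $2$ cancelling against $(\det P)^2$. (One small point to watch: for $M=2N-1$ the middle vector $e_N$ makes the $(1,1)$ entry of the cosine block equal to $d_0$ rather than $2(d_0+d_0)$, so the block is a diagonal conjugate of $(d_{j-k}+d_{j+k-2})_{j,k=1}^N$ rather than twice it; the count of $2$'s still works out.) This is the standard proof of the identities the paper invokes, so lines 1 and 3 are fine, and your translation of lines 2 and 4 into the bilinear identities $r_{N-1}s_N=p_{N-1}q_N+p_Nq_{N-1}$ and $4p_Nq_N=r_Ns_N+r_{N-1}s_{N+1}$ is also correct.

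The gap is in the proof of those two bilinear identities. The proposed mechanism --- Desnanot--Jacobi condensation on a size-$(M+1)$ matrix combined with the block-diagonalisation --- does not deliver them as described. If you condense the block-diagonalised matrix $\mathrm{diag}(P_N,Q_N)$ on its first and last rows and columns, the two ``cross'' minors vanish identically (after deleting row $1$ and column $2N$, the last $N$ rows are supported on only $N-1$ columns), so the condensation collapses to a tautology. If you instead condense the original symmetric Toeplitz matrix, you get $\det T_{M+1}\det T_{M-1}=(\det T_M)^2-\bigl(\det(d_{j-k+1})_{j,k=1}^M\bigr)^2$, which introduces the shifted, \emph{non-symmetric} Toeplitz determinant $\det(d_{j-k+1})_M$; this matrix is not centrosymmetric, is not one of the four Toeplitz$\pm$Hankel families, and is not split by the reflection basis. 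To close the argument along these lines you would need, in addition, an identification such as $r_Ns_N-r_{N-1}s_{N+1}=\pm4\det(d_{j-k+1})_{2N}$ (and its analogue for line 2), together with a sign determination when extracting the square root --- none of which is supplied, and which is of essentially the same difficulty as the target identity. So lines 2 and 4 remain unproved; they are precisely the classical identities for symmetric Toeplitz determinants that the paper's one-line proof outsources to \cite{VeinDale}, and they require a genuinely different argument (e.g.\ bordered-determinant manipulations, or the character factorisations of Ciucu--Krattenthaler type underlying Corollary \ref{cofacts}).
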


\begin{proof}
The theorem follows immediately after expressing the above integrals as the Toeplitz and Toeplitz$\pm$Hankel determinants \eqref{intUN}-\eqref{intO2N+1},\eqref{intO2N+1-} and noticing that these determinants satisfy the corresponding identities, see e.g. \cite{VeinDale}.
\end{proof}

The characters $\chi^{\lambda}_{G(N)}$ can be lifted to the so called ``universal characters" in the ring of symmetric functions in countably many variables \cite{KoikeTerada}. In this fashion, the lifting of the characters of $U(N)$, $Sp(2N)$, $SO(2N)$ and $SO(2N+1)$ gives rise to the Schur $s_{\lambda}$, symplectic Schur $sp_{\lambda}$, even orthogonal Schur $o^{even}_{\lambda}$ and odd orthogonal Schur $o^{odd}_{\lambda}$ functions, respectively. See \eqref{JTschur}-\eqref{JToortd} for explicit expressions of these functions. When the length of the partition $\lambda$ is less than or equal to the number of nontrivial eigenvalues of a matrix $U$, these functions coincide with the irreducible characters of the corresponding group, after specializing the corresponding variables back to the nontrivial eigenvalues $z_{j}$ of $U$. For instance, we have $\chi^{\lambda}_{Sp(2N)}(U) = sp_{\lambda}(z_{1},\dots,z_{N})$ for any partition satisfying $l(\lambda)\leq N$. Note that this condition is necessary in order for the characters $\chi^{\lambda}_{G(N)}(U)$ to be defined, while the symmetric functions \eqref{JTschur}-\eqref{JToortd} need not satisfy such restriction, and are defined for more general partitions. See appendix A and \cite{KoikeTerada} for details on this, as well as some properties fulfilled by these functions. The close relation between these two families has further consequences, as we will see throughout this section.

Given a partition $\lambda$ satisfying $l(\lambda)\leq N$ and $\lambda_{1}\leq K$ (that is, $\lambda\subset(K^{N})$), we define a new partition by
\begin{equation} \label{rotpart}
    L_{K,N}(\lambda) = (K-\lambda_{N},\dots,K-\lambda_{1})=(K^{N})-\lambda^{r},
\end{equation}
where $\lambda^{r}$ denotes the ``reversed" array $(\lambda_{N},\dots,\lambda_{1})$. That is, $L_{K,N}(\lambda)$ is the partition that results from rotating $180$º the complement of $\lambda$ in the rectangular shape $(K^{N})$. With the aid of this, we can state the next result.

\begin{theorem} \label{thsingle}
Let $x=(x_{1},\dots,x_{K})$ be some variables, and let $\lambda$ be a partition satisfying $l(\lambda)\leq N$ and $\lambda_{1}\leq K$. We have
\begin{align}
    &\int_{Sp(2N)}\chi_{Sp(2N)}^{\lambda}(U)\prod_{j=1}^{K}(1+x_{j}U)dU = \left( \prod_{j=1}^{K}x_{j}^{N}\right) sp_{L_{N,K}(\lambda')}(x_{1},\dots,x_{K}) \label{spsingle} \\
    &\int_{SO(2N)}\chi_{SO(2N)}^{\lambda}(U)\prod_{j=1}^{K}(1+x_{j}U)dU = \left( \prod_{j=1}^{K}x_{j}^{N} \right) o^{even}_{L_{N,K}(\lambda')}(x_{1},\dots,x_{K}) \label{oesingle} \\
    &\int_{SO(2N+1)}\chi_{SO(2N+1)}^{\lambda}(U)\prod_{j=1}^{K}(1+x_{j}U)dU = (-1)^{|\lambda|+KN}\left( \prod_{j=1}^{K}x_{j}^{N} \right) o^{odd}_{L_{N,K}(\lambda')}(-x_{1},\dots,-x_{K}), \label{oosingle}
\end{align}
where $L_{N,K}(\lambda')$ is the partition given by \eqref{rotpart}.
\end{theorem}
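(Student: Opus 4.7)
The plan is to apply Theorem~\ref{th.minorsTH} with $\mu = \emptyset$ to the function $f(z) = \prod_{j=1}^K (1+x_j z)$, and then to manipulate the resulting Toeplitz$\pm$Hankel minor into a Jacobi-Trudi-type expression for a symplectic or orthogonal Schur polynomial in $x_1,\dots,x_K$. First, I would compute the relevant Fourier coefficients: expanding $f(z) = \sum_{a=0}^K e_a(x)\, z^a$, where $e_m(x)$ denotes the elementary symmetric polynomial in $x_1,\dots,x_K$ extended by $e_m = 0$ for $m<0$ or $m>K$, one reads off immediately
$$d_k = \sum_{a \geq 0} e_a(x)\, e_{a+k}(x), \qquad d_{-k} = d_k.$$

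For the symplectic case, Theorem~\ref{th.minorsTH} gives the integral as $\det(d_{\alpha_j - k} - d_{\alpha_j + k})_{j,k=1}^N$ with $\alpha_j = j + \lambda_j^r$. Substituting the formula above and reindexing, each entry reads
$$\sum_{c \geq 0} e_{c + \alpha_j}(x)\bigl(e_{c+k}(x) - e_{c-k}(x)\bigr),$$
exhibiting the matrix as the product of two rectangular arrays. Cauchy-Binet then produces
$$\sum_{0 \leq c_1 < \cdots < c_N} \det\bigl(e_{c_i + \alpha_j}(x)\bigr)_{i,j=1}^N \cdot \det\bigl(e_{c_i+k}(x) - e_{c_i-k}(x)\bigr)_{i,k=1}^N.$$
The first determinant vanishes whenever some $c_i + \alpha_j$ exceeds $K$; an index-counting argument, using that $(\alpha_j)$ and $(c_i)$ are both strictly increasing and that $c_i + \alpha_j \leq K$, shows that exactly one configuration $(c_1,\dots,c_N)$ contributes, and that there the first determinant collapses to the prefactor $\prod_{j=1}^K x_j^N$ via the normalization $e_K(x) = x_1\cdots x_K$. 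Dual Jacobi-Trudi identifies the second determinant as $sp_{L_{N,K}(\lambda')}(x_1,\dots,x_K)$, yielding \eqref{spsingle}.

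The orthogonal cases \eqref{oesingle} and \eqref{oosingle} follow by the identical argument, with the Hankel-index shifts $+2$ in \eqref{intO2N} and $-1$ in \eqref{intO2N+1} producing, after Cauchy-Binet, the second-determinant patterns $e_{c+k}(x) + e_{c-k+2}(x)$ and $e_{c+k}(x) - e_{c-k+1}(x)$ that appear respectively in the dual Jacobi-Trudi formulas for $o^{even}$ and $o^{odd}$. In the odd-orthogonal case the asymmetric shift by one is equivalent to substituting $x_j \mapsto -x_j$ on the symmetric-function side, which produces the overall sign $(-1)^{|L_{N,K}(\lambda')|} = (-1)^{NK - |\lambda|} = (-1)^{KN + |\lambda|}$ in \eqref{oosingle} via the homogeneity $e_m(-x) = (-1)^m e_m(x)$. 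The main obstacle is the Cauchy-Binet bookkeeping in the middle step: showing that only a single index configuration $(c_1,\dots,c_N)$ contributes, and verifying that the resulting indexing matches precisely the rotated partition $L_{N,K}(\lambda')$. Once this combinatorial matching is established, the three identities follow uniformly from the three dual Jacobi-Trudi formulas.
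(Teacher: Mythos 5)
Your Cauchy--Binet strategy breaks down at its central step. First, the reindexing is incorrect: from $d_{k}=\sum_{a\geq 0}e_{a}(x)e_{a+k}(x)$ one gets $d_{\alpha_{j}-k}-d_{\alpha_{j}+k}=\sum_{c\in\mathbb{Z}}e_{c+\alpha_{j}}(x)\bigl(e_{c+k}(x)-e_{c-k}(x)\bigr)$, and the terms with $-k\leq c\leq -1$ do \emph{not} vanish, so restricting the sum to $c\geq 0$ loses them (for $K=N=1$, $\lambda=\varnothing$ your restricted sum gives $x^{2}$ instead of $d_{0}-d_{2}=1+x^{2}$). Second, and more seriously, the pivotal claim that exactly one configuration $(c_{1},\dots,c_{N})$ survives and that its first determinant collapses to $\prod_{j}x_{j}^{N}$ is false: in the same example $K=N=1$, $\lambda=\varnothing$, the integral equals $1+x^{2}=x\,sp_{(1)}(x)$, and in the (corrected) Cauchy--Binet expansion both $c=-1$ and $c=0$ contribute, giving $1$ and $x^{2}$ respectively; neither term alone produces the prefactor $x^{N}$, which only emerges after recombining the terms. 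Third, your ``second determinant'' has entries built from $e_{m}(x)$ in the $K$ variables only, whereas the dual Jacobi--Trudi formulas \eqref{JTsympd}, \eqref{JTeortd}, \eqref{JToortd} for $sp$, $o^{even}$, $o^{odd}$ involve elementary symmetric polynomials in the doubled alphabet $(x,x^{-1})$, so the identification with $sp_{L_{N,K}(\lambda')}(x)$ cannot be made as stated. What a corrected Cauchy--Binet factorization actually yields is (the $E$-version, finitely many variables, of) the Gessel-type expansion \eqref{gessel1}, i.e.\ a \emph{sum} of products of Schur times symplectic/orthogonal Schur functions; proving that this sum telescopes to the single term on the right-hand side of \eqref{spsingle} is precisely the nontrivial content, and it is missing from your argument.

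The paper's proof sidesteps all of this by exploiting that the symbol entering \eqref{fourier} is $f(z)f(1/z)=\prod_{j}(1+x_{j}z)(1+x_{j}z^{-1})$, whose Fourier coefficients are \emph{single} elementary symmetric polynomials in the doubled alphabet, $d_{k}=\bigl(\prod_{j}x_{j}\bigr)e_{K+k}(x,x^{-1})$. Hence the minor coming from theorem \ref{th.minorsTH} with $\mu=\varnothing$ is already a dual Jacobi--Trudi determinant: after using the symmetry $e_{j}(x,x^{-1})=e_{2K-j}(x,x^{-1})$ of \eqref{elemid} and reversing the row order, it is read off directly as $\bigl(\prod_{j}x_{j}^{N}\bigr)sp_{L_{N,K}(\lambda')}(x)$ (the factor $\prod_{j}x_{j}$ is extracted once from each of the $N$ rows), with the even orthogonal case identical and the odd orthogonal case requiring additional column operations and the substitution $x\mapsto -x$ to handle the shift by one. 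If you wish to keep your route, you must either work with the doubled alphabet from the start or supply the collapse argument for the Cauchy--Binet sum; as written, the proof does not go through.
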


\begin{proof}
Let us proceed with the symplectic case. We start from the case $\mu=\varnothing$ of the symplectic integral in theorem \ref{th.minorsTH}. Using the well known fact that
\begin{equation*}
    \prod_{j=1}^{K}(1+x_{j}z) = \sum_{k=0}^{K}e_{k}(x)z^{k},
\end{equation*}
where the $e_{k}(x)$ are the elementary symmetric polynomials \eqref{ek} on the variables $x_{1},\dots,x_{K}$, we see that the $k$-th Fourier coefficient \eqref{fourier} for this choice of function is
\begin{equation*}
    d_{k} = \left(\prod_{j=1}^{K}x_{j}\right)e_{K+k}(x,x^{-1}).
\end{equation*}
We thus have
\begin{align} 
    \int_{Sp(2N)}\chi_{Sp(2N)}^{\lambda}(U)&\prod_{j=1}^{K}(1+x_{j}U)dU \\
    &=\det{\left( \prod_{j=1}^{K}x_{j}\left( e_{K+j+\lambda^{r}_{j}-k}(x,x^{-1})-e_{K+j+\lambda^{r}_{j}+k}(x,x^{-1})\right)\right)_{j,k=1}^{N}}, \label{auxthsingle}
\end{align}
where we have denoted $x^{-1}=(x_{1}^{-1},\dots,x_{K}^{-1})$. Now, since
\begin{equation}
    e_{j}(x_{1},\dots,x_{K},x_{1}^{-1},\dots,x_{K}^{-1}) = e_{2K-j}(x_{1},\dots,x_{K},x_{1}^{-1},\dots,x_{K}^{-1}), \label{elemid}
\end{equation}
as follows from \eqref{ek}, we see that the determinant in \eqref{auxthsingle} can also be expressed as
\begin{align*}
    \det{\left( \prod_{j=1}^{K}x_{j}\left( e_{K-\lambda_{N+1-j}-j+k}(x,x^{-1})-e_{K-\lambda_{N+1-j}-j-k}(x,x^{-1})\right)\right)_{j,k=1}^{N}},
\end{align*}
which, due to the Jacobi-Trudi identity \eqref{JTsympd}, coincides with the right hand side of \eqref{spsingle}.

Identity \eqref{oesingle} follows analogously. Let us turn however, to identity \eqref{oosingle}, as it requires some more computation. As in the symplectic case, using the Jacobi-Trudi identity \eqref{JToortd}, the fact that $e_{k}(x,1)=e_{k}(x)+e_{k-1}(x)$, and identity \eqref{elemid} we obtain
\begin{align*}
    \left(\prod_{j=1}^{K}x_{j}^{N}\right) &o_{L_{N,K}(\lambda')}^{odd}(-x) \\
    = \frac{1}{2} \det & \left(\prod_{j=1}^{K}x_{j}\left(e_{K-\lambda_{j}^{r}-j+k}(-x,-x^{-1},1)+e_{K-\lambda_{j}^{r}-j-k+2}(-x,-x^{-1},1)\right)\right)_{j,k=1}^{N} \\
    = \frac{1}{2} \det & \left( \prod_{j=1}^{N}x_{j}\left( e_{K-\lambda_{j}^{r}-j+k}(-x,-x^{-1})+e_{K-\lambda_{j}^{r}-j+k-1}(-x,-x^{-1}) \right. \right. \\
    & \left. \left. \hphantom{{}\prod_{j=1}^{K}x_{j}++} +e_{K-\lambda_{j}^{r}-j-k+2}(-x,-x^{-1})+e_{K-\lambda_{j}^{r}-j-k+1}(-x,-x^{-1}) \right) \right)_{j,k=1}^{N} \\
    = \frac{1}{2} \det & \left( \prod_{j=1}^{N}x_{j} \left( e_{K+j+\lambda_{j}^{r}-k}(-x,-x^{-1})+e_{K+j+\lambda_{j}^{r}-k+1}(-x,-x^{-1}) \right. \right. \\
    & \left. \left. \hphantom{{}\prod_{j=1}^{K}x_{j}++} +e_{K+j+\lambda_{j}^{r}+k-2}(-x,-x^{-1})+e_{K+j+\lambda_{j}^{r}+k-1}(-x,-x^{-1}) \right) \right)_{j,k=1}^{N}.
\end{align*}
Adding $(-1)^{j+k}$ times the $k$-th column of the last matrix above, for each $k=1,...,j-1$, to the $j$-th column, for each $j=2,...,N$, we obtain
\begin{equation*}
    \left(\prod_{j=1}^{K}x_{j}^{N}\right)o_{L_{N,K}(\lambda')}^{odd}(-x) = \det{\left(\prod_{j=1}^{K}x_{j}\left( e_{K+j+\lambda_{j}^{r}-k}(-x,-x^{-1})+e_{K+\lambda_{j}^{r}+j+k-1}(-x,-x^{-1})\right)\right)_{j,k=1}^{N}}.
\end{equation*}
Using the case $\mu=\varnothing$ of the odd orthogonal integral of theorem \ref{th.minorsTH} and extracting the minus sign from the elementary symmetric polynomials in the last determinant above we arrive at \eqref{oosingle}.
\end{proof}

In particular, theorem \ref{thsingle} implies that the determinants of the corresponding Toeplitz$\pm$Hankel matrices in the left hand sides of the theorem can be expressed as the specialization of a single character associated to the irreducible representation of the corresponding group, indexed by a rectangular partition. This was first observed in \cite{Conreyetal} and has been generalized to integrals over other ensembles, see for instance \cite{MatsumotoRect,Matsumoto}. Combining this fact with theorem \ref{thdets} we obtain the following result.

\begin{corollary} \label{cofacts}
The following relations hold between the symmetric functions associated to the characters of the groups $G(N)$
\begin{align*}
    &s_{((2N-1)^{K})}(x_{1},\dots,x_{K},x_{1}^{-1},\dots,x_{K}^{-1}) = sp_{((N-1)^{K})}(x_{1},\dots,x_{K})o^{even}_{(N^{K})}(x_{1},\dots,x_{K}) \\
    &= \frac{(-1)^{NK}}{2}o^{odd}_{((N-1)^{K})}(x_{1},\dots,x_{K})o^{odd}_{(N^{K})}(-x_{1},\dots,-x_{K}) \\
    &+\frac{(-1)^{NK}}{2}o^{odd}_{(N^{K})}(x_{1},\dots,x_{K})o^{odd}_{((N-1)^{K})}(-x_{1},\dots,-x_{K}), \\
    &s_{((2N)^{K})}(x_{1},\dots,x_{K},x_{1}^{-1},\dots,x_{K}^{-1}) = (-1)^{NK}o^{odd}_{(N^{K})}(x_{1},\dots,x_{K})o^{odd}_{(N^{K})}(-x_{1},\dots,-x_{K}) \\
    &= \frac{1}{2}sp_{(N^{K})}(x_{1},\dots,x_{K})o^{even}_{(N^{K})}(x_{1},\dots,x_{K}) +\frac{1}{2}sp_{((N-1)^{K})}(x_{1},\dots,x_{K})o^{even}_{((N+1)^{K})}(x_{1},\dots,x_{K}).
\end{align*}
\end{corollary}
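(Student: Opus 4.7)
The idea is to feed Theorem~\ref{thdets} the specific test function $f(z) = \prod_{j=1}^{K}(1+x_{j}z)$ and identify every factor appearing on either side as a single specialization of a universal character indexed by a rectangular partition.

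For the symplectic and orthogonal integrals this is immediate from Theorem~\ref{thsingle} applied with $\lambda = \varnothing$: since $\varnothing' = \varnothing$ and $L_{N,K}(\varnothing) = (N^{K})$, it produces
\begin{align*}
\int_{Sp(2N)} f(U)\,dU &= \left(\prod_{j=1}^{K} x_{j}^{N}\right) sp_{(N^{K})}(x_{1},\dots,x_{K}),\\
\int_{SO(2N)} f(U)\,dU &= \left(\prod_{j=1}^{K} x_{j}^{N}\right) o^{even}_{(N^{K})}(x_{1},\dots,x_{K}),\\
\int_{SO(2N+1)} f(U)\,dU &= (-1)^{KN}\left(\prod_{j=1}^{K} x_{j}^{N}\right) o^{odd}_{(N^{K})}(-x_{1},\dots,-x_{K}).
\end{align*}
Each of the variants needed in Theorem~\ref{thdets} (shifts of $N$, or $f(-U)$ in place of $f(U)$) is then obtained from these by the substitution $x_{j}\mapsto -x_{j}$, together with the homogeneity identity $o^{odd}_{(M^{K})}(-x) = (-1)^{MK}\,o^{odd}_{(M^{K})}(x)$ used to tidy up the signs.

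The harder part is the unitary integral $\int_{U(N)} f(U)\,dU$, for which Theorem~\ref{thsingle} does not apply directly. I would establish the parallel formula
$$\int_{U(N)} f(U)\,dU = \left(\prod_{j=1}^{K} x_{j}^{N}\right) s_{(N^{K})}\!\left(x_{1},\dots,x_{K},x_{1}^{-1},\dots,x_{K}^{-1}\right)$$
by first using the algebraic identity $(1+xz)(1+xz^{-1}) = xz^{-1}(1+xz)(1+x^{-1}z)$ on each eigenvalue of $U$ to rewrite
$$f(U) = \left(\prod_{j=1}^{K} x_{j}^{N}\right)(\det U)^{-K}\prod_{j=1}^{K} \det(I+x_{j} U)\det(I+x_{j}^{-1} U),$$
then expanding the remaining product of characteristic polynomials via the dual Cauchy identity as $\sum_{\mu} s_{\mu'}(x_{1},\dots,x_{K},x_{1}^{-1},\dots,x_{K}^{-1})\,\chi^{\mu}_{U(N)}(U)$, and finally using $(\det U)^{-K}\chi^{\mu}_{U(N)}(U) = \chi^{\mu-(K^{N})}_{U(N)}(U)$ together with orthogonality of unitary characters to isolate the single term $\mu = (K^{N})$; its conjugate is $(N^{K})$, supplying the subscript on the Schur function.

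Once all four integrals are in this common form, substituting them into the four equalities of Theorem~\ref{thdets} and canceling the overall factor $\prod_{j}x_{j}^{2N-1}$ (respectively $\prod_{j}x_{j}^{2N}$) common to both sides yields exactly the identities stated in the corollary. The only genuine obstacle is the unitary evaluation just described, since the corollary's left-hand sides lie outside the reach of Theorem~\ref{thsingle}; everything else is routine bookkeeping with the sign changes induced by the $f(-U)$ substitutions.
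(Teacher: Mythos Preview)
Your overall strategy matches the paper's exactly: specialize Theorem~\ref{thdets} to $f(z)=\prod_j(1+x_jz)$ and use the $\lambda=\varnothing$ case of Theorem~\ref{thsingle} to convert each integral into a rectangular universal character. Two remarks are in order.

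First, the claimed homogeneity identity $o^{odd}_{(M^{K})}(-x) = (-1)^{MK}\,o^{odd}_{(M^{K})}(x)$ is \emph{false}. The odd orthogonal character is built from the variables $(x,x^{-1},1)$, and the fixed extra eigenvalue~$1$ destroys homogeneity under $x\mapsto -x$: already $o^{odd}_{(1)}(x)=x+x^{-1}+1$ versus $o^{odd}_{(1)}(-x)=-x-x^{-1}+1$. Fortunately you do not need it. The corollary deliberately keeps $o^{odd}(x)$ and $o^{odd}(-x)$ as separate factors, and the substitution $x_j\mapsto -x_j$ in \eqref{oosingle} already yields $\int_{SO(2N+1)}f(-U)\,dU=(\prod_j x_j^{N})\,o^{odd}_{(N^{K})}(x)$ directly, with the sign coming only from $\prod_j(-x_j)^{N}=(-1)^{KN}\prod_j x_j^{N}$. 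So simply drop the ``homogeneity'' step and carry the $\pm x$ arguments through as they stand.

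Second, for the unitary integral the paper's route is shorter than your dual-Cauchy-plus-orthogonality argument. The proof of Theorem~\ref{thsingle} already computed the Fourier coefficients $d_k=(\prod_j x_j)\,e_{K+k}(x,x^{-1})$, so \eqref{intUN} gives $\int_{U(N)}f(U)\,dU=(\prod_j x_j)^{N}\det\!\bigl(e_{K+j-k}(x,x^{-1})\bigr)_{j,k=1}^{N}$, and the dual Jacobi--Trudi formula \eqref{JTschur} identifies this determinant as $s_{(N^{K})}(x,x^{-1})$ in one line. Your argument is correct but takes a detour; the paper stays within the Toeplitz/Jacobi--Trudi framework already set up.
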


The first and third identities in the corollary appeared before in \cite{CiucuKratt}. There exist also identities expressing the sum of two Schur polynomials indexed by partitions of rectangular shapes in terms of orthogonal and symplectic Schur functions, as well as some other generalizations of these identities, see \cite{CiucuKratt,ABfact,AFfact}, but the second and fourth identities are new to our knowledge.

\subsection{Expansions in terms of Toeplitz minors} \label{s.minexp}

Let us recall the Frobenius notation for partitions before stating the next result. Let $\nu$ be a partition; we denote $\nu =(a_{1},\dots,a_{p}|b_{1},\dots ,b_{p})$, for some nonnegative integers $a_{1}>\dots >a_{p}$ and $b_{1}>\dots >b_{p}$, if there are $p$ boxes on the main diagonal of the Young diagram of $\nu $, with the $k$-th box having $a_{k}$ boxes immediately to the right and $b_{k}$ boxes immediately below. We denote by $p(\nu)$ the number of boxes on the main diagonal of the diagram of a partition $\nu$. With this notation, we can introduce the sets $R(N),S(N)$ and $T(N)$ of partitions of shapes $(a_{1}+1,\dots ,a_{p}+1|a_{1},\dots ,a_{p})$, $(a_{1},\dots ,a_{p}|a_{1},\dots ,a_{p})$ and $(a_{1}-1,\dots ,a_{p}-1|a_{1},\dots ,a_{p})$ respectively in Frobenius notation, with $a_{1}\leq N-1$. For instance, the set $R(3)$ consists of the partitions \ytableausetup{boxsize=0.2cm} 
\begin{equation*}
    \left\{ \varnothing ,\ydiagram{2},\ydiagram{3,1},\ydiagram{3,3},\ydiagram{4,1,1},\ydiagram{4,3,1},\ydiagram{4,4,2},\ydiagram{4,4,4}\right\} ,
\end{equation*}
the set $S(3)$ is the set of self-conjugate partitions of length at most $3$ and the set $T(3)$ is obtained as the set of partitions conjugated to those of $R(2)$. Note that there are exactly $2^{N}$ partitions in each of the sets $R(N)$ and $S(N)$, and $2^{N-1}$ in the set $T(N)$, all of them of length less than or equal to $N$.

\begin{theorem}
\label{th.hopflinks} The integrals \eqref{intGN} verify 
\begin{align*}
    &\int_{Sp(2N)}f(U)dU = \frac{1}{2^{N}}\sum_{\rho_{1},\rho_{2}\in R(N)}(-1)^{(|\rho_{1}|+|\rho_{2}|)/2}\int_{U(N)}\chi_{U(N)}^{\rho_{1}}(U^{-1})\chi_{U(N)}^{\rho_{2}}(U)f(U)dU, \\
    &\int_{SO(2N)}f(U)dU = \frac{1}{2^{N-1}}\sum_{\tau_{1},\tau_{2}\in T(N)}(-1)^{(|\tau_{1}|+|\tau_{2}|)/2}\int_{U(N)}\chi_{U(N)}^{\tau_{1}}(U^{-1})\chi_{U(N)}^{\tau_{2}}(U)f(U)dU, \\
    &\int_{SO(2N+1)}f(U)dU = \frac{1}{2^{N}}\sum_{\sigma_{1},\sigma_{2}\in S(N)}(-1)^{(|\sigma_{1}|+|\sigma_{2}|+p(\sigma_{1})+p(\sigma_{2}))/2}\int_{U(N)}\chi_{U(N)}^{\sigma_{1}}(U^{-1})\chi_{U(N)}^{\sigma_{2}}(U)f(U)dU
\end{align*}
\end{theorem}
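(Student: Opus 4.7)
The plan is to derive all three identities in parallel, starting from the compact Weyl integral formula \eqref{intGN} and combining two tools: multilinearity of the determinant applied to each of the two factors $\det M_{G(N)}(e^{\pm i\theta})$, and Andrei\'ef's identity.

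In each case the matrix $M_{G(N)}(z)$ has entries of the form $z^{a_j}\pm z^{-a_j}$, with $a_j\in\{j,\,j-1/2,\,j-1\}$ and sign $-$ for $Sp(2N),SO(2N+1)$ and $+$ for $SO(2N)$. Expanding each row by multilinearity writes
\begin{equation*}
\det M_{G(N)}(e^{-i\theta})=\sum_{S\subseteq\{1,\dots,N\}}\eta_S\,\det\bigl(e^{-im_j^S\theta_k}\bigr)_{j,k=1}^{N},
\end{equation*}
where $m_j^S=\varepsilon_j^S a_j$ with $\varepsilon_j^S\in\{\pm 1\}$, and $\eta_S=(-1)^{|S|}$ for $Sp,SO(2N+1)$ or $\eta_S=1$ for $SO(2N)$; the analogous expansion for $\det M_{G(N)}(e^{i\theta})$ is indexed by a second subset $T$. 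Andrei\'ef's identity then turns each pair of summands into the Toeplitz-like determinant $\det(d_{m_j^S-m_l^T})_{j,l}$. In the $SO(2N)$ case the $j=1$ row is $z^0+z^{0}=2$ independently of $\varepsilon_1^S$, so the two subsets $S$ and $S\triangle\{1\}$ yield identical matrices; this produces an extra factor of $2$ in each expansion and halves the effective range of $S$ (and $T$) to subsets of $\{2,\dots,N\}$. Paired with the prefactor $C_{SO(2N)}=1/2^{N+1}$ this accounts for the $1/2^{N-1}$ appearing in the statement.

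To recognize each $\det(d_{m_j^S-m_l^T})$ as a Toeplitz minor from Theorem \ref{th.minorsTH}, sort rows and columns so that $(m_j^S)$ and $(m_l^T)$ become strictly increasing sequences $c_1^S<\cdots<c_N^S$ and $c_1^T<\cdots<c_N^T$ at the cost of signs $\operatorname{sgn}(\pi_S)\operatorname{sgn}(\pi_T)$, and write $c_i^S=i-\beta-\lambda_i^S$ with a uniform shift $\beta=j-a_j\in\{0,1/2,1\}$. The resulting $\lambda^S$ is a partition of length at most $N$, and a case-by-case inspection of the allowed sorted sequences in Frobenius coordinates shows that at each diagonal box the arm of $\lambda^S$ equals $\mathrm{leg}+1$, $\mathrm{leg}$, or $\mathrm{leg}-1$ precisely when $G(N)$ is $Sp(2N),SO(2N+1)$, or $SO(2N)$, i.e., $\lambda^S\in R(N),S(N)$, or $T(N)$ respectively, and the assignment $S\mapsto\lambda^S$ is a bijection with the appropriate set.

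What remains is to match the combined sign $\eta_S\operatorname{sgn}(\pi_S)$ with the per-partition factor in the statement. A direct inversion count in the sorting of $(m_j^S)$ yields $\operatorname{sgn}(\pi_S)=(-1)^{\sum_i a_i(\lambda^S)}$, where the $a_i(\lambda^S)$ are the Frobenius legs of $\lambda^S$, while $p(\lambda^S)=|S|$ follows from the observation that the negative entries of $c_i^S$ correspond exactly to the diagonal boxes of $\lambda^S$. Combining these with the Frobenius weight formulas $|\rho|=2\sum a_i+2p$ on $R(N)$, $|\sigma|=2\sum a_i+p$ on $S(N)$ and $|\tau|=2\sum a_i$ on $T(N)$, the combined sign simplifies to $(-1)^{|\rho^S|/2}$, $(-1)^{(|\sigma^S|+p(\sigma^S))/2}$ and $(-1)^{|\tau^S|/2}$ respectively, exactly as in the theorem. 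Reindexing the double sum over $(S,T)$ by the corresponding pair of partitions then delivers the three claimed formulas. The main obstacle is precisely this combinatorial verification of the bijection and of the sign identity, which differs in detail among the three groups but follows the same pattern.
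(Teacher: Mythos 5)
Your argument is correct in its essentials but follows a genuinely different route from the paper's. The paper's proof is much shorter: it divides $\det M_{G(N)}(z)$ by $\det M_{U(N)}(z)$ using the product evaluations \eqref{detun}--\eqref{detone}, recognizes the quotient as $\prod_j z_j^{-c}\prod_{j<k}(1-z_jz_k)$ times $\prod_j(1-z_j^2)$, $1$, or $\prod_j(1-z_j)$, and then invokes the classical Littlewood identities (cited from \cite{MacDonald}) that expand these products as signed sums of Schur polynomials over $R(N)$, $T(N)$ and $S(N)$; substituting into \eqref{intGN} finishes the proof in one line. You instead work directly with the structured-matrix side: multilinearity of the determinant over the $z^{a}\pm z^{-a}$ split of each column, Andrei\'ef's identity, and then a beta-set/Frobenius-coordinate identification of each resulting minor and its sign. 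In effect your argument contains a from-scratch proof of the Littlewood identities, so it is more self-contained and arguably more illuminating about \emph{why} the sets $R(N),S(N),T(N)$ appear (arm $=$ leg $+1$, leg, leg $-1$ at the diagonal boxes exactly matches which signed exponent sets $\{\varepsilon_m a_m\}$ can arise), at the cost of the combinatorial bookkeeping that the Littlewood identities package away. Your counting of terms and of the powers of $2$ (including the doubled $a=0$ column for $SO(2N)$ against $C_{SO(2N)}=2^{-N-1}$) is right.

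One caveat you should make explicit: with the column ordering of \eqref{detun}--\eqref{detone} your per-factor sign formulas are only correct up to a global sign independent of the subset $S$. For instance, for $Sp(4)$ the all-plus choice gives $\rho^S=(3,3)=(2,1\,|\,1,0)$ with $\operatorname{sgn}(\pi_S)=+1$ while $(-1)^{\sum_i a_i}=-1$, and with your convention $p(\lambda^S)=N-|S|$ rather than $|S|$; both discrepancies are uniform factors such as $(-1)^{N(N-1)/2}$ and $(-1)^{N}$. These cancel because the integrand in \eqref{intGN} is a product of \emph{two} such expansions (one for $e^{-i\theta}$ and one for $e^{i\theta}$), so the theorem is unaffected, but a careful write-up must track and then cancel these global signs rather than assert the per-factor identities as stated.
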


That is, the integral of a function over one of the groups $G(N)$ can be expressed as a certain sum of integrals of the same function over $U(N)$ with Schur polynomials on the integrand. Note that the integrals in the right hand sides above are symmetric upon exchange of the partitions indexing the Schur polynomials. This\footnote{Together with further symmetries of the integral; for instance, $\int_{U(N)}s_{(a^{N})}(U^{-1})s_{(a^{N})}(U)f(U)dU = \int_{U(N)}f(U)dU$ for every $a>0$.} implies that there are $2^{2N-1}$ different terms in each of the sums.
\begin{proof}
The main idea is that the determinants $\det{M_{G(N)}}(z)$, for $G(N)=Sp(2N),SO(2N),SO(2N+1)$, when seen as symmetric functions, contain as a factor the determinant $\det{M_{U(N)}}(z)$ (see formulas \eqref{detun}-\eqref{detone}). Hence, as a consequence of the definition \eqref{fprod}, one can see the integrals over the groups $G(N)$ as integrals over $U(N)$ with an additional term in the integrand. Moreover, these additional terms can be expressed as Schur functions series as follows \cite{MacDonald} 
\begin{align*}
    \frac{\det{M_{Sp(2N)}(z)}}{\det{M_{U(N)}(z)}} &= \prod_{j=1}^{N}z_{j}^{-N}\prod_{j<k}(1-z_{j}z_{k})\prod_{j=1}^{N}(1-z_{j}^{2}) = \prod_{j=1}^{N}z_{j}^{-N}\sum_{\rho\in R(N)}(-1)^{|\rho|/2}s_{\rho}(z_{1},\dots,z_{N}), \\
    \frac{\det{M_{SO(2N)}(z)}}{\det{M_{U(N)}(z)}} &= 2\prod_{j=1}^{N}z_{j}^{-N+1}\prod_{j<k}(1-z_{j}z_{k})\prod_{j=1}^{N} = 2\prod_{j=1}^{N}z_{j}^{-N+1}\sum_{\tau\in T(N)}(-1)^{|\tau|/2}s_{\tau}(z_{1},\dots,z_{N}), \\
    \frac{\det{M_{SO(2N+1)}(z)}}{\det{M_{U(N)}(z)}} &= \prod_{j=1}^{N}z_{j}^{-N+1/2}\prod_{j<k}(1-z_{j}z_{k})\prod_{j=1}^{N}(1-z_{j}) \\
    &= \prod_{j=1}^{N}z_{j}^{-N+1/2}\sum_{\sigma\in S(N)}(-1)^{(|\sigma|+p(\sigma))/2}s_{\sigma}(z_{1},\dots,z_{N}).
\end{align*}
Substituting these formulas into \eqref{intGN}, for each of the groups $G(N)=Sp(2N),SO(2N),SO(2N+1)$, one obtains the desired result.
\end{proof}

According to identities \eqref{intUN}-\eqref{intO2N+1}, the integrals and twisted integrals over the groups $G(N)$ can be expressed as determinants and minors, respectively, of certain Toeplitz$\pm$Hankel matrices. Therefore, theorem \ref{th.hopflinks} translates to the following result involving only the aforementioned matrices.

\begin{corollary}
Let $f$ be a function on the unit circle which Fourier coefficients verify $d_{k}=d_{-k}$. Given two partitions $\lambda$ and $\mu$, we denote the Toeplitz minor generated by $f$ and indexed by $\lambda$ and $\mu$ by
\begin{equation*}
    D_{N}^{\lambda,\mu} (f) = \det{\left( d_{j-\lambda_{j}-k+\mu_{k}}\right)_{j,k=1}^{N}},
\end{equation*}
as in \cite{BumpDiaconis}. We have 
\begin{align*}
    &\det{\left( d_{j-k}-d_{j+k} \right)_{j,k=1}^{N}} = \frac{1}{2^{N}}\sum_{\rho_{1},\rho_{2}\in R(N)}(-1)^{(|\rho_{1}|+|\rho_{2}|)/2}D_{N}^{\rho_{1},\rho_{2}}(f), \\
    &\det{\left( d_{j-k}+d_{j+k-2} \right)_{j,k=1}^{N}} = \frac{1}{2^{N-2}}\sum_{\tau_{1},\tau_{2}\in T(N)}(-1)^{(|\tau_{1}|+|\tau_{2}|)/2}D_{N}^{\tau_{1}\tau_{2}}(f), \\
    &\det{\left(d_{j-k}-d_{j+k-1}\right)_{j,k=1}^{N}} = \frac{1}{2^{N}}\sum_{\sigma_{1},\sigma_{2}\in S(N)}(-1)^{(|\sigma_{1}|+|\sigma_{2}|+p(\sigma_{1})+p(\sigma_{2}))/2}D_{N}^{\sigma_{1}\sigma_{2}}(f). \\
\end{align*}
The minors appearing in the right hand sides above fit in the Toeplitz matrix generated by $f$ of order $2N+1$, $2N$ and $2N-1$, respectively, and the sums have $2^{2N-1}$ different terms, as in theorem \ref{th.hopflinks}.
\end{corollary}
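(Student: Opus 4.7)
The plan is to show that this corollary is a direct translation of Theorem \ref{th.hopflinks} into the determinantal language established earlier, so the work is essentially bookkeeping: substitute the Toeplitz/Toeplitz$\pm$Hankel formulas \eqref{intSp2N}--\eqref{intO2N+1} for the left-hand sides and the minor formula from Theorem \ref{th.minorsTH} for each summand on the right-hand side.

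First I would rewrite the three integrals appearing on the left of Theorem \ref{th.hopflinks}. By \eqref{intSp2N}, $\int_{Sp(2N)}f(U)dU=\det(d_{j-k}-d_{j+k})_{j,k=1}^{N}$; by \eqref{intO2N}, $\int_{SO(2N)}f(U)dU=\tfrac{1}{2}\det(d_{j-k}+d_{j+k-2})_{j,k=1}^{N}$; and by \eqref{intO2N+1}, $\int_{SO(2N+1)}f(U)dU=\det(d_{j-k}-d_{j+k-1})_{j,k=1}^{N}$. Next I would identify each integrand in the right-hand sums of Theorem \ref{th.hopflinks}: the first identity of Theorem \ref{th.minorsTH} gives exactly $\int_{U(N)}\chi^{\lambda}_{U(N)}(U^{-1})\chi^{\mu}_{U(N)}(U)f(U)dU = D_{N}^{\lambda,\mu}(f)$, since the hypothesis $d_{k}=d_{-k}$ on the Fourier coefficients is the same symmetry that makes $f(U)=f(U^{-1})$ in \eqref{fprod}, so the twisted $U(N)$ integrals appearing in Theorem \ref{th.hopflinks} are precisely the minors $D_{N}^{\rho_{1},\rho_{2}}(f)$, $D_{N}^{\tau_{1},\tau_{2}}(f)$ and $D_{N}^{\sigma_{1},\sigma_{2}}(f)$.

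Substituting and collecting prefactors then yields the three stated equalities. The symplectic and odd orthogonal cases match the constant $1/2^{N}$ directly. For $SO(2N)$, the factor $1/2$ from \eqref{intO2N} combines with the $1/2^{N-1}$ from Theorem \ref{th.hopflinks}: multiplying both sides by $2$ moves the constant from $1/2^{N-1}$ to $1/2^{N-2}$, giving the form stated in the corollary.

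There is no real obstacle here; the only points to verify are that the symmetry $d_{k}=d_{-k}$ indeed holds whenever $f$ arises as in \eqref{fprod}, so that the two formulations of the unitary minor in Theorem \ref{th.minorsTH} agree and the identification with $D_{N}^{\lambda,\mu}(f)$ is unambiguous, and the final remark that the minors $D_{N}^{\rho_{1},\rho_{2}}(f)$, $D_{N}^{\tau_{1},\tau_{2}}(f)$, $D_{N}^{\sigma_{1},\sigma_{2}}(f)$ indeed embed, respectively, in the Toeplitz matrices of orders $2N+1$, $2N$ and $2N-1$. The latter follows from the explicit row- and column-striking algorithm of \cite{BumpDiaconis,GGT}, since for $\rho\in R(N)$ we have $\rho_{1}\leq N$ and $\rho_{j}+N-j\leq 2N-1$, and analogous bounds hold for $T(N)$ and $S(N)$; the count of $2^{2N-1}$ distinct terms then follows from $|R(N)|=|S(N)|=2^{N}$, $|T(N)|=2^{N-1}$ together with the $(\lambda,\mu)\leftrightarrow(\mu,\lambda)$ symmetry of $D_{N}^{\lambda,\mu}(f)$ implied by $d_{k}=d_{-k}$, exactly as recorded after Theorem \ref{th.hopflinks}.
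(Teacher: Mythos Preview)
Your proposal is correct and is exactly the approach the paper takes: the corollary is stated immediately after Theorem \ref{th.hopflinks} with the remark that it is simply the translation of that theorem via the determinantal identities \eqref{intUN}--\eqref{intO2N+1} and Theorem \ref{th.minorsTH}, and no further proof is given. Your handling of the prefactor in the $SO(2N)$ case and of the embedding and counting remarks matches the paper's reasoning.
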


For example, taking $N=2$ in the first identity above we obtain the expansion
\begin{align*}
    2\begin{vmatrix} d_0-d_2 & d_1-d_3 \\ d_1-d_3 & d_0-d_4 \end{vmatrix} = &\begin{vmatrix} d_0 & d_1 \\ d_1 & d_0 \end{vmatrix} - \begin{vmatrix} d_2 & d_1 \\ d_3 & d_0 \end{vmatrix} + \begin{vmatrix} d_3 & d_0 \\ d_4 & d_1 \end{vmatrix} - \begin{vmatrix} d_1 & d_2 \\ d_4 & d_1 \end{vmatrix} \\ +& \begin{vmatrix} d_1 & d_0 \\ d_4 & d_3 \end{vmatrix} - \begin{vmatrix} d_0 & d_1 \\ d_3 & d_2 \end{vmatrix} + \begin{vmatrix} d_0 & d_3 \\ d_3 & d_0 \end{vmatrix} - \begin{vmatrix} d_3 & d_2 \\ d_4 & d_3 \end{vmatrix},
\end{align*}
where all the determinants in the right hand side above are minors of the Toeplitz matrix $(d_{j-k})_{j,k=1}^{5}$. Analogous computations lead to expansions of minors of Toeplitz$\pm$Hankel matrices as sums of minors of Toeplitz matrices (equivalently, expansions of twisted integrals over $Sp(2N)$, $SO(2N)$ or $SO(2N+1)$ in terms of twisted integrals over $U(N)$). However, the resulting expressions are rather cumbersome and we do not pursue this road further.

\subsection{Gessel-type identities}

Another possibility for expressing integrals over the classical groups in terms of symmetric functions is available, in the form of Schur function series. A well known example of this is the classical identity of Gessel for Toeplitz determinants \cite{Gessel}. This, as well as generalizations for Toeplitz$\pm$Hankel determinants and minors of these matrices, is the content of the next theorem.

Let us denote by $\mathfrak{s}^{\nu}_{G(N)}(x)$ the Schur, symplectic Schur or even/odd orthogonal Schur symmetric function indexed by the partition $\nu$ for $G(N)=U(N),Sp(2N),SO(2N),SO(2N+1)$ respectively, for this theorem only. We also denote here and in the following by $s_{\nu/\mu}$ the skew Schur polynomial indexed by the skew shape $\nu/\mu$, see \cite{MacDonald} for instance.

\begin{theorem} \label{thgessel}
Let $x=(x_{1},x_{2},\dots)$ be a set of variables, and consider the function
\begin{equation*}
    H(x;e^{i\theta}) = \prod_{j=1}^{\infty}\frac{1}{(1-x_{j}e^{i\theta})}.
\end{equation*}
The following Schur functions series expansions hold
\begin{align}
    &\int_{G(N)}H(x;U)dU = \sum_{l(\nu)\leq N} s_{\nu}(x)\mathfrak{s}^{\nu}_{G(N)}(x), \label{gessel} \\
    &\int_{G(N)}\chi^{\mu}_{G(N)}(U)H(x;U)dU = \sum_{l(\nu)\leq N}s_{\nu/\mu}(x)\mathfrak{s}^{\nu}_{G(N)}(x),  \label{gessel1} \\
    &\int_{G(N)}\chi^{\lambda}_{G(N)}(U^{-1})\chi^{\mu}_{G(N)}(U)H(x;U)dU = \begin{dcases} \sum_{l(\nu)\leq N} s_{\nu/\lambda}(x)s_{\nu/\mu}(x),\qquad G(N)=U(N), \\ \sum_{l(\nu)\leq N}\sum_{\kappa} b^{\kappa}_{\lambda\mu}s_{\nu/\kappa}(x)\mathfrak{s}^{\nu}_{G(N)}(x),\qquad \textrm{rest of }G(N), \end{dcases} \label{gessel2}
\end{align}
where the coefficients $b^{\kappa}_{\lambda\mu}$ can be expressed in terms of Littlewood-Richardson coefficients $c_{\sigma\tau}^{\lambda}$ \cite{MacDonald} by the following formula 
\begin{equation*}
    b^{\kappa}_{\lambda\mu}=\sum_{\sigma,\rho,\tau}c_{\sigma\tau}^{\lambda}c_{\rho\tau}^{\mu}c_{\sigma\rho}^{\kappa}.
\end{equation*}
The same expansions hold if one replaces $H$ by the function 
\begin{equation}  \label{functionE}
    E(x;e^{i\theta}) = \prod_{j=1}^{\infty}(1+x_{j}e^{i\theta}),
\end{equation}
after transposing the partitions indexing all the symmetric functions in the above identities.
\end{theorem}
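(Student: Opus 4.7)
My plan is to establish the three identities in sequence, starting from \eqref{gessel}, deducing \eqref{gessel1} by inserting a single character, and obtaining \eqref{gessel2} by further character insertion. The $G(N)=U(N)$ case follows from two clean applications of the Cauchy identity, while the remaining classical groups require combining Cauchy with the branching rule from $U(2N)$ and the Koike--Terada universal-character formalism.

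For $G(N) = U(N)$ I would apply the Cauchy identity $\prod_{i,j}(1-x_iy_j)^{-1} = \sum_\nu s_\nu(x) s_\nu(y)$ separately to $\prod_{j,k}(1-x_j e^{i\theta_k})^{-1}$ and $\prod_{j,k}(1-x_j e^{-i\theta_k})^{-1}$, obtaining
\[
H(x;U) = \sum_{l(\rho),l(\sigma) \leq N} s_\rho(x) s_\sigma(x) \chi^\rho_{U(N)}(U) \chi^\sigma_{U(N)}(U^{-1}).
\]
Multiplying by $\chi^\lambda_{U(N)}(U^{-1}) \chi^\mu_{U(N)}(U)$, grouping the four characters as $[\chi^\mu \chi^\rho](U) \cdot [\chi^\lambda \chi^\sigma](U^{-1})$, expanding each product via the Littlewood--Richardson rule, and integrating using orthogonality $\int_{U(N)} \chi^\alpha(U)\chi^\beta(U^{-1})\,dU = \delta_{\alpha\beta}$ collapses the quadruple sum. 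Recognising $\sum_\rho c^\nu_{\mu\rho} s_\rho(x) = s_{\nu/\mu}(x)$ packages the answer as $\sum_{l(\nu) \leq N} s_{\nu/\lambda}(x) s_{\nu/\mu}(x)$, which is the $U(N)$ case of \eqref{gessel2}; setting one or both of $\lambda,\mu$ to $\varnothing$ then gives \eqref{gessel1} and \eqref{gessel} for $U(N)$.

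For $G(N) \in \{Sp(2N), SO(2N), SO(2N+1)\}$ the eigenvalues come in pairs $(e^{i\theta_k}, e^{-i\theta_k})$ and $H(x;U)$ cannot be split cleanly into two independent Cauchy products. Instead I would apply Cauchy to the full set of $2N$ eigenvalues jointly, obtaining $H(x;U) = \sum_{l(\nu) \leq 2N} s_\nu(x)\, s_\nu(e^{i\theta_1}, e^{-i\theta_1}, \dots, e^{i\theta_N}, e^{-i\theta_N})$, and invoke Littlewood's branching rule from $U(2N)$ to $G(N)$ to rewrite each $s_\nu$ on the $2N$ eigenvalues as a combination of $G(N)$-characters. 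Integration against $\int_{G(N)} \chi^\mu_{G(N)}(U)\,dU = \delta_{\mu,\varnothing}$, combined with a Littlewood-type symmetric-function identity that packages the resulting Schur series over branching-imposed partition shapes as $\sum_{l(\nu)\le N}s_\nu(x)\mathfrak{s}^\nu_{G(N)}(x)$ via the Koike--Terada formalism, then yields \eqref{gessel}. Inserting $\chi^\mu_{G(N)}$ and passing through the same steps produces \eqref{gessel1}. Finally, \eqref{gessel2} follows from \eqref{gessel1} by observing that, for $G(N) \neq U(N)$, self-duality $\chi^\lambda_{G(N)}(U^{-1}) = \chi^\lambda_{G(N)}(U)$ gives
\[
\chi^\lambda_{G(N)}(U^{-1}) \chi^\mu_{G(N)}(U) = \sum_\kappa b^\kappa_{\lambda\mu} \chi^\kappa_{G(N)}(U),
\]
where $b^\kappa_{\lambda\mu}$ is the Newell--Littlewood coefficient governing the stable product of $G(N)$-characters; applying \eqref{gessel1} to each resulting term yields the claim. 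The $E$-version is obtained by replacing Cauchy by its dual $\prod_{i,j}(1+x_iy_j) = \sum_\nu s_\nu(x) s_{\nu'}(y)$, which conjugates every partition indexing a Schur function while leaving the structural argument intact.

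The main obstacle is the $G(N) \neq U(N)$ case of \eqref{gessel}: producing the $\mathfrak{s}^\nu_{G(N)}(x)$-factor on the right-hand side requires a nontrivial symmetric-function identity rearranging a sum of Schur functions over partitions of prescribed shape (even-column partitions for $Sp$, and analogous constraints for the orthogonal groups) into the $\sum_\nu s_\nu \mathfrak{s}^\nu_{G(N)}$-series. This step is implicit in the Koike--Terada universal-character machinery but has no analogue in the elementary $U(N)$ argument.
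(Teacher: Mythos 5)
Your $U(N)$ argument is correct and is a genuinely different route from the paper's: you expand $H(x;U)$ via two applications of the Cauchy identity into a double sum of characters, multiply by the inserted characters, and collapse everything with Littlewood--Richardson plus orthogonality, landing directly on $\sum_{l(\nu)\leq N}s_{\nu/\lambda}(x)s_{\nu/\mu}(x)$. The paper instead proves the $U(N)$ case (and all the others) operator-theoretically: it uses the Wiener--Hopf-type factorization $T(ab)=T(a)T(b)$ (respectively $TH(ab)=T(a)TH(b)$ for the Toeplitz$\pm$Hankel matrices of \eqref{intSp2N}--\eqref{intO2N+1}), applies Cauchy--Binet, and identifies the resulting minors as Schur, symplectic Schur, or orthogonal Schur functions via the Jacobi--Trudi identities \eqref{JTschur}--\eqref{JToortd}. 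Your version is arguably cleaner for $U(N)$ and yields \eqref{gessel2} without invoking the skew Jacobi--Trudi identity; your treatment of \eqref{gessel2} for the remaining groups via the multiplication rule with the coefficients $b^{\kappa}_{\lambda\mu}$ coincides with the paper's.

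However, for $G(N)=Sp(2N),SO(2N),SO(2N+1)$ your proof of \eqref{gessel} and \eqref{gessel1} has a genuine gap, and it is exactly the one you flag at the end. Two separate problems arise. First, after branching $s_\nu(z,z^{-1})$ into $G(N)$-characters via \eqref{restschursp}--\eqref{restschuroort} you cannot simply integrate using $\int_{G(N)}\chi^{\alpha}_{G(N)}\,dU=\delta_{\alpha\varnothing}$: the universal characters $sp_\alpha$, $o_\alpha$ appearing in those expansions with $l(\alpha)>N$ are not irreducible characters at finite $N$ --- by the Koike--Terada modification rules they vanish or equal $\pm$ an irreducible character, possibly $\pm\chi^{\varnothing}_{G(N)}$, so they contribute nontrivially to the integral. (This is precisely why the paper only runs this character-orthogonality argument in the $N\to\infty$ limit, in the proof of theorem \ref{t.avg}.) Second, even granting that, you would obtain a sum of Schur functions $s_\nu(x)$ over a constrained set of shapes, and converting that into the claimed form $\sum_{l(\nu)\leq N}s_\nu(x)\mathfrak{s}^{\nu}_{G(N)}(x)$ requires a bounded Littlewood-type identity that you assert but do not prove. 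The paper's route bypasses both issues entirely: in the Cauchy--Binet expansion of $\det TH_N(ab)$ the second factor's minors \emph{are} the functions $\mathfrak{s}^{\nu}_{G(N)}(x)$ with $l(\nu)\leq N$, read off directly from \eqref{JTsymp}--\eqref{JToortd}, so no rearrangement identity and no modification rules are needed. To complete your proof you would either have to supply that identity (with the finite-$N$ corrections) or switch to the determinantal argument for the non-unitary groups.
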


We remark the fact that the choice of functions above is without loss of generality. Indeed, recall that the Fourier coefficients of the functions $H(x;e^{i\theta})$ and $E(x;e^{i\theta})$ are the complete homogeneous symmetric functions $h_{k}(x)$ and the elementary symmetric functions $e_{k}(x)$ \eqref{ek} respectively. Both of these families are algebraically independent, and thus one can specialize them to any given values to recover any function with arbitrary Fourier coefficients from $H(x;e^{i\theta})$ or $E(x;e^{i\theta})$.

A similar proof of identity \eqref{gessel} for $G(N)=Sp(2N),SO(2N)$ can be found in \cite{Betea}. See also \cite{IW98}-\cite{Balantekin:2001id} for earlier related results. Different Schur function series for the integrals \eqref{gessel} can also be found in \cite{BaikRains,vandeLeur:2016yuq}.

\begin{proof}
The expansion \eqref{gessel} for $G(N)=U(N)$ is the aforementioned result of Gessel \cite{Gessel}, which extends easily to the other groups. We sketch the proof for convenience of the reader. Denote by $T(f)$ the infinite Toeplitz matrix generated by a function $f$. It is well known that if two functions $a,b$ satisfy
\begin{equation}  \label{a-b+}
    a(e^{i\theta}) = \sum_{k\leq 0} a_{k}e^{ik\theta},\qquad b(e^{i\theta}) = \sum_{k\geq 0} b_{k}e^{ik\theta} \qquad (z\in\mathbb{T})
\end{equation}
then the infinite Toeplitz matrix generated by the function $ab$ satisfies $T(ab) = T(a)T(b)$. It follows from the Cauchy-Binet formula that $\det{T_{N}(ab)}$ is then a sum over minors of the Toeplitz matrices of sizes $N\times\infty$ and $\infty\times N$ generated by $a$ and $b$, respectively, where $T_{N}(ab)$ denotes the Toeplitz matrix of size $N$ generated by $ab$. The proof is completed upon noting that if $a(e^{-i\theta})=b(e^{i\theta})=H(x;e^{i\theta})$ then by the Jacobi-Trudi identity \eqref{JTschur} the minors appearing in the sum are precisely the Schur polynomials appearing in \eqref{gessel}, since the Fourier coefficients of the function $H(x;e^{i\theta})$ are the complete homogeneous symmetric polynomials $h_{k}(x)$. The proof for the other groups is analogous: now the factorization
\begin{equation*}
    TH(ab) = T(a)TH(b)
\end{equation*}
holds for each of the Toeplitz$\pm$Hankel matrices $TH(b)$ appearing in \eqref{intSp2N}-\eqref{intO2N+1} and functions $a,b$ satisfying \eqref{a-b+}. The result then follows from the Cauchy-Binet formula and the Jacobi-Trudi identities \eqref{JTsymp}-\eqref{JToort} (some extra computation is needed in the odd orthogonal case, as in corollary \ref{cofacts}).

Identities \eqref{gessel1}, and \eqref{gessel2} for $U(N)$, follow analogously from the generalization of Jacobi-Trudi formula for skew Schur polynomials. Identity \eqref{gessel2} for the rest of the groups follows from \eqref{gessel1} and the fact that the characters $\chi_{G(N)}^{\lambda}$ follow the multiplication rule \cite{Littlewood58}
\begin{equation}
    \chi_{G(N)}^{\lambda}(U)\chi_{G(N)}^{\mu}(U)=\sum_{\nu}b_{\lambda\mu}^{\nu}\chi_{G(N)}^{\nu}(U)  \label{multrule}
\end{equation}
for $G(N)=Sp(2N),SO(2N)$ and $SO(2N+1)$ (recall that $\chi_{G(N)}^{\lambda}(U)=\chi_{G(N)}^{\lambda}(U^{-1})$ for such groups).

The corresponding identities involving the function $E$ follow analogously, using the dual Jacobi-Trudi identities instead (or, equivalently, using the involution $h_{k}\mapsto e_{k}$) in \eqref{gessel}-\eqref{gessel2}).
\end{proof}

We will be interested in the following in computing the $N\rightarrow\infty$ limit of the integrals $\int_{G(N)}f(U)dU$. This can be achieved by means of the strong Szeg\H o limit theorem and its generalization to the rest of the groups $G(N)$ due to Johansson \eqref{szego}-\eqref{szegooort}, or equivalently, by means of theorem \ref{thgessel} and the Cauchy identities \eqref{cauchyun}-\eqref{cauchyoort} (see section \ref{s.CS} below for such explicit computations). It turns out that the twisted integrals with characters on the integrand share a common asymptotic behavior.

\begin{theorem} \label{t.avg}
The averages of characters over any of the groups $G(N)$ satisfy 
\begin{equation}
    \lim_{N\rightarrow\infty}\frac{\int_{G(N)}\chi^{\lambda}_{G(N)}(U^{-1})\chi^{\mu}_{G(N)}(U)H(x;U)dU}{\int_{G(N)}H(x;U)dU} = \sum_{\nu}s_{\lambda/\nu}(x)s_{\mu/\nu}(x).  \label{averagesN}
\end{equation}
\end{theorem}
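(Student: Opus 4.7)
For $G(N)=U(N)$, by \eqref{gessel2} of Theorem~\ref{thgessel} the numerator equals $\sum_{l(\nu)\le N}s_{\nu/\lambda}(x)s_{\nu/\mu}(x)$ and the denominator is the $\lambda=\mu=\varnothing$ instance. I would pass to the $N\to\infty$ limit (where the restriction $l(\nu)\le N$ becomes vacuous) and apply the skew Cauchy identity $\sum_\nu s_{\nu/\lambda}(x)s_{\nu/\mu}(y)=\prod_{i,j}(1-x_iy_j)^{-1}\sum_\tau s_{\lambda/\tau}(y)s_{\mu/\tau}(x)$ at $y=x$, together with the classical Cauchy identity $\sum_\nu s_\nu(x)^2=\prod_{i,j}(1-x_ix_j)^{-1}$ for the denominator. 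The common kernel cancels, leaving exactly $\sum_\nu s_{\lambda/\nu}(x)s_{\mu/\nu}(x)$.

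For the remaining groups I would first invoke the self-duality $\chi^\lambda_{G(N)}(U^{-1})=\chi^\lambda_{G(N)}(U)$ and the multiplication rule \eqref{multrule} to rewrite the numerator as $\sum_\kappa b^\kappa_{\lambda\mu}\int_{G(N)}\chi^\kappa_{G(N)}(U)H(x;U)\,dU$. Applying \eqref{gessel1}, the ratio in \eqref{averagesN} becomes
\[\sum_\kappa b^\kappa_{\lambda\mu}\cdot\frac{\sum_{l(\nu)\le N}s_{\nu/\kappa}(x)\mathfrak{s}^\nu_{G(N)}(x)}{\sum_{l(\nu)\le N}s_\nu(x)\mathfrak{s}^\nu_{G(N)}(x)},\]
where $\mathfrak{s}^\nu_{G(N)}$ denotes the appropriate universal character lifted to a symmetric function in $x$, as in Theorem~\ref{thgessel}. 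A direct expansion of the Newell--Littlewood formula $b^\kappa_{\lambda\mu}=\sum_{\sigma,\rho,\tau}c^\lambda_{\sigma\tau}c^\mu_{\rho\tau}c^\kappa_{\sigma\rho}$ combined with the definition of skew Schur functions gives the algebraic identity $\sum_\kappa b^\kappa_{\lambda\mu}s_\kappa(x)=\sum_\nu s_{\lambda/\nu}(x)s_{\mu/\nu}(x)$, so it suffices to prove that the inner ratio tends to $s_\kappa(x)$ as $N\to\infty$ for each fixed $\kappa$.

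The main step is this last limit, which I would derive from a stable Cauchy identity of Koike--Terada type for the mixed pairing, of the shape $\sum_\nu s_\nu(x)\mathfrak{s}^\nu_G(y)=B_G(y)\prod_{i,j}(1-x_iy_j)^{-1}$, with $B_G(y)$ depending only on $y$ (for instance $B_{Sp}(y)=\prod_i(1-y_i^2)/\prod_{i<j}(1-y_iy_j)$ in the symplectic case, and analogous formulas for the two orthogonal cases). Applying the Hall-adjoint operator $s_\kappa^\perp$ in the $x$ variable leaves $B_G(y)$ untouched and converts the Cauchy kernel via $s_\kappa^\perp\prod_{i,j}(1-x_iy_j)^{-1}=s_\kappa(y)\prod_{i,j}(1-x_iy_j)^{-1}$, producing the skew version $\sum_\nu s_{\nu/\kappa}(x)\mathfrak{s}^\nu_G(y)=s_\kappa(y)\sum_\nu s_\nu(x)\mathfrak{s}^\nu_G(y)$. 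Setting $y=x$ and dividing yields $s_\kappa(x)$, and summing over $\kappa$ against $b^\kappa_{\lambda\mu}$ completes the proof; the convergence of the unrestricted symmetric function series (and hence the interchange of limit and summation) is ensured by absolute convergence on compact subsets where the $x_j$ are bounded away from the unit circle.
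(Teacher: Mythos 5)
Your treatment of $U(N)$ (via \eqref{gessel2} and the skew Cauchy identity) and your overall reduction of the two-character case to a single character through $\chi^{\lambda}_{G(N)}(U^{-1})=\chi^{\lambda}_{G(N)}(U)$, the multiplication rule \eqref{multrule}, and the identity $\sum_{\kappa}b^{\kappa}_{\lambda\mu}s_{\kappa}=\sum_{\nu}s_{\lambda/\nu}s_{\mu/\nu}$ are all correct and match the paper's structure. The gap is in your "main step". The stable Cauchy identity you invoke, $\sum_{\nu}s_{\nu}(x)\mathfrak{s}^{\nu}_{G}(y)=B_{G}(y)\prod_{i,j}(1-x_{i}y_{j})^{-1}$ with $B_{G}$ depending only on $y$, does not exist: since the $s_{\nu}(x)$ are linearly independent, comparing coefficients on both sides would force $\mathfrak{s}^{\nu}_{G}(y)=B_{G}(y)\,s_{\nu}(y)$ for every $\nu$, which is absurd. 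The actual identities \eqref{cauchysymp}--\eqref{cauchyoort} place the deformation factor $\prod_{i<j}(1-y_{i}y_{j})$ (resp.\ $\prod_{i\leq j}$) on the \emph{Schur-function} side of the pairing, i.e.\ in the variables on which your $s_{\kappa}^{\perp}$ acts. Consequently $s_{\kappa}^{\perp}$ does not pass through it: by the coproduct rule $s_{\kappa}^{\perp}(fg)=\sum_{\rho}(s_{\kappa/\rho}^{\perp}f)(s_{\rho}^{\perp}g)$ you pick up correction terms already for $\kappa=(1)$, where $s_{(1)}^{\perp}$ applied to the symplectic Cauchy kernel produces an extra $-p_{1}$ in the Schur variables. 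So the two-alphabet skew identity $\sum_{\nu}s_{\nu/\kappa}(x)\mathfrak{s}^{\nu}_{G}(y)=s_{\kappa}(y)\sum_{\nu}s_{\nu}(x)\mathfrak{s}^{\nu}_{G}(y)$ is false, and your derivation of the inner limit collapses.

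The conclusion you are after -- that the ratio $\sum_{l(\nu)\leq N}s_{\nu/\kappa}(x)\mathfrak{s}^{\nu}_{G(N)}(x)\big/\sum_{l(\nu)\leq N}s_{\nu}(x)\mathfrak{s}^{\nu}_{G(N)}(x)$ tends to $s_{\kappa}(x)$ -- is nevertheless true, but it needs a different argument. The paper gets it by expanding $H(x;U)$ with the ordinary Cauchy identity \eqref{cauchyun} into $\sum_{\nu}s_{\nu}(x)s_{\nu}(\text{eigenvalues})$, decomposing $s_{\nu}$ restricted to $G(N)$ into irreducible characters via the Koike--Terada branching rules \eqref{restschursp}--\eqref{restschuroort}, and then letting $N\to\infty$ so that orthogonality of characters kills all terms except $\alpha=\kappa$, leaving $s_{\kappa}(x)\sum^{\sim}_{\beta}s_{\beta}(x)$; the factor $\sum^{\sim}_{\beta}s_{\beta}(x)$ is recognized as $\lim_{N}\int_{G(N)}H(x;U)\,dU$ and cancels. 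If you want to keep your Hall-adjoint language, you would have to carry the correction terms from $\prod_{i<j}(1-y_{i}y_{j})$ and show they disappear only after the specialization and the large-$N$ limit, which is essentially redoing the character-orthogonality computation.
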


Note that if there is only one character in the integrand above the right hand side simplifies to a single Schur polynomial. As before, the theorem also holds for the function $E(x;e^{i\theta})=\prod_{j=1}^{\infty}(1+x_{j}e^{i\theta})$, after transposing the partitions indexing the skew Schur polynomials above.

\begin{proof}
If $G(N)=U(N)$, the result (that appeared first in \cite{GGT}) is a consequence of \eqref{gessel2} and the identity \cite{MacDonald} 
\begin{equation*}
    \sum_{\nu}s_{\nu/\lambda}(x)s_{\nu/\mu}(x) = \sum_{\nu}s_{\nu}(x)s_{\nu}(x)\sum_{\nu}s_{\lambda/\nu}(x)s_{\mu/\nu}(x),
\end{equation*}
where the sums run over all partitions $\nu$.

Suppose now that $G(N)=Sp(2N),SO(2N),SO(2N+1)$, and start by considering a single character in the integral. Then, using the Cauchy identity \eqref{cauchyun} and the restriction rules \eqref{restschursp}-\eqref{restschuroort} we obtain 
\begin{align*}
    \int_{G(N)}\chi^{\mu}_{G(N)}(U)H(x;U)dU = \sum_{l(\nu)\leq N}\sum_{\alpha}\sum^{\sim}_{\beta}c^{\nu}_{\alpha\beta}s_{\nu}(x)\int_{G(N)}\chi^{\mu}_{G(N)}(U)\chi^{\alpha}_{G(N)}(U)dU,
\end{align*}
where $\sum\limits^{\sim}$ denotes that the sum on $\beta$ runs over all even partitions for $G(N)=SO(2N),SO(2N+1)$, and over all partitions whose conjugate is even, for $G(N)=Sp(2N)$ (we say that a partition is even if it has only even parts), and the sum on $\alpha$ runs over all partitions. Taking $N\rightarrow\infty$ in the above expression and using the orthogonality of the characters with respect to Haar measure we obtain
\begin{equation}  \label{averagesNsingle}
    \lim_{N\rightarrow\infty}\int_{G(N)}\chi^{\mu}_{G(N)}(U)H(x;U)dU = s_{\mu}(x)\sum_{\beta}^{\sim}s_{\beta}(x).
\end{equation}
This gives the desired result upon noting that the sum on the right hand side is precisely the $N\rightarrow\infty$ limit of the integral $\int_{G(N)}H(x;U)dU$. The result for the integral \eqref{averagesN} twisted by two characters then follows from \eqref{averagesNsingle} and the multiplication rule \eqref{multrule}.
\end{proof}

In particular, we see that the $N\rightarrow\infty$ limit of the average is independent of the particular group $G(N)$ considered. This was noted in \cite{Dehaye} for a single character, and while this automatically implies the same for two characters for $G(N)=Sp(2N),SO(2N),SO(2N+1)$ (recall that $\chi_{G(N)}^{\lambda}(U^{-1})=\chi_{G(N)}^{\lambda}(U)$ for these groups), this is not immediate for $G(N)=U(N)$.

Note also that no mention of the regularity of the function $f$ has been made in the proof of theorem \ref{t.avg}. Indeed, only standard tools from the theory of symmetric functions are needed in order to obtain the result. This implies that the conclusion of the corollary holds for any integrable function, in particular for functions with Fisher-Hartwig singularities \cite{DIKrev}. We thus see that the possible change of behaviour in the large $N$ limit only affects the integrals $\int_{G(N)}f(U)dU$, and has no effect on the averaged integrals \eqref{averagesN}. See \cite{GGT} for more details on this.

\section{The case of Gaussian entries or $f(z)=\Theta (z)$}

We particularize the previous result to the case of a completely solvable model, for both finite and large $N$. It turns out to be related to many subjects: $G(N)$ Chern-Simons theory on $S^{3}$, the skein of the annulus and Hopf links. The corresponding Toeplitz and Toeplitz$\pm$Hankel matrices also appear in other contexts, as they are Fourier and sine/cosine transforms matrices.

\subsection{Partition functions of Chern-Simons theory on $S^{3}$}

\label{s.CS}

Let $q$ be a parameter satisfying $|q|<1$, and consider Jacobi's third theta function
\begin{align}
    &\sum_{n\in\mathbb{Z}} q^{n^{2}/2}e^{in\theta} = (q;q)_{\infty}\prod_{k=1}^{\infty}(1+q^{k-1/2}e^{i\theta})(1+q^{k-1/2}e^{-i\theta}), \label{fourtheta}
\intertext{where $(q;q)_{\infty}=\prod_{j=1}^{\infty}(1-q^{j})$. We then define $f(U)$ for $U\in G(N)$ as in \eqref{fprod}, with $f$ being the function}
    &\Theta(e^{i\theta}) = E(q^{1/2},q^{3/2},\dots;e^{i\theta}),  \label{thetaf}
\end{align}
where $E$ is given by \eqref{functionE}. For this choice of function, the integral 
\begin{align*}
    &Z_{G(N)} = (q;q)_{\infty}^{N}\int_{G(N)}\Theta(U)dU
\intertext{recovers the partition function of Chern-Simons theory on $S^{3}$ with symmetry group $G(N)$, and the coefficients in the corresponding Toeplitz and Toeplitz$\pm$Hankel matrices are $d_{k}=q^{k^{2}/2}$, according to \eqref{fourtheta}. Moreover, the averages}
    &\langle W_{\mu}\rangle_{G(N)} = \frac{1}{Z_{G(N)}}\int_{G(N)}\chi _{G(N)}^{\mu }(U)\Theta(U)dU
\intertext{and}
    &\langle W_{\lambda\mu}\rangle_{G(N)} = \frac{1}{Z_{G(N)}} \int_{G(N)}\chi _{G(N)}^{\lambda }(U^{-1})\chi _{G(N)}^{\mu }(U)\Theta(U)dU,
\end{align*}
where $l(\lambda),l(\mu)\leq N$, are, respectively, the Wilson loop and Hopf link of the theory. As we will see below, these matrix models are exactly solvable, and the formalism of Toeplitz and Toeplitz$\pm$Hankel determinants and minors allows an elementary and unified approach for their computation.

\subsubsection{Unitary group}

We start by reviewing the simplest and well-known case. We obtain from the
determinant expression \eqref{intUN} 
\begin{align*}
Z_{U(N)} =\det {(q^{(j-k)^{2}/2})_{j,k=1}^{N}} =q^{\sum_{j=1}^{N}j^{2}}\det {%
(q^{-jk})_{j,k=1}^{N}} =
\prod_{j<k}(1-q^{k-j})=\prod_{j=1}^{N-1}(1-q^{j})^{N-j},
\end{align*}
where the second identity follows from the fact that the second determinant
above is essentially the determinant of the matrix $M_{U(N)}(z)$
\eqref{detun}, with $z_{j}=q^{j-1}$.

The large-$N$ limit of this expression is given by Szeg\H o's theorem %
\eqref{szego}, which shows that as $N\rightarrow\infty$ 
\begin{align*}
Z_{U(N)} \sim \exp {\left(-N\sum_{k=1}^{\infty}\frac{1}{k}\frac{q^{k}}{%
1-q^{k}}+\sum_{k=1}^{\infty }\frac{1}{k}\frac{q^{k}}{(1-q^{k})^{2}}\right) }.
\label{Nunitszego}
\end{align*}
The same formula can be obtained using Cauchy's identity \eqref{cauchyun} in
formula \eqref{gessel}, as noted in \cite{TWwords}.

\subsubsection{Symplectic group}

We can proceed analogously for the rest of the groups. The determinants will
now be specializations of the corresponding matrix $M_{G(N)}(z)$ with $z_{j}
= q^{j}$, which can be computed explicitly by means of the formulas %
\eqref{detun}-\eqref{detone}. For the symplectic group we obtain 
\begin{align*}
Z_{Sp(2N)} &= \det\left(q^{(j-k)^{2}/2}-q^{(j+k)^{2}/2}\right)_{j,k=1}^{N}
=q^{\sum_{j=1}^{N}j^{2}}\det(q^{-jk}-q^{jk})_{j,k=1}^{N} \\
&=\prod_{j=1}^{N-j}(1-q^{j})^{N-j}\prod_{j=3}^{N}(1-q^{j})^{[\frac{j-1}{2}%
]}\prod_{j=N+1}^{2N-1}(1-q^{j})^{[\frac{2N+1-j}{2}]}%
\prod_{j=1}^{N}(1-q^{2j}) = \prod_{j=1}^{2N}(1-q^{j})^{\epsilon(j)},
\end{align*}
where 
\begin{equation*}
\epsilon(j) = \begin{dcases} N-\frac{j}{2}-\frac{1}{2},\qquad
&j\textrm{ odd}\,1\leq j\leq N, \\ N-\frac{j}{2},\qquad &j\textrm{ even},\,
1\leq j\leq N,\\ N-\frac{j}{2}+\frac{1}{2},\qquad &j\textrm{ odd},\,N+1\leq
j\leq 2N, \\ N-\frac{j}{2}+1,\qquad &j\textrm{ even},\, N+1\leq j\leq 2N.
\end{dcases}
\end{equation*}
As with the unitary model, this result is exact and holds for every $N$, and
coincides with the expression obtained in \cite{Sinha:2000ap} for the large $N$
regime. We see that the partition function of the symplectic model is
obtained as the product of the partition function of the unitary model and
extra factors.

For the large-$N$ limit, we obtain from Johansson's generalization of Szeg%
\H{o}'s theorem \eqref{szegosymp} that as $N\rightarrow \infty$ 
\begin{align*}
Z_{Sp(2N)} \sim \exp {\left( -N\sum_{k=1}^{\infty }\frac{1}{k}\frac{q^{k}}{%
1-q^{k}}+\frac{1}{2}\sum_{k=1}^{\infty }\frac{1}{k}\frac{q^{k}}{(1-q^{k})^{2}%
}+\sum_{k=1}^{\infty }\frac{1}{2k}\frac{q^{k}}{1-q^{2k}}\right)}.
\end{align*}
Again, the same result is obtained using Cauchy's identity for symplectic
characters \eqref{cauchysymp} in equation \eqref{gessel}. Notice that in the
large $N$ limit, the partition function for the $Sp(2N)$ model is a factor
of the partition function of the $U(N)$ model, while precisely the opposite
occurred at finite $N$.

\subsubsection{Orthogonal groups}

Proceeding analogously, we see that by identity \eqref{detone} 
\begin{align*}
Z_{SO(2N)} &= \frac{1}{2}\det{\left(q^{(j-k)^{2}/2}+q^{(j+k-2)^{2}/2}%
\right)_{j,k=1}^{N}} \\
&=\prod_{j=1}^{N-1}(1-q^{j})^{N-j}\prod_{j=1}^{N-1}(1-q^{j})^{[\frac{j+1}{2}%
]}\prod_{j=N}^{2N-3}(1-q^{j})^{[\frac{2N-j-1}{2}]} =
\prod_{j=1}^{2N-3}(1-q^{j})^{\epsilon(j)},
\end{align*}
where 
\begin{equation*}
\epsilon(j)=\begin{dcases} N-\frac{j}{2}+\frac{1}{2},\qquad &j\textrm{
odd},\,1\leq j \leq N-1, \\ N-\frac{j}{2},\qquad &j\textrm{ even},\, 1\leq
N-1, \\ N-\frac{j}{2}-\frac{1}{2},\qquad &j\textrm{ odd},\, N\leq j \leq
2N-3, \\ N-\frac{j}{2}-1,\qquad &j\textrm{ even},\, N\leq j \leq 2N-3,
\end{dcases}
\end{equation*}
in agreement with \cite{Sinha:2000ap}. Again, the partition function contains
as a factor the partition function of the unitary model. For $SO(2N+1)$ we
have 
\begin{align*}
Z_{SO(2N+1)}&=\det {\left(q^{(j-k)^{2}/2}-q^{(j+k-1)^{2}/2}\right)
_{j,k=1}^{N}} \\
&= \prod_{j=1}^{N-1}(1-q^{j})^{N-j}\prod_{j=2}^{N}(1-q^{j})^{[\frac{j}{2}%
]}\prod_{j=N+1}^{2N-2}(1-q^{j})^{[\frac{2N-j}{2}]}%
\prod_{j=1}^{N}(1-q^{j-1/2}) \\
&= \prod_{j=1}^{2N-2}(1-q^{j})^{\epsilon(j)}\prod_{j=1}^{N}(1-q^{j-1/2}),
\end{align*}
where 
\begin{equation*}
\epsilon(j)=\begin{dcases} N-\frac{j}{2}-\frac{1}{2},\qquad &j\textrm{ odd},
1\leq j \leq 2N-2, \\ N-\frac{j}{2},\qquad &j\textrm{ even}, 1\leq j \leq
2N-2, \end{dcases}
\end{equation*}
in agreement with \cite{Sinha:2000ap}. We see once again that the partition
function can be seen as the partition function of the unitary model times an
extra factor. In this case, also factors with half-integer exponents $%
(1-q^{j/2})$ are present.

Let us also record here the value of the closely related integral \eqref{intO2N+1-} for this choice of function, for completeness. We have
\begin{align*}
    &(q;q)_{\infty}^{N}\int_{SO(2N+1)}\Theta(-U)dU  = \prod_{j=1}^{2N-3}(1-q^{j})^{\epsilon(j)}\prod_{j=1}^{N}(1+q^{j-1/2}) = Z_{SO(2N+1)} \prod_{j=1}^{N}\frac{(1+q^{j-1/2})}{(1-q^{j-1/2})},
\end{align*}
where $\epsilon(j)$ is as in $Z_{O(2N+1)}$.

For the large-$N$ limit, we obtain from Johansson's theorem \eqref{szegoeort},\eqref{szegooort} that as $N\rightarrow \infty$,

\begin{align*}
Z_{SO(2N)}& \sim \exp {\left( -N\sum_{k=1}^{\infty }\frac{1}{k}\frac{q^{k}}{%
1-q^{k}}+\frac{1}{2}\sum_{k=1}^{\infty }\frac{1}{k}\frac{q^{k}}{(1-q^{k})^{2}%
}-\sum_{k=1}^{\infty }\frac{1}{2k}\frac{q^{k}}{1-q^{2k}}\right) }, \\
Z_{SO(2N+1)}& \sim \exp {\left( -N\sum_{k=1}^{\infty }\frac{1}{k}\frac{q^{k}}{%
1-q^{k}}+\frac{1}{2}\sum_{k=1}^{\infty }\frac{1}{k}\frac{q^{k}}{(1-q^{k})^{2}%
}-\sum_{k=1}^{\infty }\frac{1}{2k-1}\frac{q^{k-1/2}}{1-q^{2k-1}}\right) }.
\end{align*}

One can verify directly from the expressions obtained that in the large $N$ limit we recover the partition function of $U(N)$ as the product of the partition functions of $Sp(2N)$ and $SO(2N)$, consistently with corollary \ref{thdets}.

\subsection{Gross-Witten-Wadia model}

We have seen in theorem \ref{thdets} that there is a non-trivial factorization property of matrix integrals. This identity is independent of the choice of function and thus hence is applicable to other models, such as the Gross-Witten-Wadia model \cite{Gross:1980he,Wadia:2012fr}. This is interesting in view of new interest and results on the model \cite{Okuyama:2017pil,Ahmed:2017lhl,Ahmed:2018gbt,Itoyama:2018gnh}.

Recall that the Gross-Witten-Wadia model is characterized by a symbol function 
\begin{equation*}
    f_{GWW}(z)=\exp \left( -\beta\left( z+z^{-1}\right) \right).
\end{equation*}
In particular, the third identity in theorem \ref{thdets} allows to translate results on the much more widely studied case of the unitary group to the $SO(2N+1)$ case. More explicitely, if we denote $Z_{G(N)}^{GWW}(\beta) = \int_{G(N)}f_{GWW}(U)dU$, we have
\begin{equation*}
    Z_{U(2N)}^{GWW}\left( \beta \right) = Z_{SO(2N+1)}^{GWW}\left( \beta \right) Z_{SO(2N+1)}^{GWW}\left( -\beta \right) 
\end{equation*}
Other relationships can be obtained. For example, the first and last identities in theorem \ref{thdets}, together with \eqref{szegosymp} and \eqref{szegoeort}, show that at large $N$
\begin{equation*}
    Z_{U(2N-1)}^{GWW}(\beta),Z_{U(2N)}^{GWW}(\beta) \sim Z_{Sp(2N)}^{GWW}(\beta)Z_{SO(2N)}^{GWW}(\beta).
\end{equation*}
Likewise, it follows from the Szeg\H o-Johannson theorem quoted in Appendix B that at large $N$
\begin{equation*}
    Z_{U(2N)}^{GWW}\left( \beta \right) = (Z_{SO(2N)}^{GWW}\left( \beta \right))^2 = (Z_{Sp(2N)}^{GWW}\left( \beta \right))^2.
\end{equation*}%
This relationship also has a $XX$ spin chain interpretation \cite{Viti:2016hty}. This is however modified in the usual double scaling limit \cite{Myers:1990pp,Forrester:2012xu}. At any rate, it seems that large $N$ results for the GWW model \cite{Ahmed:2017lhl,Ahmed:2018gbt} can be translated to the $SO(2N)$ and $Sp(2N)$ models. It would also be interesting to further use this relationship between partition functions, by taking into account the well-known connection of $Z_{U(2N)}^{GW}$ with Painlevé equations \cite{ForrWitte},\cite{ForrWitte2},\cite{Itoyama:2018gnh}.



\section{Insertion of characters, minors, modular matrices and Hopf link
expansions}

We now turn to computing Wilson loops and Hopf links of Chern-Simons theory
on $S^{3}$ with symmetry group $G(N)$, for each of the classical groups. Let us fix two partitions $\lambda$ and $\mu$ of lengths $l(\lambda),l(\mu)\leq N$ throughout the rest of the section.

\subsection{Unitary group}

The insertion of a Schur polynomial on the unitary model gives 
\begin{align*}
(q;q)_{\infty}^{N}\int_{U(N)}s_{\mu }(U)\Theta (U)dU =
\det(q^{(j-k-\mu_{k}^{r})^{2}/2})_{j,k=1}^{N} =
q^{\sum_{j=1}^{N}\left(\mu_{j}^{2}/2+(N-j+1)\mu_{j}+j^{2}\right)}\det%
\left(q^{-j(k+\mu_{k}^{r})}\right)_{j,k=1}^{N}.
\end{align*}
We see that the determinant in the right hand side above is now essentially
the minor $M_{U(N)}^{\mu}(z)$ in \eqref{unchar} after setting $z_{j}=q^{-j}$%
. This leads to 
\begin{equation}
\langle W_{\mu}\rangle_{U(N)} =
q^{\sum_{j=1}^{N}\mu_{j}\left(\mu_{j}/2-j+1\right)}s_{\mu}(1,q,%
\dots,q^{N-1}),  \label{unitwilson}
\end{equation}
which, up to a prefactor of a power of $q$, recovers the original result in 
\cite{Dolivet:2006ii}. We recall that the above specialization of the Schur
polynomial is a polynomial on $q$ with positive and integer coefficients 
\cite{MacDonald}.

Inserting two Schur polynomials in the integral we obtain%
\begin{align*}
& (q;q)_{\infty }^{N}\int_{U(N)}s_{\lambda }(U^{-1})s_{\mu }(U)\Theta
(U)dU=\det (q^{(j+\lambda _{j}^{r}-k-\mu _{k}^{r})^{2}/2})_{j,k=1}^{N} \\
& =q^{\sum_{j=1}^{N}\left( \lambda _{j}^{2}/2+\mu _{j}^{2}/2+(N-j)(\lambda
_{j}+\mu _{j})+(j-1)^{2}\right) }\det (q^{-(N-j+\lambda _{j})(N-k+\mu
_{k})})_{j,k=1}^{N}.
\end{align*}%
The determinant is now a minor of $M_{U(N)}^{\lambda }(z)$, obtained by
striking some of its rows. That is, a minor obtained by striking rows 
\textit{and} columns of the Vandermonde matrix $M_{U(N)}(1,q,\dots,q^{N-1})$, as noted in \cite
{MortonLukac}. One can express this in terms of Schur polynomials by setting 
$z_{j}=q^{N-j+\mu _{j}}$ in this matrix, which yields 
\begin{align}
    \langle W_{\lambda\mu}\rangle_{U(N)}=q^{\sum_{j=1}^{N}\left( \lambda_{j}^{2}/2+\mu_{j}^{2}/2-(j-1)(\lambda _{j}+\mu _{j})\right) }&s_{\mu }(1,q,\dots ,q^{N-1})s_{\lambda }(q^{-\mu _{1}},q^{1-\mu _{2}},\dots ,q^{N-1-\mu _{N}}). \nonumber \\
\intertext{The above expression can also be written in terms of the quadratic Casimir element of $U(N)$, which we denote by $C_{2}^{U(N)}(\lambda) = \sum_{j}\lambda_{j}(\lambda_{j}+N-2j+1)$, as follows}
    q^{\left(-(N-1)(|\lambda |+|\mu |)+C_{2}^{U(N)}(\lambda )+C_{2}^{U(N)}(\mu)\right)/2}&s_{\mu }(1,q,\dots ,q^{N-1})s_{\lambda }(q^{-\mu _{1}},q^{1-\mu_{2}},\dots ,q^{N-1-\mu _{N}}). \label{unithopf}
\end{align}
Further interest in the minors of the Vandermonde matrix $M_{U(N)}(1,q,\dots,q^{N-1})$ and the rest of the matrices $M_{G(N)}$ arises from their relation with Chebotar\"ev's theorem\footnote{The matrix $M_{U(N)}(1,q,\dots,q^{N-1})$, for $q$ a $p$-th root of unity, is the matrix associated to the discrete Fourier transform (DFT), and Chebotar\"ev's classical theorem \cite{Cheb} states that every minor of this matrix is nonzero if $p$ is prime. An analogue of this theorem for the matrices of the discrete sine and cosine transforms, which correspond to $M_{Sp(2N)}(q,\dots,q^{N})$ and $M_{O(2N)}(1,\dots,q^{N-1})$ respectively, has been proved recently \cite{ChebGN}.} and the recent related advances in the topic \cite{ChebGN}.
 
We also see that a phenomenon already present when computing the partition
functions takes place when computing averages of Schur polynomials. For the
theta function, integrating the determinant $\det{M_{G(N)}(z)}$ in \eqref{intGN}
amounts essentially to computing the determinant of the matrix $M_{G(N)}(z)$
itself, after a certain specialization of the variables $z$. We also see
that the average of one or two Schur polynomials is expressed precisely as
the corresponding Schur polynomials, after some specialization to the same
number of nonzero variables as the size of the model.

This property has been noted in \cite{Mironov:2017och,Morozov:2018eiq} for
models of Hermitian Gaussian matrices. One point of view regarding these models, put forward in \cite{Morozov:2018eiq},
is that a main aspect of Gaussian matrix measures is that
they preserve Schur functions. We shall see that the same property
holds when changing the symmetry of the ensemble from unitary to symplectic
or orthogonal, by simply replacing Schur polynomials by symplectic or
orthogonal Schur functions.

\subsection{Symplectic group}

Performing analogous computations to the unitary case, we see that 
\begin{align}
&(q;q)_{\infty}^{N}\int_{Sp(2N)}\overline{sp_{\lambda }(U)}sp_{\mu
}(U)\Theta (U)dU =\det {(q^{(j+\lambda_{j}^{r}-k-\mu_{k}^{r})^{2}/2}-q^{(j+
\lambda_{j}^{r}+k+\mu_{k}^{r})^{2}/2})_{j,k=1}^{N}}  \nonumber \\
&=q^{\sum_{j=1}^{N}\left(
\lambda_{j}^{2}/2+\mu_{j}^{2}/2+(N-j+1)(\lambda_{j}+\mu _{j})+j^{2}\right)
}\det {(q^{-(j+\lambda _{j}^{r})(k+\mu_{k}^{r})}-q^{(j+\lambda
_{j}^{r})(k+\mu_{k}^{r})})_{j,k=1}^{N}},  \notag \\
\intertext{which leads to}
&\langle W_{\lambda\mu}\rangle_{Sp(2N)} = q^{\left(
N(|\lambda|+|\mu|)+C_{2}^{Sp(2N)}(\lambda)+C_{2}^{Sp(2N)}(\mu
)\right)/2}sp_{\mu}(q,q^{2},\dots ,q^{N})sp_{\lambda }(q^{1+\mu _{N}},\dots
,q^{N+\mu _{1}}),  \label{kauffsymp}
\end{align}
where we have identified $C_{2}^{Sp(2N)}(\lambda)=\sum_{j}\lambda_{j}(%
\lambda_{j}+N-2j+2)$, the quadratic Casimir element of $Sp(2N)$. As before,
the second identity in \eqref{kauffsymp} follows from the fact that
integrating the function $\Theta $ we recover a (row and column-wise) minor
of the matrix $M_{Sp(2N)}(z)$ itself, specialized to $z_{j}=q^{j}$. We note
that $\lambda $ and $\mu $ are interchangeable in the above formula, and
also that setting one of the partitions to be empty we obtain a formula for
the average of a single character $\langle W_{\mu}\rangle_{Sp(2N)}$.

\subsection{Orthogonal groups}

For the orthogonal models we have 
\begin{align}
&(q;q)_{\infty}^{N}\int_{SO(2N)}o_{\lambda}^{even}(U)o_{\mu }^{even}(U)\Theta
(U)dU=\frac{1}{2}\det {\left( q^{(j+\lambda
_{j}^{r}-k-\mu_{k}^{r})^{2}/2}+q^{(j+\lambda_{j}^{r}+k+\mu
_{k}^{r}-2)^{2}/2}\right) _{j,k=1}^{N}}  \notag \\
&=\frac{1}{2}q^{\sum_{j=1}^{N}\left( \lambda
_{j}^{2}/2+\mu_{j}^{2}/2+(N-j)(\lambda _{j}+\mu _{j})+(j-1)^{2}\right) }\det 
{\left(q^{-(N-j+\lambda _{j})(N-k+\mu _{k})}+q^{(N-j+\lambda
_{j})(N-k+\mu_{k})}\right) _{j,k=1}^{N}},  \notag \\
\intertext{which can be rewritten as}
&\langle W_{\lambda\mu}\rangle_{SO(2N)} = q^{\left(N(|\lambda
|+|\mu|)+C_{2}^{SO(2N)}(\lambda)+C_{2}^{SO(2N)}(\mu)\right)/2}o_{%
\mu}^{even}(1,q,\dots ,q^{N-1})o_{\lambda }^{even}(q^{\mu _{N}},\dots
,q^{N-1+\mu _{1}}),  \label{kauffeort}
\end{align}
where $C_{2}^{SO(2N)}(\lambda)=\sum_{j=1}^{N}\lambda_{j}(\lambda_{j}+N-2j)$
is the quadratic Casimir of $SO(2N)$. As before, setting one partition to be
empty we obtain a formula for the Wilson loop $\langle W_{\mu}\rangle_{SO(2N)}$. For the odd
orthogonal group $SO(2N+1)$ we obtain 
\begin{align}
    (q;q)_{\infty}^{N}\int_{SO(2N+1)}&o_{\lambda}^{odd}(U)o_{\mu}^{odd}(U)\Theta (U)dU=\det{\left(q^{(j+\lambda_{j}^{r}-k-\mu_{k}^{r})^{2}/2}-q^{(j+\lambda_{j}^{r}+k+\mu_{k}^{r}-1)^{2}/2}\right) _{j,k=1}^{N}}  \nonumber \\
    &=q^{\sum_{j=1}^{N}\left(\lambda_{j}^{2}/2+\mu_{j}^{2}/2+(N-j+1/2)(\lambda_{j}+\mu_{j})+(j-1/2)^{2}\right)} \nonumber \\
    &\times\det{\left(q^{-(N-j+\lambda_{j}+1/2)(N-k+\mu_{k}+1/2)}-q^{(N-j+\lambda_{j}+1/2)(N-k+\mu _{k}+1/2)}\right) _{j,k=1}^{N}},  \nonumber
\end{align}
which yields
\begin{align} \begin{split}
    \langle W_{\lambda\mu}\rangle_{SO(2N+1)} = &q^{\left((N+1/2)(|\lambda|+|\mu|)+C_{2}^{SO(2N+1)}(\lambda)+C_{2}^{SO(2N+1)}(\mu)\right)/2} \\
    &\times o_{\mu}^{odd}(q^{1/2},q^{3/2},\dots,q^{N-1/2})o_{\lambda}^{odd}(q^{1/2+\mu _{N}},q^{3/2+\mu_{N-1}},\dots,q^{N-1/2+\mu_{1}}), \label{kauffoort}
\end{split} \end{align}
with $C_{2}^{SO(2N+1)}(\lambda)=\sum_{j=1}^{N}\lambda_{j}(\lambda_{j}+N-2j+1/2)$ the quadratic Casimir of $SO(2N+1)$.

\subsection{Giambelli compatible processes}

The classical Giambelli identity expresses a Schur polynomial indexed by a general partition $\lambda$ as the determinant of a matrix which entries are Schur polynomials indexed only by hook partitions. More precisely
\begin{equation*}
    s_{(a_{1},\dots,a_{p}|b_{1},\dots,b_{p})}(x) = \det{(s_{(a_{j}|b_{k})}(x))_{j,k=1}^{p}},
\end{equation*}
where we have used the Frobenius notations for the partitions in the above identity (see the beginning of section \ref{s.minexp}). In \cite{GiamBor}, the notion of ``Giambelli compatible" processes was introduced to refer to probability measures on point configurations that preserve the Giambelli identity above, in the sense that
\begin{equation*}
    \langle s_{(a_{1},\dots,a_{p}|b_{1},\dots,b_{p})} \rangle = \det{(\langle s_{(a_{j}|b_{k})}\rangle )_{j,k=1}^{p}},
\end{equation*}
where the bracket notation $\langle s_{\lambda} \rangle$ denotes the average of the Schur polynomial $\lambda$ with respect the corresponding probability measure. Since then, several matrix models and gauge theories have been proved to be Giambelly compatible, including biorthogonal ensembles \cite{GiamTierz}, ABJM theory \cite{Hatsuda:2013yua}, and supersymmetric Chern-Simons theory \cite{Eynard:2014rba},\cite{Matsuno:2016jjp}.

Using the formulas obtained in the previous sections, one can easily prove that the random matrix ensembles corresponding to the theta function \eqref{thetaf} with $G(N)$ symmetry are Giambelli compatible in a slightly generalized sense. Indeed, we have seen that the average of a character over these ensembles can be evaluated as the precise same character, with a certain specialization, times a prefactor in the parameter $q$ (equations \eqref{unitwilson},\eqref{kauffsymp},\eqref{kauffeort},\eqref{kauffoort}). This fact, together with the Giambelli identity for the characters of the groups $G(N)$ \cite{GiamGN,GiamKratt}
\begin{equation*}
    \chi_{G(N)}^{(a_{1},\dots,a_{p}|b_{1},\dots,b_{p})}(U) = \det{\left(\chi_{G(N)}^{(a_{j}|b_{k})}(U)\right)_{j,k=1}^{p}},
\end{equation*}
and some computations to take care of the prefactors, show that
\begin{equation*}
    \langle W_{(a_{1},\dots,a_{p}|b_{1},\dots.b_{p})} \rangle_{G(N)} = \det{\left( \langle W_{(a_{j}|b_{k})}\rangle_{G(N)} \right)_{j,k=1}^{N}}.
\end{equation*}
That is, the Giambelly identity is preserved, after replacing the Schur polynomials in both sides of the identity with the corresponding character $\chi^{\lambda}_{G(N)}$. For $G(N)=U(N)$ this is a known result, as we are considering an orthogonal polynomial ensemble (which were proven to be Giambelli compatible in \cite{GiamBor}). However, for the rest of the groups $G(N)$ this provides an example of an ensemble with non unitary symmetry that is Giambelli compatible.

\subsection{Large $N$ limit and Hopf link expansions}

The expansions found in theorem \ref{th.hopflinks} have particular consequences when considering the Chern-Simons model. Considering the function $\Theta$ in this theorem and taking into account the results in section \ref{s.CS}, we see that at finite $N$ the partition functions of $Sp(2N),SO(2N)$ and $SO(2N+1)$ Chern-Simons theories can be expressed as sums of unnormalized Hopf links of the unitary theory. On the other hand, theorem \ref{t.avg} implies that
\begin{align}
    &\lim_{N\rightarrow\infty}{\langle W_{\lambda\mu} \rangle_{G(N)}} =  \sum_{\nu}s_{(\lambda/\nu)^{\prime}}(q^{1/2},q^{3/2},\dots)s_{(\mu/\nu)^{\prime}}(q^{1/2},q^{3/2},\dots ) \label{qdimG2}
\end{align}
for each of the groups\footnote{The partitions in \eqref{averagesN} appear now conjugated, since the function is $\Theta$ is expressed as a specialization of $E(x;e^{i\theta})$.} $G(N)$, where the prime notation $'$ stands for the conjugated partition. Note that if there is only one character in the average the above formula simplifies to
\begin{align}
    &\lim_{N\rightarrow\infty} \langle W_{\mu} \rangle_{G(N)} = s_{\mu'}(q^{1/2},q^{3/2},\dots).  \label{qdimG}
\end{align}
Putting these two facts together we arrive at the following expansions
\begin{align*}
    \frac{Z_{Sp(2N)}}{Z_{U(N)}} &\sim \frac{1}{2^{N}}\sum_{\rho_{1},\rho_{2}\in   R(\infty)}(-1)^{(|\rho_{1}|+|\rho_{2}|)/2}\langle W_{\rho_{1}\rho_{2}}\rangle_{G(N)}, \\
    \frac{Z_{SO(2N)}}{Z_{U(N)}} &\sim \frac{1}{2^{N-1}}\sum_{\tau_{1},\tau_{2}\in T(\infty)}(-1)^{(|\tau_{1}|+|\tau_{2}|)/2}\langle W_{\tau_{1}\tau_{2}}\rangle_{G(N)} \\
    \frac{Z_{SO(2N+1)}}{Z_{U(N)}} &\sim \frac{1}{2^{N}}\sum_{\sigma_{1},\sigma_{2}\in S(\infty)}(-1)^{(|\sigma_{1}|+|\sigma_{2}|+p(\sigma_{1})+p(\sigma_{2}))/2}\langle W_{\sigma_{1}\sigma_{2}}\rangle_{G(N)}
\end{align*}
as $N\rightarrow\infty$, where the sets $R(\infty),S(\infty)$ and $T(\infty)$ are defined as the sets $R(N),S(N)$ and $T(N)$ respectively (see theorem \ref{th.hopflinks}) without the restriction $\alpha_{1}\leq N-1$. That is, at large $N$ the partition functions of the symplectic or orthogonal theories can be expressed as that of the unitary theory with an infinite number of corrections, which correspond to Wilson loops and Hopf links, indexed by partitions of increasing complexity\footnote{Note that the empty partition belongs to each of the sets $R(\infty),S(\infty)$ and $T(\infty)$, and thus the first term in the sums is always a $1$.} (and which are the same in this limit for each of the groups $G(N)$). Previous examples of partition functions of Chern-Simons theory expressed as sums of averages of characters can be found in \cite{Iqbal:2003ix}-\cite{Marino:2009mw}.

\section{Fermion quantum models with matrix degrees of freedom}

Some interest has arised recently in the study of fermionic quantum mechanical models with matrix degrees of freedom \cite{Anninos:2016klf,Tierz:2017nvl,Klebanov:2018nfp}. These models appear as specific instances of tensor quantum mechanical models \cite{Klebanov:2018nfp} and have a distinctive spectra of harmonic-oscillator type, but with exponentially degenerated energy levels, which suggests connections with other solvable models and to integrability. 

The fermionic model in \cite{Anninos:2016klf} consists in $NL$ complex fermions $%
\{\psi ^{iA},\bar{\psi}^{Ai}\}$ with $i=1,\ldots ,N$ and $A=1,\ldots ,L$. The indices $i$ and $A$ transform in the bifundamental of a $U(N)\times U(L)$ symmetry and the finite-dimensional Hilbert space (of dimension $2^{N\times L}$) consists in $N\times L$ fermionic operators satisfying $\{\bar{\psi}^{Ai},\psi ^{Bj}\}=\delta ^{ij}\delta
^{AB}$. The interactions are controlled by a Hamiltonian with quartic interactions and the normal ordering of the lower, quadratic interaction terms was not worked out in \cite{Anninos:2016klf}. The same model appeared in \cite{Klebanov:2018nfp} and the terms quartic and quadratic written down, but with different coefficients for the quadratic terms and hence with just slightly different numerical spectra, but with the same features. For an overall discussion of this model and the role of models with finite-dimensional Hilbert space at the level of AdS/CFT (or dS/CFT) type of descriptions, see \cite{Anninos:2020ccj}.

The spectrum of the model in \cite{Anninos:2016klf} is computed in \cite{Tierz:2017nvl}, based on the matrix model description obtained in \cite{Anninos:2016klf}. Likewise, in \cite{Klebanov:2018nfp}, spectra are also computed, using their identification of the Hamiltonian with quartic interactions in terms of Casimirs. In analogy with the form of the matrix model in \cite{Anninos:2016klf}, we compute here averages of insertions of characteristic polynomial type in the $G(N)$ Chern-Simons matrix model.

We emphasize this is simply in analogy with the model in \cite{Anninos:2016klf}, which described $U(N)\times U(L)$ fermion models in terms of the average of the $L$-th moment of a determinant insertion in $U(N)$
Chern-Simons matrix models. One motivation is that more complex models than the one in \cite{Anninos:2016klf,Tierz:2017nvl}, with symmetries such as $SO(N)\times SO(L)$, are
given in \cite{Klebanov:2018nfp} with qualitatively the same spectra, after numerically diagonalizing, in this case, the Hamiltonian. The models we study correspond to study the average of the function
\begin{align}
    &\Theta^{(L,m)}(e^{i\theta}) = \left(2\cos{\frac{\theta+im}{2}}\right)^{L}\Theta(e^{i\theta}) \nonumber
\intertext{over the groups $G(N)$, where $L$ is a positive integer and $m$ is a real parameter. In sight of \eqref{fprod} and the identity $2\cos\frac{\theta}{2}=|1+e^{i\theta}|$, we see that for $U$ belonging to any of the groups $G(N)$ we have}
    &\Theta^{(L,m)}(U) = \Theta(U)e^{Lm}\prod_{j=1}^{N} (1+e^{-m}e^{i\theta_{j}})^{L}(1+e^{-m}e^{-i\theta_{j}})^{L}, \label{thetaLm}
\intertext{where the $e^{i\theta_{j}}$ are the nontrivial eigenvalues of $U$. We will denote this average by}
    &Z^{(L,m)}_{G(N)} = \frac{1}{Z_{G(N)}}\int_{G(N)}\Theta^{(L,m)}(U)dU. \nonumber
\end{align}
Taking the limit $m\rightarrow0$ of the unitary model $Z^{(L,m)}_{U(N)}$ we recover the compactly supported analogue of the model considered in \cite{Tierz:2017nvl}. This model is also related with the Ewens measure on the symmetric group, see \cite{Olshanski} for instance.

\subsection{Unitary group}

Using the dual Cauchy identity \eqref{dcauchyun} twice to expand the product in \eqref{thetaLm} and identity \eqref{unithopf} we obtain 
\begin{align} \begin{split} \label{unitLm}
    Z^{(L,m)}_{U(N)} = e^{Lm}&\sum_{\lambda,\mu} s_{\lambda'}(\underbrace{e^{-m},\dots,e^{-m}}_{L})s_{\mu'}(\underbrace{e^{-m},\dots,e^{-m}}_{L})\langle W_{\lambda\mu} \rangle_{U(N)}  \\
    = e^{Lm}&\sum_{\lambda,\mu} e^{-m(|\lambda|+|\mu|)}s_{\lambda'}(1^{L})s_{\mu'}(1^{L})q^{(C_{2}^{U(N)}(\lambda)+C_{2}^{U(N)}(\mu))/2} \\
    &\times s_{\mu }(1,q^{-1},\dots ,q^{-(N-1)})s_{\lambda}(q^{-\mu_{N}},q^{-(\mu_{N-1}+1)},\dots ,q^{-(\mu_{1}+N-1)}),
\end{split} \end{align}
where $1^{L}$ denotes the specialization $x_{1}=\dots =x_{L}=1$. Recall that an explicit formula for $s_{\mu}(1^{L})$ is available \cite{MacDonald}. Now, since $s_{\nu}(x_{1},\dots,x_{N})=0$ if $l(\nu)>N$, we see that the above sum is actually over all partitions $\lambda,\mu$ contained in the rectangular diagram\footnote{See \cite{Panova} for recent results on asymptotics on the number of such partitions as $L$ and $N$ grow to infinity.} $(L^{N})$. Several nontrivial features of the model can be deduced from this fact.

First of all, we see that $Z^{(L,m)}_{U(N)}$ is a polynomial on $q^{1/2}$ and $e^{-m}$. The high number of terms in this polynomial compared to its relatively low degree on $q$ implies the high number of degeneracies in the spectrum mentioned above. Figure \ref{f.UL1} shows some examples where this phenomenon is apparent. Secondly, using the dual Cauchy identity again we see that in the limit $q\rightarrow1$ we have
\begin{equation*}
    \lim_{q\rightarrow1}Z^{(L,m)}_{U(N)} = e^{Lm}(1+e^{-m})^{2NL}.
\end{equation*}
Up to the prefactor $e^{Lm}$, this shows the duality between the parameters $(N,L)$ in this limit \cite{Tierz:2017nvl}. Finally, the expression \eqref{unitLm} allows direct computation of the model for low values of $N$ and $L$ and implementation in a computer algebra system. For instance, for $L=1$ we have 
\begin{equation*}
    \langle \Theta^{(L=1,m)} \rangle_{U(N)} = e^{m}\sum_{r,s=0}^{N}e^{-m(r+s)}q^{s-s^{2}/2+r/2}\qbinom{N}{r}e_{s}(q^{-1},1,q,\dots,q^{r-2},q^{r},q^{r+1},\dots,q^{N-1}),
\end{equation*}
where $e_{k}$ denotes the $k$-th elementary symmetric polynomial \eqref{ek}. 

\begin{center}
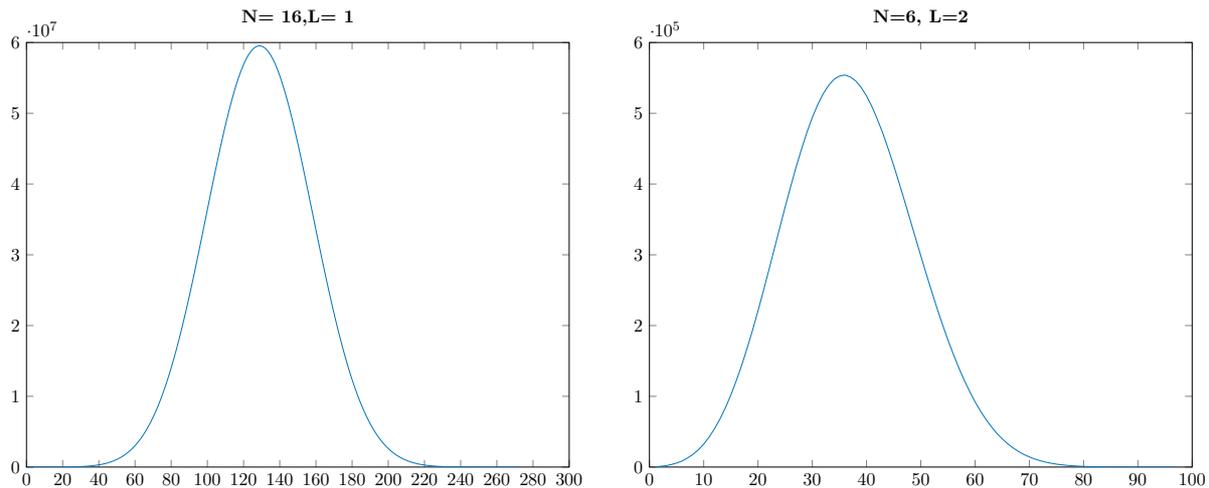
\begin{figure}
    \begin{scaletikzpicturetowidth}{.48\textwidth}
%
%
\definecolor{mycolor1}{rgb}{0.00000,0.44700,0.74100}%
\begin{tikzpicture}[scale=\tikzscale]

\begin{axis}[%
width=4.521in,
height=3.566in,
at={(0.758in,0.481in)},
scale only axis,
xmin=0,
xmax=300,
ymin=0,
ymax=60000000,
axis background/.style={fill=white},
title style={font=\bfseries},
title={N= 16,L= 1}
]
\addplot [color=mycolor1, forget plot]
  table[row sep=crcr]{%
1	17\\
2	32\\
3	45\\
4	88\\
5	125\\
6	188\\
7	286\\
8	400\\
9	554\\
10	768\\
11	1047\\
12	1388\\
13	1844\\
14	2404\\
15	3109\\
16	4008\\
17	5094\\
18	6424\\
19	8074\\
20	10052\\
21	12442\\
22	15344\\
23	18778\\
24	22868\\
25	27737\\
26	33464\\
27	40184\\
28	48084\\
29	57274\\
30	67940\\
31	80331\\
32	94600\\
33	110998\\
34	129848\\
35	151369\\
36	175892\\
37	203812\\
38	235416\\
39	271121\\
40	311432\\
41	356715\\
42	407488\\
43	464333\\
44	527736\\
45	598311\\
46	676760\\
47	763646\\
48	859680\\
49	965707\\
50	1082360\\
51	1210487\\
52	1350976\\
53	1504606\\
54	1672260\\
55	1854975\\
56	2053568\\
57	2269005\\
58	2502364\\
59	2754518\\
60	3026488\\
61	3319378\\
62	3634072\\
63	3971570\\
64	4332996\\
65	4719173\\
66	5131080\\
67	5569741\\
68	6035908\\
69	6530446\\
70	7054276\\
71	7607982\\
72	8192264\\
73	8807827\\
74	9455084\\
75	10134448\\
76	10846436\\
77	11591100\\
78	12368616\\
79	13179133\\
80	14022400\\
81	14898198\\
82	15806312\\
83	16746090\\
84	17716904\\
85	18718124\\
86	19748644\\
87	20807418\\
88	21893304\\
89	23004814\\
90	24140368\\
91	25298459\\
92	26477036\\
93	27674128\\
94	28887728\\
95	30115425\\
96	31354768\\
97	32603376\\
98	33858412\\
99	35117116\\
100	36376724\\
101	37634123\\
102	38886228\\
103	40130067\\
104	41362288\\
105	42579647\\
106	43779032\\
107	44956971\\
108	46110192\\
109	47235530\\
110	48329616\\
111	49389177\\
112	50411248\\
113	51392539\\
114	52330100\\
115	53221109\\
116	54062700\\
117	54852148\\
118	55587132\\
119	56265118\\
120	56883944\\
121	57441724\\
122	57936496\\
123	58366646\\
124	58730896\\
125	59027940\\
126	59256780\\
127	59416814\\
128	59507392\\
129	59528258\\
130	59479460\\
131	59361104\\
132	59173580\\
133	58917701\\
134	58594204\\
135	58204233\\
136	57749188\\
137	57230558\\
138	56649996\\
139	56009621\\
140	55311396\\
141	54557633\\
142	53750844\\
143	52893552\\
144	51988456\\
145	51038510\\
146	50046564\\
147	49015634\\
148	47948952\\
149	46849576\\
150	45720744\\
151	44565754\\
152	43387832\\
153	42190180\\
154	40976192\\
155	39748954\\
156	38511676\\
157	37267524\\
158	36019528\\
159	34770617\\
160	33523792\\
161	32281747\\
162	31047176\\
163	29822701\\
164	28610720\\
165	27413510\\
166	26233336\\
167	25072163\\
168	23931868\\
169	22814259\\
170	21720876\\
171	20653167\\
172	19612448\\
173	18599847\\
174	17616316\\
175	16662796\\
176	15739904\\
177	14848232\\
178	13988228\\
179	13160189\\
180	12364236\\
181	11600508\\
182	10868876\\
183	10169191\\
184	9501204\\
185	8864534\\
186	8258716\\
187	7683262\\
188	7137544\\
189	6620876\\
190	6132580\\
191	5671842\\
192	5237836\\
193	4829722\\
194	4446608\\
195	4087541\\
196	3751632\\
197	3437896\\
198	3145380\\
199	2873124\\
200	2620180\\
201	2385555\\
202	2168344\\
203	1967586\\
204	1782372\\
205	1611818\\
206	1455044\\
207	1311181\\
208	1179440\\
209	1059009\\
210	949116\\
211	849050\\
212	758092\\
213	675578\\
214	600868\\
215	533370\\
216	472492\\
217	417717\\
218	368520\\
219	324427\\
220	284996\\
221	249818\\
222	218484\\
223	190651\\
224	165976\\
225	144155\\
226	124900\\
227	107950\\
228	93064\\
229	80027\\
230	68636\\
231	58705\\
232	50076\\
233	42591\\
234	36124\\
235	30545\\
236	25752\\
237	21640\\
238	18124\\
239	15133\\
240	12592\\
241	10438\\
242	8624\\
243	7095\\
244	5816\\
245	4746\\
246	3860\\
247	3124\\
248	2516\\
249	2017\\
250	1608\\
251	1278\\
252	1008\\
253	790\\
254	616\\
255	478\\
256	368\\
257	282\\
258	212\\
259	161\\
260	120\\
261	88\\
262	64\\
263	47\\
264	32\\
265	22\\
266	16\\
267	10\\
268	8\\
269	5\\
270	4\\
271	1\\
272	1\\
};
\end{axis}
\end{tikzpicture}%
    \end{scaletikzpicturetowidth}
    \begin{scaletikzpicturetowidth}{.48\textwidth}
%
%
\definecolor{mycolor1}{rgb}{0.00000,0.44700,0.74100}%
\begin{tikzpicture}[scale=\tikzscale]

\begin{axis}[%
width=4.521in,
height=3.566in,
at={(0.758in,0.481in)},
scale only axis,
xmin=0,
xmax=100,
ymin=0,
ymax=600000,
axis background/.style={fill=white},
title style={font=\bfseries},
title={N=6, L=2}
]
\addplot [color=mycolor1, forget plot]
  table[row sep=crcr]{%
1	336\\
2	1008\\
3	1815\\
4	3480\\
5	5651\\
6	8464\\
7	12729\\
8	17824\\
9	24270\\
10	32448\\
11	42419\\
12	54216\\
13	68018\\
14	83816\\
15	101744\\
16	121936\\
17	143821\\
18	167960\\
19	193451\\
20	220672\\
21	249136\\
22	278296\\
23	308066\\
24	337776\\
25	367409\\
26	395816\\
27	423386\\
28	448768\\
29	472607\\
30	493904\\
31	512073\\
32	527408\\
33	539269\\
34	547968\\
35	552550\\
36	554104\\
37	551446\\
38	545848\\
39	536738\\
40	524480\\
41	509416\\
42	491464\\
43	471886\\
44	449720\\
45	426544\\
46	401728\\
47	376464\\
48	350544\\
49	324590\\
50	298688\\
51	273134\\
52	248536\\
53	224240\\
54	201512\\
55	179564\\
56	159456\\
57	140382\\
58	122888\\
59	106912\\
60	92216\\
61	79214\\
62	67376\\
63	57078\\
64	47760\\
65	39970\\
66	33008\\
67	27123\\
68	22064\\
69	17791\\
70	14360\\
71	11325\\
72	8992\\
73	6978\\
74	5432\\
75	4127\\
76	3136\\
77	2350\\
78	1752\\
79	1312\\
80	928\\
81	685\\
82	456\\
83	343\\
84	232\\
85	152\\
86	96\\
87	58\\
88	48\\
89	25\\
90	16\\
91	6\\
92	8\\
93	3\\
94	0\\
95	1\\
96	0\\
97	1\\
};
\end{axis}
\end{tikzpicture}%
    \end{scaletikzpicturetowidth}
\caption{For each $n$ in the $x$ axis, the $y$ axis shows the coefficient of the monomial $q^{n/2}$ in $Z^{(L=1,m=0)}_{U(16)}$ (left) and $Z^{(L=2,m=0)}_{U(6)}$ (right).}
\label{f.UL1}
\end{figure}
\end{center}

\subsubsection*{Large-$N$ limit}
The large $N$ limit of the model can be computed by two different means, depending on the value of $m$. If $m$ is nonzero, it follows from 
\eqref{unitLm} and the identity \eqref{qdimG2} that
\begin{align*}
    \lim_{N\rightarrow\infty} Z^{(L,m)}_{U(N)} = e^{Lm}\sum_{\lambda,\mu} &s_{\lambda'}(\underbrace{e^{-m},\dots,e^{-m}}_{L})s_{\mu'}(\underbrace{e^{-m},\dots,e^{-m}}_{L}) \\
    &\times\sum_{\nu}s_{(\lambda/\nu)'}(q^{1/2},q^{3/2},\dots)s_{(\mu/\nu)'}(q^{1/2},q^{3/2},\dots) \\
    = e^{Lm}(1-&e^{-2m})^{-L^{2}}\prod_{k=1}^{\infty}\frac{1}{(1-e^{-m}q^{k-1/2})^{2L}},
\end{align*}
where the second identity above follows from standard manipulations of Schur and skew Schur polynomials\footnote{More precisely, we have used the expansion $s_{\lambda/\nu}=\sum_{\alpha}c^{\lambda}_{\nu\alpha}s_{\alpha}$, the multiplication rule $\sum_{\lambda}c^{\lambda}_{\nu\alpha}s_{\lambda} = s_{\nu}s_{\alpha}$ and the Cauchy identity \eqref{cauchyun}, where the $c^{\lambda}_{\nu\alpha}$ are Littlewood-Richardson coefficients.}.

The above expression is no longer valid in the massless case, $m=0$. Nevertheless, the large $N$ limit of the model can still be computed, using the fact that $Z^{(L,m)}_{U(N)}$ can be seen as the determinant of the Toeplitz matrix generated by the function $\Theta^{(L,m)}$ (recall identity \eqref{intUN}). For $m=0$, this function does not verify the hypotheses in Szeg\H{o}'s theorem, but it can be written as the product of a function that does verify these hypotheses (the function $\Theta$, as in section \ref{s.CS}) and a Fisher-Hartwig singularity. The asymptotic behaviour of Toeplitz determinants generated by such functions has been long studied \cite{DIKrev} and is now well understood \cite{DIK}. See appendix B for the definition of Fisher-Hartwig singularity and the relevant results that we will use in the following.

According to \eqref{pureFH}, we see that the function $\Theta ^{(L,m=0)}$ corresponds to the product of the smooth function $\Theta$ (in the sense of Szeg\H{o}'s theorem) and a single singularity at the point $z=-1$, with parameters $\alpha=L$ and $\beta=0$. This implies that as $N\rightarrow\infty$ we have \eqref{FHun} 
\begin{equation} \label{largeNFHun}
    Z^{(L,m=0)}_{G(N)} \sim N^{L^{2}}\frac{G^{2}(L+1)}{G(2L+1)}\prod_{k=1}^{\infty }
\frac{1}{(1-q^{k-1/2})^{2L}},
\end{equation}
where $G(z+1)$ is Barnes' $G$ function. Using its well known asymptotic expansion\footnote{For any $z$ in a sector not containing the negative real axis it holds that
\begin{equation*}
    \log G(z+1)=\frac{1}{12}-\log A+\frac{z}{2}\log 2\pi +\left( \frac{z^{2}}{2}-\frac{1}{12}\right) \log z-\frac{3z^{2}}{4}+\sum_{k=1}^{N}\frac{\mathrm{B}_{2k+2}}{4k(k+1)z^{2k}}+O\left( \frac{1}{z^{2N+2}}\right),
\end{equation*}
where $A$ is the Glaisher--Kinkelin constant and the $\mathrm{B}_{k}$ are the Bernouilli numbers.} we see that as $L\rightarrow\infty$ the free energy of the model satisfies
\begin{equation*}
    \lim_{L\rightarrow \infty } \log{ Z^{(L,m)}_{U(N\rightarrow\infty)} } \sim L^{2}\log \left( \frac{N}{L}\right) -L^{2}\left( 2\log 2-3/2\right) -\frac{\log L}{12}-2L\log\left( \sqrt{q},q\right) _{\infty },
\end{equation*}
where we have written the last term as a $q-$Pochhammer symbol\footnote{This type of piece also appears in the free energy of some $4d$ supersymmetric gauge theories \cite{Russo:2013qaa}.}. We have taken the large $L$ limit after the large $N$ limit. This is non-rigorous but standard in estimating free energies in the regime where one defines a Veneziano parameter\footnote{In analogy with localization, $L$ could be interpreted as number of flavours, but with hypermultiplets describing fermionic matter, and hence in the numerator in the matrix model. For example, in \cite{Betzios:2017yms} we see this type of insertions in the context of matrix quantum mechanics.} $\zeta=L/N$  and the double scaling is $\zeta=cte$ for $N\rightarrow \infty$ and $L\rightarrow \infty$. As we see, the leading term of the free energy vanishes for $\zeta=1$, and changes sign with $\zeta\rightarrow 1/\zeta$ otherwise.

Table \ref{t.FH} shows some numerical tests of the accuracy of formula \eqref{largeNFHun} (as well as the analogous formulas for the rest of the models, see the following subsections) for several values of $q$ and $N$.

\begin{table}
\begin{center}
\begin{tabular}{ |c|c|c|c|c|c| } 
    \hline
    Model & $N=4$ & $N=6$ & $N=8$ & Value of $q$ \\ \hline
    $U(N)$ & 1.0018 & 1.0005 & 1.0003 & $q=0.1$ \\
    $Sp(2N)$ & 0.9559 & 0.9692 & 0.9768 & $q=0.25$ \\
    $O(2N)$ & 0.9726 & 0.9970 & 0.9997 & $q=0.33$ \\
    $O(2N+1)$ & 0.8616 & 0.9631 & 0.9906 & $q=0.5$ \\
    \hline
\end{tabular}
\vspace*{7pt}
\label{t.FH}
\caption{The table shows the quotient between the numerical value of the spectrums $Z^{(L=1,m=0)}_{G(N)}$, computed directly by means of the formulas \eqref{unitLm},\eqref{sympLm},\eqref{eoLm},\eqref{ooLm}, and the predicted value given by formulas \eqref{largeNFHun},\eqref{FHsymp},\eqref{FHort}. The high rate of convergence is apparent already at low values of $N$. The rightmost column shows the value of $q$ at which the spectrum is computed.}
\end{center}
\end{table}

Let us emphasize that both the symmetric function approach and the Toeplitz determinant realization of the matrix model are useful for computing its large $N$ limit. Indeed, in the massive case, the character expansion is immediate and gives a manageable expression of the model, while the massless case is also readily handled with the aid of a particular example of Fisher-Hartwig asymptotics.

\subsection{Symplectic group}

We can proceed analogously for the rest of the groups $G(N)$. The expression resulting from the character expansion is actually simpler in this case, although some extra care needs to be taken before integrating. Let us start with the symplectic group. First, we use the dual Cauchy identity \eqref{dcauchysymp} to expand the product in \eqref{thetaLm}, obtaining
\begin{equation}
    Z^{(L,m)}_{Sp(2N)} = e^{Lm}(1-e^{-2m})^{-L(L+1)/2}\sum_{\mu}e^{-|\mu|m}s_{\mu'}(1^{L})\int_{Sp(2N)}sp_{\mu}(U)\Theta(U)dU. \label{sympLmaux}
\end{equation}
Since $sp_{\mu}(x_{1},\dots,x_{N})=0$ if $l(\mu)-\mu_{1}-1>2N$ (as can be seen from \eqref{JTsymp}, for instance), we see that the sum above actually runs over all partitions contained in the rectangular diagram $(L^{2N+L+1})$, and therefore is finite. However, we can only use formula \eqref{kauffsymp} and substitute the integral in \eqref{sympLmaux} by the Wilson loop $\langle W_{\mu}\rangle_{Sp(2N)}$ for those partitions satisfying $l(\mu)\leq N$. One can bypass this constraint in the following way. It is proven in \cite{KoikeTerada} (see proposition 2.4.1) that any $sp_{\mu}(U)$ (seen as a symmetric function, specialized to the nontrivial eigenvalues of $U$) indexed by a partition of length $l(\mu)> N$ either vanishes or coincides with an irreducible character $\chi^{\lambda}_{Sp(2N)}(U)$, with $l(\lambda)\leq N$, up to a sign. One can then substitute those $sp_{\mu}(U)$ in \eqref{sympLmaux} by the corresponding $\chi^{\lambda}_{Sp(2N)}(U)$, use formula \eqref{kauffsymp} to write the integrals as the Wilson loops $\langle W_{\lambda}\rangle_{Sp(2N)}$, and then undo the change to recover the $\langle W_{\mu}\rangle_{Sp(2N)}$ indexed by the original partition $\mu$ (recall that these coincide themselves with a symplectic Schur function, up to a prefactor). This yields the formula
\begin{equation}
    Z^{(L,m)}_{Sp(2N)} = e^{Lm}(1-e^{-2m})^{-L(L+1)/2}\sum_{\mu}e^{-|\mu|m}s_{\mu'}(1^{L})\langle W_{\mu} \rangle_{Sp(2N)}, \label{sympLm}
\end{equation}
where the sum runs over all partitions contained in the rectangular shape $(L^{2N+L+1})$. An analogous analysis to the unitary case is can be performed now. In particular, in the $q\rightarrow1$ limit we obtain
\begin{equation*}
    \lim_{q\rightarrow1}Z^{(L,m)}_{Sp(2N)} = e^{Lm}(1+e^{-m})^{2NL}
\end{equation*}
using the dual Cauchy identity \eqref{dcauchysymp}. Thus, not only does the $(N,L)$ duality hold for the symplectic group, up to the prefactor $e^{Lm}$, but the model is actually the same as the unitary one in the $q\rightarrow1$ limit.

Also as in the unitary case, the above sum gives rise to a highly degenerated spectrum. See figure \ref{f.SpO} for an example; explicit instances for lower values of $N$ and $L$ can also be computed easily. For instance, using the fact that $sp_{(1^{k})}(x_{1},\dots,x_{N})=-sp_{(1^{2N+2-k})}(x_{1},\dots ,x_{N})$ (which follows from \eqref{JTsymp}), we obtain for $L=1$ the expression
\begin{align*}
    Z_{Sp(2N)}^{(L=1,m)}&=e^{m}(1-e^{-2m})^{-1}\sum_{k=0}^{2N+2}e^{-km}q^{Nk+k-k^{2}/2}sp_{(1^{k})}(q,\dots,q^{N}) \\
    &=e^{m}(1-e^{-2m})^{-1}\sum_{k=0}^{N}e^{-km}(1-e^{-(N-k+1)2m})q^{Nk+k-k^{2}/2}sp_{(1^{k})}(q,\dots,q^{N}) \\
    & =e^{m}\sum_{k=0}^{N}e^{-km}(1+e^{-2m}+e^{-4m}+\dots+e^{-(N+k)2m})q^{Nk+k-k^{2}/2}sp_{(1^{k})}(q,\dots,q^{N}).
\end{align*}
We see that the prefactor $(1-e^{-2m})^{-1}$ cancels due to the mentioned coincidence among symplectic characters indexed by single row partitions. The prefactor also cancels for greater values of $L$, due to the identity
\begin{equation} \label{spcoinc}
    sp_{\lambda}(x_{1},\dots,x_{N}) = (-1)^{\lambda_{1}(\lambda_{1}+1)/2} sp_{\widetilde{\lambda}}(x_{1},\dots,x_{N}),
\end{equation}
where $\widetilde{\lambda}$ is the partition that results from rotating by 180º the complement of $\lambda$ in the rectangle\footnote{For instance, we have $sp_{(32222221)}(x_{1},x_{2},x_{3}) = sp_{(332111111)}(x_{1},x_{2},x_{3})$, with $N=M=3$. The second partition $(332111111)$ is obtained after rotating the complement of the first partition $(32222221)$ in the rectangle $(3^{10})$.} $(\lambda_{1}^{2N+\lambda_{1}+1})$. See appendix A for a proof of this identity.

\begin{center}
\begin{figure}
    \begin{scaletikzpicturetowidth}{.48\textwidth}
%
%
\definecolor{mycolor1}{rgb}{0.00000,0.44700,0.74100}%
\begin{tikzpicture}[scale=\tikzscale]

\begin{axis}[%
width=4.521in,
height=3.566in,
at={(0.758in,0.481in)},
scale only axis,
xmin=0,
xmax=250,
ymin=0,
ymax=5000,
axis background/.style={fill=white},
title style={font=\bfseries},
title={Symplectic model. N= 10,L= 1}
]
\addplot [color=mycolor1, forget plot]
  table[row sep=crcr]{%
1	1\\
2	1\\
3	1\\
4	2\\
5	2\\
6	3\\
7	4\\
8	5\\
9	6\\
10	8\\
11	10\\
12	11\\
13	13\\
14	15\\
15	18\\
16	21\\
17	24\\
18	27\\
19	32\\
20	35\\
21	40\\
22	45\\
23	52\\
24	57\\
25	65\\
26	72\\
27	82\\
28	90\\
29	102\\
30	111\\
31	127\\
32	137\\
33	155\\
34	168\\
35	190\\
36	203\\
37	228\\
38	244\\
39	275\\
40	291\\
41	326\\
42	344\\
43	385\\
44	403\\
45	449\\
46	470\\
47	524\\
48	543\\
49	602\\
50	624\\
51	694\\
52	713\\
53	789\\
54	810\\
55	898\\
56	916\\
57	1012\\
58	1031\\
59	1141\\
60	1154\\
61	1272\\
62	1285\\
63	1420\\
64	1425\\
65	1569\\
66	1571\\
67	1732\\
68	1724\\
69	1895\\
70	1883\\
71	2072\\
72	2045\\
73	2244\\
74	2211\\
75	2430\\
76	2380\\
77	2608\\
78	2549\\
79	2798\\
80	2719\\
81	2978\\
82	2889\\
83	3168\\
84	3054\\
85	3342\\
86	3216\\
87	3526\\
88	3374\\
89	3691\\
90	3524\\
91	3861\\
92	3666\\
93	4008\\
94	3798\\
95	4159\\
96	3917\\
97	4281\\
98	4024\\
99	4405\\
100	4117\\
101	4498\\
102	4193\\
103	4590\\
104	4255\\
105	4650\\
106	4301\\
107	4708\\
108	4328\\
109	4729\\
110	4338\\
111	4752\\
112	4333\\
113	4737\\
114	4309\\
115	4721\\
116	4269\\
117	4670\\
118	4213\\
119	4619\\
120	4140\\
121	4531\\
122	4052\\
123	4446\\
124	3950\\
125	4326\\
126	3834\\
127	4210\\
128	3707\\
129	4065\\
130	3570\\
131	3925\\
132	3422\\
133	3755\\
134	3267\\
135	3598\\
136	3107\\
137	3415\\
138	2942\\
139	3244\\
140	2774\\
141	3054\\
142	2605\\
143	2878\\
144	2435\\
145	2684\\
146	2266\\
147	2509\\
148	2100\\
149	2319\\
150	1936\\
151	2149\\
152	1778\\
153	1969\\
154	1626\\
155	1808\\
156	1478\\
157	1639\\
158	1337\\
159	1493\\
160	1205\\
161	1341\\
162	1080\\
163	1209\\
164	963\\
165	1075\\
166	855\\
167	961\\
168	754\\
169	844\\
170	662\\
171	748\\
172	578\\
173	649\\
174	501\\
175	569\\
176	433\\
177	489\\
178	372\\
179	424\\
180	316\\
181	358\\
182	267\\
183	308\\
184	225\\
185	257\\
186	188\\
187	218\\
188	156\\
189	179\\
190	129\\
191	151\\
192	105\\
193	121\\
194	85\\
195	101\\
196	68\\
197	79\\
198	54\\
199	65\\
200	43\\
201	51\\
202	34\\
203	41\\
204	26\\
205	30\\
206	19\\
207	25\\
208	15\\
209	18\\
210	11\\
211	14\\
212	8\\
213	10\\
214	6\\
215	8\\
216	4\\
217	5\\
218	3\\
219	4\\
220	2\\
221	2\\
222	1\\
223	2\\
224	1\\
225	1\\
226	1\\
227	1\\
228	1\\
};
\end{axis}
\end{tikzpicture}%
    \end{scaletikzpicturetowidth}
    \begin{scaletikzpicturetowidth}{.48\textwidth}
%
%
\definecolor{mycolor1}{rgb}{0.00000,0.44700,0.74100}%
\begin{tikzpicture}[scale=\tikzscale]

\begin{axis}[%
width=4.521in,
height=3.566in,
at={(0.758in,0.481in)},
scale only axis,
xmin=0,
xmax=70,
ymin=0,
ymax=120,
axis background/.style={fill=white},
title style={font=\bfseries},
title={Even orthogonal model. N= 6,L= 1}
]
\addplot [color=mycolor1, forget plot]
  table[row sep=crcr]{%
1	1\\
2	1\\
3	1\\
4	2\\
5	2\\
6	3\\
7	5\\
8	5\\
9	7\\
10	8\\
11	12\\
12	12\\
13	17\\
14	16\\
15	24\\
16	22\\
17	32\\
18	28\\
19	42\\
20	35\\
21	51\\
22	43\\
23	62\\
24	49\\
25	72\\
26	56\\
27	82\\
28	62\\
29	89\\
30	66\\
31	97\\
32	69\\
33	99\\
34	70\\
35	101\\
36	69\\
37	100\\
38	66\\
39	96\\
40	62\\
41	89\\
42	56\\
43	82\\
44	49\\
45	71\\
46	43\\
47	62\\
48	35\\
49	51\\
50	28\\
51	41\\
52	22\\
53	31\\
54	16\\
55	24\\
56	12\\
57	17\\
58	8\\
59	11\\
60	5\\
61	7\\
62	3\\
63	4\\
64	2\\
65	2\\
66	1\\
67	2\\
};
\end{axis}
\end{tikzpicture}%
    \end{scaletikzpicturetowidth}
\caption{For each $n$ in the $x$ axis, the $y$ axis shows the coefficient of the monomial $q^{n/2}$ in $Z^{(L=1,m=0)}_{Sp(20)}$ (left) and $Z^{(L=1,m=0)}_{O(12)}$ (right).}
\label{f.SpO}
\end{figure}
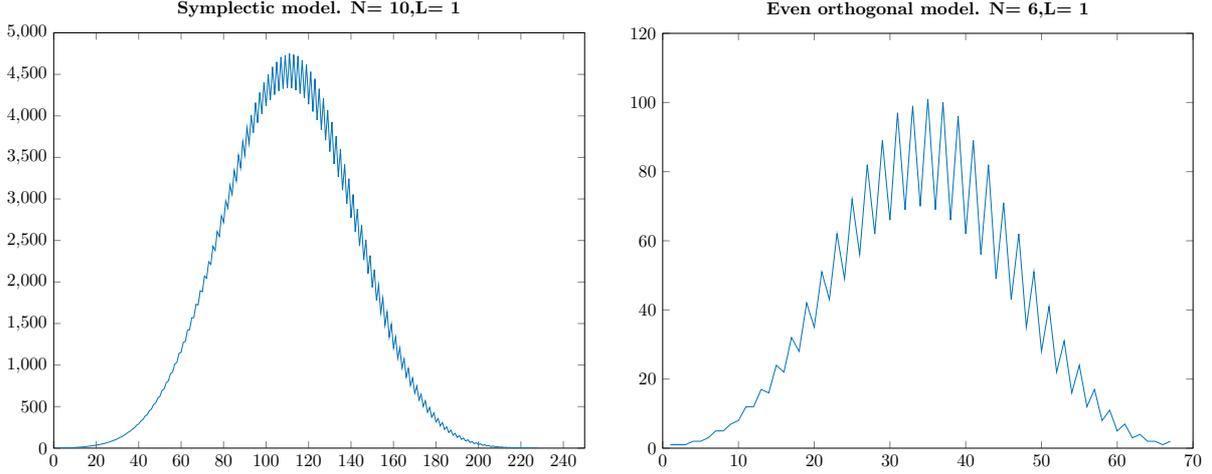
\end{center}

\subsubsection*{Large-$N$ limit}
Using identity \eqref{qdimG} and the dual Cauchy identity \eqref{cauchyun} we see that if $m\neq0$ we have
\begin{align}
    \lim_{N\rightarrow\infty}&Z^{(L,m)}_{Sp(2N)} = e^{Lm}(1-e^{-2m})^{-L(L+1)/2}\prod_{k=1}^{\infty}\frac{1}{(1-e^{-m}q^{k-1/2})^{L}}. \nonumber
\intertext{For the massless case, we can proceed as in the unitary model, and use known results on the asymptotics of Toeplitz$\pm$Hankel determinants generated by functions with Fisher-Hartwig singularities. It follows from \eqref{FHGN} that for a single singularity at $-1$ with parameters $\alpha=L$ and $\beta=0$ we have}
    &Z^{(L,m=0)}_{Sp(2N)} \sim \left(\frac{N}{2}\right)^{L(L+1)/2}\frac{\pi^{L/2}G(3/2)}{G(3/2+L)}\prod_{k=1}^{\infty}\frac{1}{(1-q^{k-1/2})^{L}} \label{FHsymp}
\end{align}
as $N\rightarrow\infty$. Table \ref{t.FH} shows some numerical tests of the accuracy of this formula.

\subsection{Orthogonal groups}

A similar reasoning applies to the orthogonal groups. For the even orthogonal group, it follows from \eqref{dcauchyeort} that
\begin{equation} \label{eoLm}
    Z^{(L,m)}_{SO(2N)} = e^{Lm}(1-e^{-2m})^{-L(L-1)/2}\sum_{\mu}e^{-|\mu|m}s_{\mu'}(1^{L})\langle W_{\mu} \rangle_{SO(2N)}.
\end{equation}
The even orthogonal characters verify $o_{\mu}^{even}(x_{1},\dots,x_{N})=0$ if $l(\mu)-\mu_{1}+1>2N$, and thus the sum above is now over all the partitions $\mu$ contained in the rectangle $(L^{2N+L-1})$ (a similar reasoning to the symplectic case holds, and in the end one can replace every even orthogonal Schur function $o_{\mu}^{even}(U)$ in the sum by the corresponding Wilson loop $\langle W_{\mu} \rangle_{SO(2N)}$). See figure \ref{f.SpO} for an example of this spectrum. A direct computation shows also that for $L=1$ the sum simplifies to
\begin{align*}
    &Z^{(L=1,m)}_{SO(2N)} = e^{m}\sum_{k=0}^{2N}e^{-km}q^{Nk-k^{2}/2}o_{(1^{k})}(1,q,\dots,q^{N-1}) = \\
    &=e^{m}\sum_{k=0}^{N-1}e^{-km}(1+e^{-(N-k)2m})q^{Nk-k^{2}/2}o_{(1^{k})}(1,q,\dots,q^{N-1}) + e^{-(N-1)m}q^{N^{2}/2}o_{(1^{N})}(1,q,\dots,q^{N-1}).
\end{align*}
As in the symplectic model, the prefactor $(1-e^{-2m})^{-L(L-1)/2}$ in \eqref{eoLm} cancels for higher values of $L$, due to the identity
\begin{equation} \label{oecoinc}
    o^{even}_{\lambda}(x_{1},\dots,x_{N}) = (-1)^{\lambda_{1}(\lambda_{1}-1)/2}o_{\widetilde{\lambda}}^{even}(x_{1},\dots,x_{N}),
\end{equation}
where $\widetilde{\lambda}$ is the partition obtained from rotating 180º the complement of $\lambda$ in the rectangular diagram $(\lambda_{1}^{2N+\lambda_{1}-1})$. See appendix A for a proof of identity \eqref{oecoinc}.

For the odd orthogonal group we have 
\begin{equation}
    Z^{(L,m)}_{SO(2N+1)} = e^{Lm}(1+e^{-m})^{-L}(1-e^{-2m})^{-L(L-1)/2}\sum_{\mu}e^{-|\mu|m}s_{\mu'}(1^{L})\langle W_{\mu} \rangle_{SO(2N+1)}, \label{ooLm}
\end{equation}
using \eqref{dcauchyoort}. Since $o_{\mu}^{odd}(x_{1},\dots,x_{N})=0$ whenever $l(\mu)-\mu_{1}>2N$, we see that the sum runs now over all the partitions $\mu$ contained in the rectangular shape $(L^{2N+L})$. The $L=1$ model can be computed explicitely, yielding 
\begin{align*}
    Z^{(L=1,m)}_{SO(2N+1)} &= e^{m}(1+e^{-m})^{-1}\sum_{k=0}^{2N+1}e^{-km}q^{Nk+k/2-k^{2}/2}o_{(1^{k})}^{odd}(q^{1/2},q^{3/2},\dots,q^{N-1/2}) = \\
    &=e^{m}(1+e^{-m})^{-1}\sum_{k=0}^{N}e^{-km}(1+e^{-(N-k+1/2)2m})q^{Nk+k/2-k^{2}/2}o_{(1^{k})}^{odd}(q^{1/2},\dots,q^{N-1/2}).
\end{align*}
As above, the prefactor $(1-e^{-2m})^{-L(L-1)/2}$ cancels for every $L$, in this time because of the identity
\begin{equation} \label{oocoinc}
    o^{odd}_{\lambda}(x_{1},\dots,x_{N}) = (-1)^{\lambda_{1}(\lambda_{1}-1)/2}o^{odd}_{\widetilde{\lambda}}(x_{1},\dots,x_{N}),
\end{equation}
where $\widetilde{\lambda}$ is the complement of the partition $\lambda$ in the rectangle $(\lambda_{1}^{2N+\lambda_{1}})$, rotated by 180º.

Using the dual Cauchy identities \eqref{dcauchyeort},\eqref{dcauchyoort} and identities \eqref{kauffeort} and \eqref{kauffoort} we see that also for the orthogonal models we have that
\begin{equation*}
    \lim_{q\rightarrow1}Z^{(L,m)}_{SO(2N)} = \lim_{q\rightarrow1}Z^{(L,m)}_{SO(2N+1)} = e^{Lm}(1+e^{-m})^{2NL},
\end{equation*}
preserving the $(N,L)$ duality and coincidence of the models in this limit.

\subsubsection*{Large-$N$ limit}
As in the symplectic model, using \eqref{qdimG} and the Cauchy identity \eqref{cauchyun} we see that if $m\neq 0$ then we have
\begin{align*}
    \lim_{N\rightarrow\infty}&Z^{(L,m)}_{SO(2N)} = e^{Lm}(1-e^{-2m})^{-L(L-1)/2}\prod_{k=1}^{\infty}\frac{1}{(1-e^{-m}q^{k-1/2})^{L}}
\intertext{and}
    \lim_{N\rightarrow\infty}&Z^{(L,m)}_{SO(2N+1)} = e^{Lm}(1+e^{-m})^{-L}(1-e^{-2m})^{-L(L-1)/2}\prod_{k=1}^{\infty}\frac{1}{(1-e^{-m}q^{k-1/2})^{L}}.
\end{align*}
If $m=0$ we can use again the known results on Fisher-Hartwig asymptotics reviewed in the appendix \eqref{FHGN} to obtain that, as $N\rightarrow\infty$,
\begin{align} \begin{split}
    &Z^{(L,m=0)}_{SO(2N)} \sim \left(\frac{N}{2}\right)^{L(L-1)/2}\frac{(4\pi)^{L/2}G(1/2)}{G(1/2+L)}\prod_{k=1}^{\infty}\frac{1}{(1-q^{k-1/2})^{L}}, \\
    &Z^{(L,m=0)}_{SO(2N+1)} \sim \left(\frac{N}{2}\right)^{L(L-1)/2}\frac{(\pi/4)^{L/2}G(1/2)}{G(1/2+L)}\prod_{k=1}^{\infty}\frac{1}{(1-q^{k-1/2})^{L}}. \end{split} \label{FHort}
\end{align}

\section{Conclusions and Outlook}

We summarize now the new and main results, since we have given considerable background mathematical results. We did so to make the work more self-contained and because the use of symmetric functions in random matrix computations of gauge theories is not so widespread. 

The results presented in Section 2 contain background material and some new results. While new, Theorems 1, 2 and 5 follow straightforwardly from known and classical results. Theorems 3, 4 and 6 are more involved. Whenever there is some partial overlap with the literature, as happens in Corollary 1 for example, detailed references (to the best of our knowledge) are provided.

Theorem $6$ is one of the main results, where we show that averages of the product of two characters are independent, in the large $N$ limit, of the Lie group $G$. We could relate this result, for the particular case of one character insertion -covered by our theorem- to the folklore result in gauge theory of the coincidence of Wilson loops averages in the large $N$ limit, regardless of the gauge group being the unitary, orthogonal or symplectic group. If we deal with a gauge theory that has a matrix model description (for example through localization), that is the content of our Theorem 6.

One instance of such theory is Chern-Simons theory on $S^3$, and we use in Section 3 some of the techniques explained to compute explicitly observables of that theory, up to Hopf links, for symplectic and orthogonal gauge groups.

As explained in the Introduction and in the last Section, some fermionic models with matrix degrees of freedom have been found to have a description (for the partition function) in terms of a characteristic polynomial insertion in a $U(N)$ Chern-Simons matrix model (corresponding to a Stieltjes-Wigert ensemble). We use the tools and results explained in this article to study such an average but considering instead matrix integration over $G(N)$. We obtain the same type of results as in the previous works: the Hilbert space picture of the partition function holds, in the sense that all the coefficients in the partition function are positive integers, and therefore interpreted as degeneracies of the quantum energy levels. 

We computed these degeneracies and energy levels using character expansion and plotted the corresponding distributions, but it would be interesting to study with more detail the statistical properties of such distribution.

\begin{acknowledgements}
We thank Jorge Lobera, Elia Bisi and David Pérez-García for valuable discussions and correspondence. The work of DGG was supported by the Fundação para a Ciência e a Tecnologia through the LisMath scholarship PD/BD/113627/2015. The work of MT was partially supported by the Fundação para a Ciência e a Tecnologia through its program Investigador FCT IF2014, under contract IF/01767/2014. The  work  is  also  partially supported by FCT Project PTDC/MAT-PUR/30234/2017.
\end{acknowledgements}

\section*{Appendix A: Characters of $G(N)$ and symmetric functions}

We summarize below some basic facts about partitions, the characters of the classical groups and symmetric functions, and list some of their properties. See \cite{MacDonald},\cite{KoikeTerada} for more details.

A partition $\lambda=(\lambda_{1},\dots,\lambda_{l})$ is a finite and non-increasing sequence of positive integers. The number of nonzero entries is called the length of the partition and is denoted by $l(\lambda)$, and the sum $|\lambda|=\lambda_{1}+\dots+\lambda_{l(\lambda)}$ is called the weight of the partition. The entry $\lambda_{j}$ is understood to be zero whenever the index $j$ is greater than the length of the partition. The notation $(a^{b})$ stands for the partition with exactly $b$ nonzero entries, all equal to $a$. A partition can be represented as a Young diagram, by placing $\lambda_{j}$ left-justified boxes in the $j$-th row of the diagram; the conjugate partition $\lambda^{\prime }$ is then obtained as the partition which diagram has as rows the columns of the diagram of $\lambda$.

Let $\lambda$ be a partition of length $l(\lambda)\leq N$. The characters associated to the irreducible representation indexed by $\lambda$ of each of the groups $G(N)$ are given by\footnote{Recall that the character \eqref{ooddchar} does not correspond to an irreducible representation of $O(2N)$ if $\lambda_{N}\neq0$. This fact is not relevant for our purposes so we ignore it throughout the paper and work with the algebraic expression \eqref{ooddchar}; minor modifications to the derivations allow a treatment of the general case.}
\begin{align}
    &\chi^{\lambda}_{U(N)}(U)=\frac{\det {M_{U(N)}^{\lambda }(z)}}{\det {M_{U(N)}(z)}}=\frac{\det {\left( z_{j}^{N-k+\lambda _{k}}\right) _{j,k=1}^{N}}}{\det {\left( z_{j}^{N-k}\right) _{j,k=1}^{N}}},  \label{unchar} \\
    &\chi^{\lambda}_{Sp(2N)}(U)=\frac{\det {M_{Sp(2N)}^{\lambda }(z)}}{\det {M_{Sp(2N)}(z)}}=\frac{\det {\left(z_{j}^{N-k+\lambda_{k}+1}-z_{j}^{-(N-k+\lambda_{k}+1)}\right)_{j,k=1}^{N}}}{\det {\left(z_{j}^{N-k+1}-z_{j}^{-(N-k+1)}\right) _{j,k=1}^{N}}}, \\
    &\chi^{\lambda}_{SO(2N)}(U)=\frac{\det {M_{SO(2N)}^{\lambda }(z)}}{\det {M_{SO(2N)}(z)}}=\frac{\det {\left(z_{j}^{N-k+\lambda_{k}}+z_{j}^{-(N-k+\lambda _{k})}\right) _{j,k=1}^{N}}}{\det {\left( z_{j}^{N-k}+z_{j}^{-(N-k)}\right) _{j,k=1}^{N}}}, \label{ooddchar} \\
    &\chi^{\lambda}_{SO(2N+1)}(U)=\frac{\det {M_{SO(2N+1)}^{\lambda }(z)}}{\det {M_{SO(2N+1)}(z)}}=\frac{\det {\left( z_{j}^{N-k+\lambda _{k}+\frac{1}{2}}-z_{j}^{-(N-k+\lambda _{k}+\frac{1}{2})}\right) _{j,k=1}^{N}}}{\det {\left(z_{j}^{N-k+\frac{1}{2}}-z_{j}^{-(N-k+\frac{1}{2})}\right) _{j,k=1}^{N}}}, \label{oevenchar}
\end{align}
where the $z_{j}=e^{i\theta_{j}}$ are the nontrivial eigenvalues of the matrices $U$. The determinants in the denominators above have the explicit evaluations \cite{Kratt}
\begin{align}
    &\det{M_{U(N)}(z)} = \det{\left(z_{j}^{N-k}\right)_{j,k=1}^{N}} = \prod_{1\leq j<k\leq N}(z_{j}-z_{k}),  \label{detun} \\
    &\det{M_{Sp(2N)}(z)} = \det{\left(z_{j}^{N-k+1}-z_{j}^{-(N-k+1)}\right)_{j,k=1}^{N}} = \prod_{1\leq j<k\leq N}(z_{j}-z_{k})(1-z_{j}z_{k})\prod_{j=1}^{N} (z_{j}^{2}-1)z_{j}^{-N}, \\
    &\det{M_{SO(2N)}(z)} = \det{\left(z_{j}^{N-k+\frac{1}{2}}-z_{j}^{-(N-k+\frac{1}{2})}\right)_{j,k=1}^{N}} = 2\prod_{1\leq j<k\leq N}(z_{j}-z_{k})(1-z_{j}z_{k})\prod_{j=1}^{N}z_{j}^{-N+1}, \\
    &\det{M_{SO(2N+1)}(z)} = \det{\left(z_{j}^{N-k}+z_{j}^{-(N-k)}\right)_{j,k=1}^{N}} = \prod_{1\leq j<k\leq N}(z_{j}-z_{k})(1-z_{j}z_{k})\prod_{j=1}^{N}(z_{j}-1)z_{j}^{-N+1/2}. \label{detone}
\end{align}
Given a (possibly infinite) set of variables $x=(x_{1},x_{2},\dots)$, the complete homogeneous and elementary symmetric polynomials are defined as
\begin{equation}  \label{ek}
    h_{k}(x) = \sum_{i_{1}\leq \dots \leq i_{k}}x_{i_{1}}\dots x_{i_{k}}, \qquad e_{k}(x) = \sum_{i_{1}< \dots < i_{k}}x_{i_{1}}\dots x_{i_{k}},
\end{equation}
respectively, for every positive integer $k$, together with the conditions $h_{0}=e_{0}=1$ and $h_{k}=e_{k}=0$ for negative integers $k$. Using these functions, one can define the Schur, symplectic Schur, and even/odd orthogonal Schur functionsby means of the Jacobi-Trudi identities 
\begin{align}
    s_{\lambda}(x) =& \det\left(h_{j-k+\lambda_{k}}(x)\right)_{j,k=1}^{l(\lambda)} = \det\left(e_{j-k+\lambda_{k}^{\prime}}(x)\right)_{j,k=1}^{\lambda_{1}}, \label{JTschur} \\
    sp_{\lambda}(x) =& \frac{1}{2}\det\left(h_{\lambda_{j}-j+k}(x,x^{-1})+h_{\lambda_{j}-j-k+2}(x,x^{-1})\right)_{j,k=1}^{l(\lambda)} \label{JTsymp} \\
    =& \det\left(e_{\lambda_{j}^{\prime}-j+k}(x,x^{-1})-e_{\lambda_{j}^{\prime}-j-k}(x,x^{-1})\right)_{j,k=1}^{\lambda_{1}}  \label{JTsympd} \\
    o^{even}_{\lambda}(x) =& \det\left(h_{\lambda_{j}-j+k}(x,x^{-1})-h_{\lambda_{j}-j-k}(x,x^{-1})\right)_{j,k=1}^{l(\lambda)} \label{JTeort} \\
    =& \frac{1}{2}\det\left(e_{\lambda_{j}^{\prime}-j+k}(x,x^{-1})+e_{\lambda_{j}^{\prime }-j-k+2}(x,x^{-1})\right)_{j,k=1}^{\lambda_{1}}, \label{JTeortd} \\
    o^{odd}_{\lambda}(x) =& \det\left(h_{\lambda_{j}-j+k}(x,x^{-1},1)-h_{\lambda_{j}-j-k}(x,x^{-1},1)\right)_{j,k=1}^{l(\lambda)} \label{JToort} \\
    =& \frac{1}{2}\det\left(e_{\lambda_{j}^{\prime}-j+k}(x,x^{-1},1)+e_{\lambda_{j}^{\prime }-j-k+2}(x,x^{-1},1)\right)_{j,k=1}^{\lambda_{1}}. \label{JToortd}
\end{align}
They satisfy the Cauchy identities 
\begin{align}
    &\sum_{\nu}s_{\nu}(x)s_{\nu}(y) = \prod_{i,j=1}^{\infty}\frac{1}{1-x_{i}y_{j}},  \label{cauchyun} \\
    &\sum_{\nu}sp_{\nu}(x)s_{\nu}(y) =\prod_{i<j}(1-y_{i}y_{j})\prod_{i,j=1}^{\infty}\frac{1}{1-x_{i}y_{j}}\frac{1}{1-x_{i}^{-1}y_{j}},  \label{cauchysymp} \\
    &\sum_{\nu}o_{\nu}^{even}(x)s_{\nu}(y) = \prod_{i\leq j}(1-y_{i}y_{j})\prod_{i,j=1}^{\infty}\frac{1}{1-x_{i}y_{j}}\frac{1}{1-x_{i}^{-1}y_{j}},  \label{cauchyeort} \\
    &\sum_{\nu}o^{odd}_{\nu}(x)s_{\nu}(y) = \prod_{i\leq j}(1-y_{i}y_{j})\prod_{i,j=1}^{\infty}\frac{1}{1-x_{i}y_{j}}\frac{1}{1-x_{i}^{-1}y_{j}}\prod_{j=1}^{\infty}\frac{1}{1-y_{j}},  \label{cauchyoort}
\intertext{and dual Cauchy identities}
    &\sum_{\nu}s_{\nu}(x)s_{\nu^{\prime }}(y) = \prod_{i,j=1}^{\infty}(1+x_{i}y_{j}),  \label{dcauchyun} \\
    &\sum_{\nu}sp_{\nu}(x)s_{\nu^{\prime }}(y) = \prod_{i\leq j}(1-y_{i}y_{j})\prod_{i,j=1}^{\infty}(1+x_{i}y_{j})(1+x_{i}^{-1}y_{j}), \label{dcauchysymp} \\
    &\sum_{\nu}o_{\nu}^{even}(x)s_{\nu^{\prime }}(y) = \prod_{i<j}(1-y_{i}y_{j})\prod_{i,j=1}^{\infty}(1+x_{i}y_{j})(1+x_{i}^{-1}y_{j}).  \label{dcauchyeort} \\
    &\sum_{\nu}o^{odd}_{\nu}(x)s_{\nu^{\prime }}(y) = \prod_{i<j}(1-y_{i}y_{j})\prod_{i,j=1}^{\infty}(1+x_{i}y_{j})(1+x_{i}^{-1}y_{j})\prod_{j=1}^{\infty}(1+y_{j}). \label{dcauchyoort}
\end{align}

Since the groups $Sp(2N),O(2N),O(2N+1)$ can be embedded on the unitary group $U(2N)$ or $U(2N+1)$, the irreducible characters on each of these groups can be expressed in terms of the others, after applying the specialization homomorphisms $(z_{1},\dots,z_{2N})\mapsto (z_{1},\dots,z_{N},z_{1}^{-1},\dots,z_{N}^{-1})$ (for $Sp(2N)$,$O(2N)$) or $(z_{1},\dots,z_{2N+1})\mapsto (z_{1},\dots,z_{N},z_{1}^{-1},\dots,z_{N}^{-1},1)$ (for $O(2N+1)$). When seen as universal characters in the the ring symmetric functions, they have the following expansions \cite{KoikeTerada} 
\begin{align}
    &s_{\lambda}(x,x^{-1}) = \sum_{\alpha}\sum_{\beta^{\prime}\,even}c^{\lambda}_{\alpha\beta}sp_{\alpha}(x),  \label{restschursp} \\
    &s_{\lambda}(x,x^{-1}) = \sum_{\alpha}\sum_{\beta\,even}c^{\lambda}_{\alpha\beta}o^{even}_{\alpha}(x), \label{restschureort} \\
    &s_{\lambda}(x,x^{-1},1) = \sum_{\alpha}\sum_{\beta\,even}c^{\lambda}_{\alpha\beta}o^{odd}_{\alpha}(x), \label{restschuroort}
\end{align}
where $c^{\lambda}_{\alpha,\beta}$ are Littlewood-Richardson coefficients and we say that a partition is even if it has only even parts. Reciprocally,
\begin{align}
    &sp_{\lambda}(x) = \sum_{\alpha}\sum_{\beta\in T(N)}(-1)^{|\beta|/2}c^{\lambda}_{\alpha\beta}s_{\alpha}(x,x^{-1}) = \sum_{\beta\in T(N)} (-1)^{|\beta|/2}s_{\lambda/\beta}(x,x^{-1}), \label{restspschur} \\
    &o^{even}_{\lambda}(x) = \sum_{\alpha}\sum_{\beta\in R(N)}(-1)^{|\beta|/2}c^{\lambda}_{\alpha\beta}s_{\alpha}(x,x^{-1}) = \sum_{\beta\in R(N)} (-1)^{|\beta|/2}s_{\lambda/\beta}(x,x^{-1}) \label{resteortschur} \\
    &o^{odd}_{\lambda}(x) = \sum_{\alpha}\sum_{\beta\in R(N)}(-1)^{|\beta|/2}c^{\lambda}_{\alpha\beta}s_{\alpha}(x,x^{-1},1) = \sum_{\beta\in R(N)} (-1)^{|\beta|/2}s_{\lambda/\beta}(x,x^{-1},1) \label{restoortschur}
\end{align}
where $T(N)$ and $R(N)$ are the sets defined before theorem \ref{th.hopflinks}.

Let us record here a proof of identities \eqref{spcoinc},\eqref{oecoinc} and \eqref{oocoinc}, as we have been unable to find them in the literature.

\begin{theorem}
Let $\lambda=(1^{a_{1}}2^{a_{2}}\dots M^{a_{M}})$ be a partition, written in frequency notation. That is, $\lambda$ is the partition with exactly $a_{M}$ parts equal to $M$, $a_{M-1}$ parts equal to $M-1$, and so on. We have
\begin{equation*}
    sp_{\lambda}(x_{1},x_{2},\dots,x_{N}) = (-1)^{M(M+1)/2}sp_{\widetilde{\lambda}}(x_{1},x_{2},\dots,x_{N}),
\end{equation*}
where $\widetilde{\lambda}=(1^{a_{M-1}}2^{a_{M-2}}\dots (M-2)^{a_{2}}(M-1)^{a_{1}}M^{2N+M+1-a_{1}-a_{2}-\dots-a_{M}})$ is the partition that results from rotating 180º the complement of $\lambda$ in the rectangular diagram $(M^{2N+M+1})$.
\end{theorem}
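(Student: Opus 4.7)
The plan is to prove the identity by directly comparing the Jacobi--Trudi determinantal expressions \eqref{JTsympd} for $sp_\lambda$ and $sp_{\widetilde{\lambda}}$. Since $\lambda_1 = \widetilde{\lambda}_1 = M$, both determinants have the same $M\times M$ format, so the comparison reduces to matching the two matrices entrywise via one application of the elementary-symmetric-function symmetry \eqref{elemid} together with a row reversal.

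First I would establish the conjugate identity
\begin{equation*}
    \widetilde{\lambda}'_j \;=\; (2N+M+1) - \lambda'_{M-j+1}, \qquad j=1,\dots,M.
\end{equation*}
This follows by a direct counting argument from the defining relation $\widetilde{\lambda}_i = M - \lambda_{(2N+M+2)-i}$ (with $\lambda$ padded by zeros up to length $2N+M+1$): the number of indices $i$ with $\widetilde{\lambda}_i \geq j$ equals the number of $l$ with $\lambda_l \leq M-j$, which in turn equals $(2N+M+1)-\lambda'_{M-j+1}$. Substituting this into \eqref{JTsympd} for $\widetilde\lambda$ and applying \eqref{elemid} in the form $e_m(x,x^{-1}) = e_{2N-m}(x,x^{-1})$ to each subscript, the $(j,k)$ entry of the resulting matrix becomes
\begin{equation*}
    e_{\lambda'_{M-j+1} + j - k - M - 1}(x,x^{-1}) - e_{\lambda'_{M-j+1} + j + k - M - 1}(x,x^{-1}).
\end{equation*}

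Next I would reverse the order of the rows via the substitution $j\mapsto M-j+1$; after cancellation the entry simplifies to $e_{\lambda'_j - j - k}(x,x^{-1}) - e_{\lambda'_j - j + k}(x,x^{-1})$, which is precisely $-1$ times the $(j,k)$ entry of the Jacobi--Trudi matrix for $sp_\lambda$. Combining the sign $(-1)^{M(M-1)/2}$ from the row-reversal permutation with the sign $(-1)^M$ obtained by pulling $-1$ out of each of the $M$ rows, one arrives at
\begin{equation*}
    sp_{\widetilde{\lambda}}(x) \;=\; (-1)^{M(M-1)/2 + M}\,sp_{\lambda}(x) \;=\; (-1)^{M(M+1)/2}\,sp_{\lambda}(x),
\end{equation*}
which is equivalent to the stated identity since $(-1)^{M(M+1)/2}$ squares to one. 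The only step requiring care is the conjugate relation $\widetilde{\lambda}'_j = (2N+M+1)-\lambda'_{M-j+1}$, since it hinges on correctly padding $\lambda$ with zeros to length $2N+M+1$; everything else is routine $M\times M$ determinant bookkeeping, and I expect no substantive obstacle beyond this combinatorial check.
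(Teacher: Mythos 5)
Your proof is correct, and it rests on the same two ingredients as the paper's argument --- the dual Jacobi--Trudi identity \eqref{JTsympd} and the palindromic symmetry $e_{m}(x,x^{-1})=e_{2N-m}(x,x^{-1})$ of \eqref{elemid} --- but your execution is more direct. The paper first rewrites each entry $e_{\lambda'_j-j+k}-e_{\lambda'_j-j-k}$ as a telescoping sum of single-column characters $sp_{(1^{m})}$, applies the reflection $sp_{(1^{m})}=-sp_{(1^{2N+2-m})}$ entrywise, reverses the rows, and only then identifies the conjugate of the resulting partition a posteriori from the transformed determinant. You instead compute $\widetilde{\lambda}'$ up front via the complement relation $\widetilde{\lambda}'_j=(2N+M+1)-\lambda'_{M-j+1}$ (your counting argument for this is right), substitute it into \eqref{JTsympd} for $sp_{\widetilde{\lambda}}$, and reduce to the matrix for $sp_{\lambda}$ by one application of \eqref{elemid} followed by a row reversal; the total sign $(-1)^{M(M-1)/2+M}=(-1)^{M(M+1)/2}$ then falls out transparently, whereas in the paper it is split between the entrywise reflection and the (only implicitly tracked) signature of the row-reversal permutation. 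What your route buys is that the target partition $\widetilde{\lambda}$ appears by direct computation rather than by recognition at the end, and the single-column expansion is avoided entirely. The one caveat, shared with the paper, is the tacit assumption that $\widetilde{\lambda}_1=M$, i.e.\ $l(\lambda)\leq 2N+M$, so that both Jacobi--Trudi determinants have size $M\times M$; this is harmless because \eqref{JTsympd} is stable under enlarging the determinant (the extra rows coming from $\widetilde{\lambda}'_j=0$ form a unitriangular block), but it is worth a sentence.
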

\begin{proof}
First of all, note that for $\lambda$ as above we have
\begin{equation*}
    \lambda'=(a_{M}+a_{M-1}+\dots+a_{1},a_{M}+a_{M-1}+\dots+a_{2},\dots,a_{M}+a_{M-1},a_{M}),
\end{equation*}
using the standard notation for partitions. Let us denote the $j$-th entry of $\lambda'$ by $b_{j}$ to simplify the exposition. It follows from the Jacobi-Trudi identity \eqref{JTsympd} and \eqref{elemid} that
\begin{align*}
    &(-1)^{M}sp_{\lambda} = (-1)^{M}\begin{vmatrix} e_{b_{1}}-e_{b_{1}-2} & e_{b_{1}+1}-e_{b_{1}-3} & \dots & e_{b_{1}+M-1}-e_{b_{1}-M-1} \\
    e_{b_{2}-1}-e_{b_{2}-3} & e_{b_{2}}-e_{b_{2}-4} & \dots & e_{b_{2}+M-2}-e_{b_{2}-M-2} \\
    \vdots & \vdots & & \vdots \\
    e_{b_{M-1}-M+2}-e_{b_{M-1}-M} & e_{b_{M-1}-M+3}-e_{b_{M-1}-M-1} & \dots & e_{b_{M-1}+1}-e_{b_{M-1}-2M+1} \\
    e_{b_{M}-M+1}-e_{b_{M}-M-1} & e_{b_{M}-M+2}-e_{b_{M}-M-2} & \dots & e_{b_{M}}-e_{b_{M}-2M}
    \end{vmatrix} = \nonumber \\
    &(-1)^{M}\begin{vmatrix} sp_{(1^{b_{1}})} & sp_{(1^{b_{1}+1})}+sp_{(1^{b_{1}-1})} & \dots & sp_{(1^{b_{1}+M-1})}+\dots+sp_{(1^{b_{1}-M+1})} \\
    sp_{(1^{b_{2}-1})} & sp_{(1^{b_{2}})}+sp_{(1^{b_{2}-2})} & \dots & sp_{(1^{b_{2}+M-2})}+\dots+sp_{(1^{b_{2}-M})} \\
    \vdots & \vdots & & \vdots \\
    sp_{(1^{b_{M-1}-M+2})} & sp_{(1^{b_{M-1}-M+3})}+sp_{(1^{b_{M-1}-M+1})} & \dots & sp_{(1^{b_{M-1}+1})}+\dots+sp_{(1^{b_{M-1}-2M+3})} \\
    sp_{(1^{b_{M}-M+1})} & sp_{(1^{b_{M}-M+2})}+sp_{(1^{b_{M}-M})} & \dots & sp_{(1^{b_{M}})}+\dots+sp_{(1^{b_{M}-2M+2})}
    \end{vmatrix} = \nonumber \\
    &\begin{vmatrix} sp_{(1^{2N+2-b_{1}})} & sp_{(1^{2N+1-b_{1}})}+sp_{(1^{2N+3-b_{1}})} & \dots & sp_{(1^{2N+3-M-b_{1}})}+\dots+sp_{(1^{2N+1+M-b_{1}})} \\
    sp_{(1^{2N+3-b_{2}})} & sp_{(1^{2N+2-b_{2}})}+sp_{(1^{2N+4-b_{2}})} & \dots & sp_{(1^{2N+4-M-b_{2}})}+\dots+sp_{(1^{2N+2+M-b_{2}})} \\
    \vdots & \vdots & & \vdots \\
    sp_{(1^{2N+M-b_{M-1}})} & sp_{(1^{2N-1+M-b_{M-1}})}+sp_{(1^{2N+1+M-b_{M-1}})} & \dots & sp_{(1^{2N+1-b_{M-1}})}+\dots+sp_{(1^{2N-1+2M-b_{M-1}})} \\
    sp_{(1^{2N+1+M-b_{M}})} & sp_{(1^{2N+M-b_{M}})}+sp_{(1^{2N+2+M-b_{M}})} & \dots & sp_{(1^{2N+2-b_{M}})}+\dots+sp_{(1^{2N+2M-b_{M}})}
    \end{vmatrix} \nonumber
\end{align*}
(we have omitted the dependence on $x$ for ease of notation). Reversing the order of the rows of the last determinant above, we see that it corresponds to another symplectic Schur function indexed by some partition $\mu$. Comparing this with the second determinant above we see that $\mu$ verifies
\begin{align*}
    \mu_{1}' &= 2N+M+1-b_{M} = 2N+M+1-a_{M}, \\
    \mu_{2}' &= 2N+M+1-b_{M-1} = 2N+M+1-a_{M}-a_{M-1}, \\
    &\vdots \\
    \mu_{M}' &= 2N+M+1-b_{1} = 2N+M+1-a_{M}-a_{M}-a_{M-1}-\dots-a_{1},
\end{align*}
proving the desired result.
\end{proof}
The proof of identities \eqref{oecoinc} and \eqref{oocoinc} follows analogously, using the corresponding Jacobi-Trudi identities.


\section*{Appendix B: Large-$N$ limit of Toeplitz and Toeplitz$\pm$Hankel
determinants}

The classical strong Szeg\H o limit theorem describes the large-$N$ behaviour of Toeplitz determinants generated by sufficiently smooth functions. We record below its statement and a generalization for the determinants of Toeplitz$\pm$Hankel matrices due to Johansson \cite{Joh} (see also \cite{BasorEhrhardt,BasEhr2017}).

\begin{theorem*}[Szeg\H o, Johansson]
Let $f(e^{i\theta})=\exp(\sum_{k=1}^{\infty}V_{k}e^{ik\theta})$, with $%
\sum_{k}|V_{k}|<\infty$ and $\sum_{k}k|V_{k}|^{2}<\infty$, and define $f(U)$
by formula \eqref{fprod}, where $U$ belongs to any of the groups $G(N)$. We
have 
\begin{align}
&\lim_{N\rightarrow\infty}\int_{U(N)}f(U)dU
=\exp\left(\sum_{k=1}^{\infty}kV_{k}^{2}\right).  \label{szego} \\
&\lim_{N\rightarrow\infty}\int_{Sp(2N)}f(U)dU = \exp\left(\frac{1}{2}%
\sum_{k=1}^{\infty}kV_{k}^{2}-\sum_{k=1}^{\infty}V_{2k} \right),
\label{szegosymp} \\
&\lim_{N\rightarrow\infty}\int_{SO(2N)}f(U)dU = \exp\left(\frac{1}{2}%
\sum_{k=1}^{\infty}kV_{k}^{2}+\sum_{k=1}^{\infty}V_{2k} \right),
\label{szegoeort} \\
&\lim_{N\rightarrow\infty}\int_{SO(2N+1)}f(U)dU = \exp\left(\frac{1}{2}%
\sum_{k=1}^{\infty}kV_{k}^{2}-\sum_{k=1}^{\infty}V_{2k-1} \right).
\label{szegooort}
\end{align}
\end{theorem*}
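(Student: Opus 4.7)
The strategy is to apply theorem \ref{thgessel}, which recasts each integral as a Schur function series, and then to sum that series using the Cauchy identities \eqref{cauchyun}--\eqref{cauchyoort}, the whole computation being carried out under a single specialization of the ring of symmetric functions. Concretely, I would identify $f(e^{i\theta}) = \exp(\sum_{k\geq 1}V_k e^{ik\theta})$ with $H(x;e^{i\theta}) = \prod_j(1-x_je^{i\theta})^{-1}$ under the specialization $\phi:\Lambda\to\mathbb{C}$ characterized by $\phi(p_k) = kV_k$ for each $k\geq 1$, which matches Fourier coefficients since $\log H(x;z) = \sum_{k\geq 1}p_k(x)z^k/k$. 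The hypotheses $\sum|V_k|<\infty$ and $\sum k|V_k|^2<\infty$ guarantee that $\phi$ extends to all Schur-type universal characters and that the series appearing below converge.

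For the unitary case, theorem \ref{thgessel} yields $\int_{U(N)} f(U)\,dU = \phi\big(\sum_{l(\nu)\leq N} s_\nu^2\big)$. Letting $N\to\infty$ removes the length restriction, and the Cauchy identity \eqref{cauchyun} with $y=x$ evaluates the series to $\prod_{i,j}(1-x_ix_j)^{-1}$. Taking the logarithm and expanding via $-\log(1-t) = \sum_{k\geq 1} t^k/k$ gives $\sum_k p_k(x)^2/k$, which under $\phi$ becomes $\sum_k kV_k^2$, establishing \eqref{szego}. The same scheme applies to the three other families: theorem \ref{thgessel} gives $\int_{G(N)} f(U)\,dU = \phi\big(\sum_{l(\nu)\leq N} s_\nu\cdot\mathfrak{s}^{\nu}_{G(N)}\big)$, and the corresponding Cauchy identities \eqref{cauchysymp}--\eqref{cauchyoort} factor the $N\to\infty$ sum into an explicit product. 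The extra numerator factors $\prod_{i<j}(1-x_ix_j)$ or $\prod_{i\leq j}(1-x_ix_j)$ contribute $\sum_k(p_k^2\pm p_{2k})/(2k)$ to the logarithm, while the extra $\prod_j(1-x_j)^{-1}$ in \eqref{cauchyoort} contributes $\sum_k p_k/k$; under $\phi$ these produce the combinations $\tfrac{1}{2}\sum kV_k^2\mp\sum V_{2k}$ and $\tfrac{1}{2}\sum kV_k^2-\sum V_{2k-1}$ appearing in \eqref{szegosymp}--\eqref{szegooort}.

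The principal technical obstacle is that the Cauchy identities for $sp$, $o^{even}$, $o^{odd}$ also contain factors $\prod_{i,j}(1-x_i^{-1}x_j)^{-1}$ which are ill-defined under $\phi$, since only positive-index power sums are specified, and which for typical alphabets such as $x_j = q^{j-1/2}$ produce formal divergences along the diagonal $i=j$. Equivalently, the unconstrained sum $\sum_\nu s_\nu(x)\mathfrak{s}^{\nu}_{G(N)}(x)$ actually diverges, while the truncation $l(\nu)\leq N$ keeps each finite-$N$ integral finite and the sequence converges as $N\to\infty$. I would resolve this either (a) by using the restriction rules \eqref{restspschur}--\eqref{restoortschur} to rewrite each universal symplectic or orthogonal Schur function as a signed sum of skew Schur functions $s_{\nu/\beta}$ with $\beta$ ranging over the finite sets $R, S, T$ of theorem \ref{th.hopflinks}, thereby reducing everything to the unitary Cauchy identity \eqref{cauchyun} where no $x^{-1}$ terms appear, or (b) by invoking Johansson's Wiener--Hopf factorization of the Toeplitz$\pm$Hankel operators underlying \eqref{intSp2N}--\eqref{intO2N+1}, which extracts the leading asymptotic from the decomposition $f = f_+f_-$ with $f_\pm$ supported on non-negative/non-positive Fourier modes. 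Careful bookkeeping of signs is essential, and the $\pm\sum V_{2k}$ and $-\sum V_{2k-1}$ corrections specific to each group arise either from the Hankel contribution in the Wiener--Hopf approach or from the combinatorics of the sets $R, S, T$ in the restriction-rule approach.
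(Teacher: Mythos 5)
The paper does not actually prove this statement: it is recorded as the classical strong Szeg\H{o} theorem together with Johansson's Toeplitz$\pm$Hankel generalization, cited from \cite{Joh}, and the only argument offered is the remark that the form stated here follows from Johansson's integrals under the substitution $\cos\theta_{j}\mapsto x_{j}$ and standard facts about the associated orthogonal polynomials. Your proposal instead tries to rederive the theorem from theorem \ref{thgessel} and the Cauchy identities, a route the paper only mentions as ``equivalent'' and carries out for the theta-function symbol in Section 3. For $U(N)$ your sketch is the standard Gessel argument and is correct at the formal level, but under the stated hypotheses $\sum_{k}|V_{k}|<\infty$, $\sum_{k}k|V_{k}|^{2}<\infty$ the convergence of the specialized Schur series and the interchange of the $N\rightarrow\infty$ limit with the infinite sum (e.g.\ passing from symbols with finitely many nonzero $V_{k}$ to the general case) are precisely the analytic content of Szeg\H{o}'s theorem; you assert these points rather than prove them.

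The substantive gap is in \eqref{szegosymp}--\eqref{szegooort}, which is the part that actually requires an argument. You correctly observe that setting $y=x$ in \eqref{cauchysymp}--\eqref{cauchyoort} is ill-defined, but neither of your two repairs is carried out, and the bookkeeping you defer is exactly where the group dependence is decided. What must be shown is that $\lim_{N}\int_{Sp(2N)}H(x;U)dU=\sum_{\nu'\,\mathrm{even}}s_{\nu}(x)=\prod_{i<j}(1-x_{i}x_{j})^{-1}$ while $\lim_{N}\int_{SO(2N)}H(x;U)dU=\sum_{\nu\,\mathrm{even}}s_{\nu}(x)=\prod_{i\leq j}(1-x_{i}x_{j})^{-1}$ (for instance by expanding $H(x;U)$ in the eigenvalues with \eqref{cauchyun} and using character orthogonality together with \eqref{restschursp}--\eqref{restschuroort}, as in the proof of theorem \ref{t.avg}); these two Littlewood sums are what produce $\mp\sum_{k}V_{2k}$. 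If instead you take the pairing of theorem \ref{thgessel} at face value and evaluate the universal characters through \eqref{restspschur}--\eqref{restoortschur}, using $\sum_{\nu}s_{\nu}(x)s_{\nu/\beta}(x)=s_{\beta}(x)\prod_{i,j}(1-x_{i}x_{j})^{-1}$, the symplectic and even orthogonal answers come out interchanged, i.e.\ the sign of $\sum_{k}V_{2k}$ flips; a one-variable check of \eqref{intSp2N} at $N=1$ (the integral equals $1$, not $(1-x_{1}^{2})^{-1}$) exposes the mismatch, so this is a genuine trap and not routine sign-chasing. Likewise the term $-\sum_{k}V_{2k-1}$ in \eqref{szegooort} comes from the trivial eigenvalue $+1$, which \eqref{fprod} excludes from $f(U)$: relative to the $SO(2N)$ limit the odd orthogonal one carries the extra factor $\prod_{j}(1-x_{j})=\exp(-\sum_{k}V_{k})$, whereas your reading of the factor $\prod_{j}(1-y_{j})^{-1}$ in \eqref{cauchyoort} as contributing $+\sum_{k}p_{k}/k$ has the wrong sign and cannot yield $-\sum_{k}V_{2k-1}$. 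Finally, your option (b) is not an independent derivation: Wiener--Hopf factorization of the Toeplitz$\pm$Hankel operators is Johansson's own proof, i.e.\ the citation the paper already makes.
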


We have stated the theorem for slightly different integrals that those
appearing in \cite{Joh}; the result follows after using the mapping $\cos\theta_{j}\mapsto x_{j}$ in the integrals\footnote{The relation is more apparent working directly with the ``trigonometric" expression of Haar measure on $G(N)$, see for instance equations (3.3)-(3.5) in \cite{DCMOC}.} \eqref{intGN}. This allows to
express the integrals in terms of the orthogonal polynomials with respect to
a modified weight on $[-1,1]$, which relation with the orthogonal
polynomials with respect to the original weight is well known \cite{Szego}
(see also \cite{BaikRains}).

The asymptotic behaviour of Toeplitz determinants generated by functions
that do not satisfy the hypotheses in Szeg\H o's theorem has attracted a lot
of interest over the years \cite{DIKrev}. Such functions are typically
studied in terms of their factorization as a sufficiently smooth function
(in the sense of Szeg\H o's theorem) and a finite number of so-called
Fisher-Hartwig singularities \cite{BS}
\begin{equation}
    \varphi_{z,\alpha,\beta}(ze^{i\theta}) = |1-e^{i\theta}|^{2\alpha}e^{i\beta(\theta-\pi)} = (1-e^{i\theta})^{\alpha+\beta}(1-e^{-i\theta})^{\alpha-\beta}, \label{pureFH}
\end{equation}
where $z$ is a point on the unit circle, $\text{Re}(\alpha_{r})>-1/2$ and $\beta_{r}\in\mathbb{C}$. This function may have a zero, a pole, or an oscillatory singularity at $z$, depending on the value of $\alpha$, and a jump at the same point if $\beta$ is not an integer.

For our purposes, we only need to consider Toeplitz determinants generated by functions with a single Fisher-Hartwig singularity. This fact, together with the definition \eqref{fprod} allow us to consider only particular examples of the very general results known for this kind of asymptotics. What follows is a particular case of a theorem of Widom \cite{Widom} for functions of the form \eqref{fprod} with a single singularity, adapted for this setting. See \cite{Ehrhardt,DIK} for more general results on the topic.

\begin{theorem*}[Widom]
Let $f$ be given by 
\begin{equation}
f(e^{i\theta}) = e^{V(e^{i\theta})}(1-e^{i(\theta-\theta_{0})})^{\alpha},
\label{f+FH}
\end{equation}
where $\text{Re}(\alpha)>-1/2$, $0<\theta_{0}<2\pi$, and the
potential $V(e^{i\theta})=\sum_{k=1}^{\infty}V_{k}e^{ik\theta}$ satisfies $%
\sum_{k}|V_{k}|<\infty$ and $\sum_{k}k|V_{k}|^{2}<\infty$, as in Szeg\H o's
theorem. Define $f(U)$ by \eqref{fprod} for any $U\in U(N)$. Then, as $%
N\rightarrow\infty$, we have 
\begin{equation}
\int_{U(N)}f(U)dU \sim \exp{\left(\sum_{k=1}^{\infty}kV_{k}^{2}\right)}
N^{\alpha^{2}} e^{-2\alpha V(e^{i\theta_{0}})}\frac{G^{2}(\alpha+1)}{G(2\alpha+1)}.
\label{FHun}
\end{equation}
\end{theorem*}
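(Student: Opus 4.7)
The plan is to factor the symbol of the Toeplitz determinant as a smooth part times a pure Fisher--Hartwig factor localized at $e^{i\theta_{0}}$, apply the appropriate asymptotic tool to each piece, and merge them via the multiplicative structure of Toeplitz determinants. The smooth factor $e^{V}$ is governed by the strong Szeg\H o theorem \eqref{szego} already quoted in this appendix, which contributes the prefactor $\exp(\sum_{k\geq 1}kV_{k}^{2})$.

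For the pure Fisher--Hartwig piece, a rotation reducing to $\theta_{0}=0$ and a Selberg/Morris-type evaluation give the Toeplitz determinant as the explicit product
\begin{equation*}
    \int_{U(N)}|\det(I-U)|^{2\alpha}dU = \prod_{k=0}^{N-1}\frac{k!\,\Gamma(k+2\alpha+1)}{\Gamma(k+\alpha+1)^{2}}.
\end{equation*}
Using the product identity $\prod_{k=0}^{N-1}\Gamma(z+k)=G(z+N)/G(z)$ for Barnes' $G$-function together with its classical asymptotic expansion, the right-hand side becomes $N^{\alpha^{2}}G^{2}(\alpha+1)/G(2\alpha+1)$ up to a multiplicative $1+o(1)$ correction. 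Only the leading $N^{2}\log N$ and $N^{2}$ terms in the Barnes $G$-asymptotics need to be tracked; the cancellation of the $N^{2}\log N$ contributions is what leaves precisely the factor $N^{\alpha^{2}}$.

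To merge the two estimates I would use the Wiener--Hopf factorization $e^{V}=b_{+}b_{-}$, with $b_{\pm}$ analytic inside/outside the unit disk and $\log b_{\pm}$ supported on Fourier modes of the corresponding sign, and then invoke a Borodin--Okounkov type identity to write the ratio of the full Toeplitz determinant to the product of its two ``pure'' pieces as the Fredholm determinant of an explicit trace-class operator. Passing to the $N\to\infty$ limit collapses this ratio to the evaluation of $b_{+}b_{-}$ at the singular point, producing the correction $e^{-2\alpha V(e^{i\theta_{0}})}$. The main obstacle is precisely this last step: the delicate interplay between the smooth and singular factors at $e^{i\theta_{0}}$ requires either Widom's original orthogonal-polynomial approach on the unit circle, or a Riemann--Hilbert steepest-descent analysis in which the local parametrix near $e^{i\theta_{0}}$ is built from confluent hypergeometric functions and simultaneously produces the Barnes $G$-function constant together with the $e^{-2\alpha V(e^{i\theta_{0}})}$ correction. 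The hypothesis $\text{Re}(\alpha)>-1/2$ is used throughout to ensure integrability of $f$ and well-definedness of the determinant at every finite $N$.
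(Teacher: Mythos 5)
You should first note that the paper does not prove this statement at all: it is quoted as ``a particular case of a theorem of Widom \cite{Widom}\ldots adapted for this setting'', with \cite{Ehrhardt,DIK} cited for the general theory, so there is no internal proof to compare against and a blind attempt must supply the whole argument. Your outline assembles the right ingredients for the intended case: the strong Szeg\H o factor \eqref{szego} for $e^{V}$, and for the pure singularity the exact evaluation $\int_{U(N)}|\det(I-U)|^{2\alpha}dU=\prod_{k=0}^{N-1}k!\,\Gamma(k+2\alpha+1)/\Gamma(k+\alpha+1)^{2}$, whose Barnes-$G$ asymptotics indeed give $N^{\alpha^{2}}G^{2}(\alpha+1)/G(2\alpha+1)$ as in \cite{KeatingSnaith2} (though to extract the constant you must track terms down to order $\log N$ and $O(1)$, not just the $N^{2}\log N$ and $N^{2}$ terms). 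The genuine gap is the merging step. The Borodin--Okounkov identity requires a symbol with zero winding number and Fourier data regular enough that the associated Hankel operators are trace class; the Fisher--Hartwig factor violates this, and the assertion that the resulting Fredholm determinant ``collapses to the evaluation of $b_{+}b_{-}$ at the singular point'', producing $e^{-2\alpha V(e^{i\theta_{0}})}$, is exactly the localization/separation statement whose proof \emph{is} Widom's theorem (cf. \cite{Widom,BS,Ehrhardt,DIK}). You then explicitly defer this step to Widom's orthogonal-polynomial method or to a Riemann--Hilbert parametrix analysis, so as written the proposal is a correct plan whose decisive step is outsourced to the very result being proved.

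A second concrete problem is the reduction ``by rotation to $\theta_{0}=0$''. Under the paper's convention \eqref{fprod} the Toeplitz symbol is $f(e^{i\theta})f(e^{-i\theta})$, which for $\theta_{0}\neq\pi$ has \emph{two} Fisher--Hartwig points, at $e^{\pm i\theta_{0}}$, each with parameters $(\alpha/2,\pm\alpha/2)$ in the notation of \eqref{pureFH}; rotating $\theta$ does not commute with this symmetrization, and the claimed $N^{\alpha^{2}}$ growth then fails. For instance, with $V=0$ and $\alpha=1$ the symbol is $(1-e^{i\theta}e^{-i\theta_{0}})(1-e^{-i\theta}e^{-i\theta_{0}})$, a tridiagonal symbol whose Toeplitz determinant equals $(1-e^{-2i(N+1)\theta_{0}})/(1-e^{-2i\theta_{0}})$, which stays bounded as $N\to\infty$ when $\theta_{0}\neq\pi$, instead of growing like $N$. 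So your reduction (and indeed the statement as literally quoted, with arbitrary $0<\theta_{0}<2\pi$) is only valid when the two conjugate singularities coalesce, i.e. $\theta_{0}=\pi$, which is the single-singularity case the paper actually uses later (e.g. in \eqref{largeNFHun}). A completed proof should either restrict to $\theta_{0}=\pi$, matching your Selberg/Morris evaluation at $z=-1$, or treat the honest two-singularity symbol with the general asymptotics of \cite{DIK}, which give a different answer.
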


The asymptotic behaviour of Toeplitz$\pm$Hankel determinants generated by
functions with Fisher-Hartwig singularities has also been studied. As above, we state only a particular case of a theorem of Deift, Its and Krasovsky \cite{DIK} for functions with a single singularity at the point $z=-1$, which will be enough for our purposes. See \cite{DIK} for general results on Fisher-Hartwig asymptotics of Toeplitz$\pm$Hankel determinants.

\begin{theorem*}[Deift, Its, Krasovsky]
Let $f$ be given by \eqref{f+FH}, with $\theta_{0}=\pi$, and define $f(U)$ by \eqref{fprod} for any $U\in Sp(2N),SO(2N),SO(2N+1)$. Then, as $N\rightarrow\infty$, we have 
\begin{equation}
\int_{G(N)}f(U)dU \sim \left(\int_{G(N)}e^{V(U)}dU\right) e^{-\alpha
V(-1)}N^{\alpha^{2}/2+\alpha t}2^{-\alpha^{2}/2-\alpha(s+t-1/2)}\frac{%
\pi^{\alpha/2}G(t+1)}{G(\alpha+t+1)},  \label{FHGN}
\end{equation}
where $s$ and $t$ depend on the group $G(N)$ and are given by 
\begin{align*}
Sp(2N): s=t=\frac{1}{2},\qquad SO(2N): s=t=-\frac{1}{2},\qquad SO(2N+1): s=%
\frac{1}{2},t=-\frac{1}{2}.
\end{align*}
\end{theorem*}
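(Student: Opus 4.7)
The plan is to reduce the statement to an asymptotic problem for a Hankel determinant on $[-1,1]$ with a Jacobi-type weight carrying a shifted hard-edge exponent, and then combine the Deift--Zhou nonlinear steepest descent analysis with a differential identity in the Fisher--Hartwig parameter $\alpha$ anchored at the Szeg\H{o}--Johansson limit.

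First I would use Weyl's integral formula together with the substitution $x=\cos\theta$ to express $\int_{G(N)}f(U)dU$, up to explicit combinatorial prefactors, as the Hankel determinant of moments of the weight
\[
w^{(\alpha)}_{s,t}(x) = (1-x)^{s}(1+x)^{t+\alpha}e^{\widetilde V(x)}\,,\qquad x\in[-1,1],
\]
where $(s,t)$ are exactly the Jacobi exponents listed in the theorem, $\widetilde V$ is the symmetrization of $V$ in the variable $x=\cos\theta$, and the constant $2^{\alpha}$ from $|1+e^{i\theta}|^{2\alpha}=2^{\alpha}(1+x)^{\alpha}$ is absorbed into the prefactor. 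Under this correspondence the Fisher--Hartwig singularity of $f$ at $z=-1$ simply shifts the Jacobi exponent at the left hard edge from $t$ to $t+\alpha$.

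Next I would introduce the interpolating family $w_{s,t}^{(\tau\alpha)}$, $\tau\in[0,1]$, and denote the corresponding Hankel determinant by $D_{N}(\tau)$; at $\tau=0$ its asymptotics are given by the Szeg\H{o}--Johansson theorem \eqref{szegosymp}-\eqref{szegooort}, while $D_{N}(1)$ is the quantity to be computed. Differentiating yields the standard identity
\[
\frac{d}{d\tau}\log D_{N}(\tau) = \alpha\int_{-1}^{1} K_{N}(x,x;\tau)\log(1+x)\,dx,
\]
where $K_{N}$ is the Christoffel--Darboux kernel of the orthonormal polynomials for $w_{s,t}^{(\tau\alpha)}$. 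The asymptotics of $K_{N}$ near $x=-1$ come from the Deift--Zhou analysis of the $2\times 2$ Riemann--Hilbert problem on $[-1,1]$: opening lenses, a global parametrix built from the Chebyshev equilibrium measure, and local Bessel parametrices at the hard edges $\pm 1$ with respective orders $s$ and $t+\tau\alpha$.

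The main obstacle is the parametrix at $x=-1$, where the Jacobi and Fisher--Hartwig exponents merge: one must verify that the Bessel model problem with shifted order $t+\tau\alpha$ still matches the outer parametrix, and that its matching produces the precise powers $N^{\alpha^{2}/2+\alpha t}$ and $2^{-\alpha^{2}/2-\alpha(s+t-1/2)}$ after $\tau$-integration. Once this is established, plugging the kernel asymptotics into the differential identity and integrating from $0$ to $1$ reduces the remaining $\tau$-integral to a logarithmic derivative of Barnes' $G$-function, yielding the ratio $G(t+1)/G(\alpha+t+1)$, the factor $\pi^{\alpha/2}$ from the normalization of the Bessel kernel, and the $e^{-\alpha V(-1)}$ contribution coming from the smooth potential evaluated at the singular point. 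Collecting everything with the Szeg\H{o}--Johansson prefactor $\int_{G(N)}e^{V(U)}dU$ recovers \eqref{FHGN}.
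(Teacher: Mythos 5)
The paper offers no proof of this statement: it is quoted, explicitly as a special case, from Deift, Its and Krasovsky \cite{DIK}, so there is no internal argument to compare yours against. That said, your outline does identify the correct framework. The reduction via $x=\cos\theta$ to a Hankel determinant on $[-1,1]$ with Jacobi exponents $(s,t)=(\tfrac12,\tfrac12),(-\tfrac12,-\tfrac12),(\tfrac12,-\tfrac12)$ for $Sp(2N)$, $SO(2N)$, $SO(2N+1)$ is right, the observation that $|1+e^{i\theta}|^{2\alpha}=2^{\alpha}(1+x)^{\alpha}$ merely shifts the left hard-edge exponent from $t$ to $t+\alpha$ is right, and the overall scheme (deform in the Fisher--Hartwig parameter, use the Christoffel--Darboux differential identity, control the kernel by Deift--Zhou steepest descent with Bessel parametrices at $\pm1$) is essentially the strategy of \cite{DIK}.

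The gap is that the entire quantitative content of \eqref{FHGN} --- the exponent $\alpha^{2}/2+\alpha t$ of $N$, the power $2^{-\alpha^{2}/2-\alpha(s+t-1/2)}$, and the constant $\pi^{\alpha/2}G(t+1)/G(\alpha+t+1)$ --- lives precisely in the step you label ``the main obstacle'' and then assert rather than perform. The local parametrix at $x=-1$, where the fixed exponent $t$ and the deformation parameter $\tau\alpha$ sit at the same hard edge, must be constructed and matched to the outer parametrix with explicit error control uniform in $\tau\in[0,1]$; the resulting large-$N$ expansion of $K_{N}(x,x;\tau)$ near $x=-1$ must then be integrated against $\log(1+x)$ and in $\tau$, and it is only this computation that produces the Barnes $G$-ratio and the specific powers of $N$ and $2$. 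One also needs to justify interchanging the $N\to\infty$ asymptotics with the $\tau$-integration and to extract the factor $e^{-\alpha V(-1)}$ from the interaction between the smooth part of the symbol and the singularity. As written, your argument confirms that the answer should have the general shape of \eqref{FHGN} but does not derive any of its constants; since the statement is a citation, the honest options are either to carry out the Riemann--Hilbert analysis in full (essentially reproving the relevant case of \cite{DIK}) or to simply invoke that reference, as the paper does.
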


Recall that the factor $\lim_{N}\int_{G(N)}e^{V(U)}dU$ in \eqref{FHGN} can be computed by means of the Szeg\H o-Johansson theorem above.

\section*{Appendix C: $S$ and $T$ matrices}

The $S$ and $T$ matrices are central in the study of modular tensor categories, which
has its origins in the study of rational conformal field
theories \cite{Moore:1988uz}, and also underlies a topological quantum field theory in 3-dimensions. 

The modular group $SL(2,\mathbb{Z})$ is the most basic example of a discrete nonabelian group. Two particular
elements in $SL(2,\mathbb{Z})$ are $S=
\begin{pmatrix}
0 & -1 \\ 
1 & 0 \end{pmatrix}$ and $T=\begin{pmatrix}
1 & 1 \\ 
0 & 1\end{pmatrix}$. It can then be proven that the matrices $S$ and $T$ generate $SL(2,\mathbb{Z})$.

These modular $T$ and $S$ matrices, are generated, respectively, by a Dehn  twist and a $90$º rotation on the torus. Recall that a Dehn twist essentially consists in cutting 
up a torus along one axis, twisting the edge by $360$º and glueing the two edges back.

To have some idea of the relationship with topological QFT one can recall that there are a finite set of objects associated with a two dimensional surface and 
that the topological nature of the association means that the mapping class group of the surface acts on these objects. A
known example originates in the $G/G$ WZW theory on $T^{2}$ for $G=SU(N)$.
In this case, the objects are the conformal blocks of the theory, which are
the characters of $\widehat{SU(N)}_{k}$, the affine Lie algebra of $SU(N)$
at level $k$. The action of the modular group on the characters of $\widehat{%
SU(N)}_{k}$ is given by 
\begin{equation}
S_{\lambda \,\mu }=\sum_{w\in W}(-1)^{|w|}\,\mbox{exp}\Big(\frac{i\pi }{k+N}%
(\lambda +\rho ,w(\mu +\rho ))\Big)  \label{xx} \\
=s_{\lambda }(q^{\frac{1}{2}},q^{\frac{3}{2}},\mathellipsis,q^{N-\frac{1}{2}%
})\,s_{\mu }(q^{\frac{1}{2}-\lambda _{1}},q^{\frac{3}{2}-\lambda _{2}},%
\mathellipsis,q^{N-\frac{1}{2}-\lambda _{N}}),
\end{equation}%
where $s_{\lambda }(x_{1},\mathellipsis,x_{N})$ is a Schur polynomial.
Because the WZW theory on $T^{2}$ is related to the canonical quantization
of the Chern-Simons theory on $T^{2}\times \mathbb{R}$ the space of
conformal blocks of the WZW theory on $T^{2}$ is also the Hilbert space of
the Chern-Simons theory, where the normalized S-matrix $S_{\lambda \,\mu
}/S_{\emptyset \,\emptyset }$ is the Hopf link invariant. The matrix model results 
are in the so-called Seifert framing instead of the canonical framing of the three manifold. 
Starting from $S^{2}\times S^{1}$ one generates $S^3$, by action of $T^mST^n$. While the 
canonical framing for $S^3$ corresponds to $m=n=0$, one obtains a $U(1)$-invariant Seifert framing for $n+m=2$ \cite{Blau:2006gh}.

The $T$ and $S$ matrices encode the information of quasi-particles non-Abelian statistics and their fusion and are central
in the description of topological order \cite{top}. Remarkably, such different braiding statistics, described by the matrices, can also be extracted in many models using wavefunction overlaps \cite{top}. In this regard, for example, the minor description of the modular matrix elements and its associated integral representation of random matrix type studied here is conductive to interpretation in term of quantum amplitudes, of the Loschmidt echo type, of certain $1d$ spin chain \cite{Perez-Garcia:2014aba}.

\end{document}